\documentclass[11pt]{amsart}
\usepackage[utf8]{inputenc}
\usepackage[titletoc]{appendix}

\usepackage[backend=biber, style=numeric, date=year, url=false, doi=false, isbn=false, eprint=true, firstinits=true, maxcitenames=5, maxbibnames=5]{biblatex}
\AtEveryBibitem{%
  \ifentrytype{online}{%
  }{%
    \clearfield{eprint}%
    \clearfield{urldate}%
  }%
}
\addbibresource{AKSZPSMbibliography.bib}
\usepackage{blindtext}

\usepackage{scalerel,stackengine}
\stackMath
\newcommand\reallywidehat[1]{%
\savestack{\tmpbox}{\stretchto{%
  \scaleto{%
    \scalerel*[\widthof{\ensuremath{#1}}]{\kern.1pt\mathchar"0362\kern.1pt}%
    {\rule{0ex}{\textheight}}
  }{\textheight}%
}{2.4ex}}%
\stackon[-6.9pt]{#1}{\tmpbox}%
}
\usepackage{amssymb}
\usepackage{amsfonts}
\usepackage{amsthm}
\usepackage{tikz}
\usepackage{caption}
\usepackage{graphics}
\usepackage{xcolor}
\usepackage{shadethm}
\usepackage{subfigure}
\usepackage{epigraph}
\usepackage{placeins}
\usetikzlibrary{snakes}
\usepackage[new]{old-arrows}
\usepackage{pxfonts}


 
\usepackage{enumerate}
\usepackage{verbatim}
\usetikzlibrary{trees}
\usetikzlibrary{decorations.markings}
\usetikzlibrary{arrows}
\usetikzlibrary{cd}
\usepackage{mathrsfs}
\usepackage{xparse}
\usetikzlibrary{positioning,arrows,patterns}
\usetikzlibrary{decorations.markings}
\usetikzlibrary{calc}
\tikzset{
  photon/.style={decorate, decoration={snake}, draw=black},
  fermion/.style={draw=black, postaction={decorate},decoration={markings,mark=at position .55 with {\arrow{>}}}},
  fermion2/.style={dashed, dash phase=0.1pt, draw=black, postaction={decorate},decoration={markings,mark=at position .55 with {\arrow{>}}}},
  vertex/.style={draw,shape=circle,fill=black,minimum size=5pt,inner sep=0pt},
particle/.style={thick,draw=black},
particle2/.style={thick,draw=blue},
avector/.style={thick,draw=black, postaction={decorate},
    decoration={markings,mark=at position 1 with {\arrow[black]{triangle 45}}}},
gluon/.style={decorate, draw=black,
    decoration={coil,aspect=0}}
 }
\NewDocumentCommand\semiloop{O{black}mmmO{}O{above}}
{%
\draw[#1] let \p1 = ($(#3)-(#2)$) in (#3) arc (#4:({#4+180}):({0.5*veclen(\x1,\y1)})node[midway, #6] {#5};)
}




\usepackage{url}
\usepackage{hyperref}
\hypersetup{colorlinks=true, citecolor=purple, linktocpage, linkcolor=blue, urlcolor=cyan} 

%


\theoremstyle{plain}
\newtheorem{thm}{Theorem}[section]
\newtheorem{lem}[thm]{Lemma}
\newtheorem{prop}[thm]{Proposition}
\newtheorem{cor}[thm]{Corollary}

\theoremstyle{definition}
\newtheorem{defn}[thm]{Definition}

\newtheorem*{thm*}{Theorem}
\newtheorem*{lem*}{Lemma}
\newtheorem*{prop*}{Proposition}
\newtheorem*{cor*}{Corollary}
\newtheorem*{exe*}{Exercise}
\newtheorem*{defn*}{Definition}
\newtheorem{rem}[thm]{Remark}

\theoremstyle{remark}

\newcommand{\R}{\mathbb{R}}





\newcommand{\dd}{{\mathrm{d}}}

\newcommand{\id}{\mathrm{id}}

\DeclareMathOperator{\tr}{Tr}

\DeclareMathOperator{\Div}{\textnormal{div}}

\DeclareMathOperator{\gh}{gh}

\DeclareMathOperator{\ad}{ad}

\DeclareMathOperator{\Ad}{Ad}

\newcommand{\Der}{{\mathrm{Der}}}

\DeclareMathOperator{\End}{End}
\DeclareMathOperator{\Hom}{Hom}

\newcommand{\T}{\textsf{T}}

%
%
%





\newcommand{\de}{\partial}

\newcommand{\calA}{\mathcal{A}}
\newcommand{\calB}{\mathcal{B}}
\newcommand{\calH}{\mathcal{H}}
\newcommand{\calS}{\mathcal{S}}

\newcommand{\calO}{\mathcal{O}}
\newcommand{\calL}{\mathcal{L}}
\newcommand{\calM}{\mathcal{M}}

\newcommand{\calW}{\mathcal{W}}

\newcommand{\calN}{\mathcal{N}}

\newcommand{\calE}{\mathcal{E}}
\newcommand{\calP}{\mathcal{P}}
\newcommand{\calF}{\mathcal{F}}



\def\gpd{\,\lower1pt\hbox{$\longrightarrow$}\hskip-.24in\raise2pt
               \hbox{$\longrightarrow$}\,}

\let\Tilde=\widetilde

\let\Hat=\widehat

\DeclareMathOperator{\Map}{Map}

\newcommand{\I}{\mathrm{i}}

\newcommand{\ee}{\textnormal{e}}

\newcommand{\calV}{\mathcal{V}}

\newcommand{\Sym}{\textnormal{Sym}}

%

\makeatletter
\newcommand{\upint}{\DOTSI\upintop\ilimits@}
\newcommand{\upoint}{\DOTSI\upointop\ilimits@}
\makeatother

\usepackage{todonotes}

\makeatletter
\providecommand\@dotsep{5}
\renewcommand{\listoftodos}[1][\@todonotes@todolistname]{%
  \@starttoc{tdo}{#1}}
\makeatother

\setlength{\parindent}{0pt}

\makeatletter

\pgfdeclareshape{genus}{
 \anchor{center}{\pgfpointorigin}
\backgroundpath{
    \begingroup
    \tikz@mode
    \iftikz@mode@fill

         \pgfpathmoveto{\pgfqpoint{-0.78cm}{-.17cm}}
     \pgfpathcurveto %
            {\pgfpoint{-0.35cm}{-.44cm}}
        {\pgfpoint{0.35cm}{-.44cm}}
        {\pgfpoint{.78cm}{-0.17cm}} 
     \pgfpathmoveto{\pgfqpoint{-0.78cm}{-0.17cm}}
     \pgfpathcurveto %
            {\pgfpoint{-0.25cm}{.25cm}}
        {\pgfpoint{.25cm}{.25cm}}
        {\pgfpoint{0.78cm}{-0.17cm}}
        \pgfusepath{fill}
        \fi

        \iftikz@mode@draw
     \pgfpathmoveto{\pgfqpoint{-1cm}{0cm}}
     \pgfpathcurveto %
            {\pgfpoint{-0.5cm}{-.5cm}}
        {\pgfpoint{0.5cm}{-.5cm}}
        {\pgfpoint{1cm}{0cm}}

     \pgfpathmoveto{\pgfqpoint{-0.75cm}{-0.15cm}}
     \pgfpathcurveto %
            {\pgfpoint{-0.25cm}{.25cm}}
        {\pgfpoint{.25cm}{.25cm}}
        {\pgfpoint{0.75cm}{-0.15cm}}
              \pgfusepath{stroke}
        \fi
        \endgroup
    }
    }

    \makeatother

\title[Formal Global AKSZ Gauge Observables and Wilson Surfaces]{Formal Global AKSZ Gauge Observables and Generalized Wilson Surfaces}
\author[N. Moshayedi]{Nima Moshayedi}
\address{Institut f\"ur Mathematik\\ Universit\"at Z\"urich\\ 
Winterthurerstrasse 190
CH-8057 Z\"urich}
\email[N.~Moshayedi]{nima.moshayedi@math.uzh.ch}
\thanks{This research was supported by the NCCR SwissMAP, funded by the Swiss National Science
Foundation, and by the SNF grant No. 200020\_192080.}

\begin{document}

\maketitle

\begin{abstract}
    We consider a construction of observables by using methods of supersymmetric field theories.
    In particular, we give an extension of AKSZ-type observables constructed in \cite{Mn3} using the Batalin--Vilkovisky structure of AKSZ theories to a formal global version with methods of formal geometry. We will consider the case where the AKSZ theory is ``split'' which will give an explicit construction for formal vector fields on base and fiber within the formal global action.
    Moreover, we consider the example of formal global generalized Wilson surface observables whose expectation values are invariants of higher-dimensional knots by using $BF$ field theory. These constructions give rise to interesting global gauge conditions such as the differential Quantum Master Equation and further extensions.
\end{abstract}

\tableofcontents

\section{Introduction}
Observables play a fundamental role in theoretical and mathematical physics. They are used in several constructions, e.g. deformation quantization and factorization algebras. 
In \cite{Mn3}, a method for constructing observables in the setting of AKSZ theories was introduced, where several examples, including Wilson loop type observables for different theories, have been addressed.  

These constructions were given using the approach of supersymmetric field theory and methods of functional integrals. In particular, the focus lies within a special formalism dealing with gauge theories which is called the Batalin--Vilkovisky (BV) formalism. This formalism was developed by Batalin and Vilkovisky in a series of papers \cite{BV1,BV2,BV3} during the 1970's and 1980's in order to deal with the functional integral quantization approach where the Lagrangian is invariant under certain symmetries and the integral is ill-defined. They have shown (later also formulated in a more mathematical language by Schwarz) that these issues can be resolved by replacing the ill-defined integral by a well-defined (after some regularization is also introduced) one without changing the final value. The mathematical structures of this powerful formalism have been studied since then by many different people. 

AKSZ theories \cite{AKSZ} (named after Alexandrov, Kontsevich, Schwarz and Zaboronsky) are a particular type of field theories where the space of fields is given by a mapping space between manifolds. It can be shown that these theories, regarded in a special setting, will give rise to field theories as formulated in the BV setting. Many interesting theories are in fact of AKSZ-type, e.g. Chern--Simons theory \cite{Witten1989,AS,AS2,Cattaneo2008,CMW}, the Poisson sigma model \cite{I,SS1,CF4}, Rozansky--Witten theory \cite{RozanskyWitten1997}, the Courant sigma model \cite{Roytenberg2005}, $BF$ theory \cite{Mn,CMR1}, Witten's $A$- and $B$-twisted sigma models \cite{Witten1988a} and $2D$ Yang--Mills theory \cite{IM}.

The globalization idea originates from a field theoretic approach to globalization of Kontsevich's star product \cite{K} in deformation quantization. The associated field theory is given by the Poisson sigma model. 
The Poisson sigma model is a 2-dimensonal bosonic string theory with target a Poisson manifold which was first considered by Ikeda \cite{I} and Schaller--Strobl \cite{SS1} by the attempt of studying 2D gravity theories and combine them to a common form with Yang--Mills theories.
Using the Poisson sigma model on the disk, Cattaneo and Felder have proven that Kontsevich's star product is exactly given by the perturbative expansion of its functional integral quantization \cite{CF1}. Regarding the fact that the Poisson sigma model is a gauge theory, it is interesting to note that it is a fundamental non-trivial theory where the BRST gauge formalism \cite{BRS1,BRS2,BRS3,Tyutin1976} does not work if the Poisson structure is not linear. In fact, to treat the Poisson sigma model and its quantization, one has to use the BV formalism.
However, the field theoretic construction of Kontsevich's star product was only considered locally since Kontsevich's formula was only given for the local picture on the upper half-plane. Later on, using techniques of formal geometry, developed by e.g. Gelfand--Fuks \cite{GelfandFuks1969,GelfandFuks1970}, Gelfand--Kazhdan \cite{GK} or Bott \cite{B}, it was possible to construct a globalization, similar to the approach of Fedosov for symplectic manifolds which only covers the case of constant (symplectic) Poisson structures \cite{Fedosov1994}. 

In \cite{CF3,BCM} this approach was first extended to the field theoretic BV construction of the Poisson sigma model for closed source manifolds. 
In recent work \cite{CMW3} this construction was extended to the case of source manifolds with boundary. There one has to extend the BV formalism to the \emph{BV-BFV formalism} which couples the boundary BFV theory to the bulk BV theory such that everything is consistent in the cohomological formalism. Here BFV stands for Batalin--Fradkin--Vilkovisky which formulated a Hamiltonian version of the BV construction in \cite{BF1,FV1}. The bulk-boundary coupling (the BV-BFV formalism) was first introduced classically in \cite{CMR1,CMR3} and extended to the quantum version in \cite{CMR2}.
The globalization construction for the Poisson sigma model on manifolds with boundary was more generally extended in \cite{CMW4} to a special class of AKSZ theories which are called ``split'' where the case of the Poisson sigma model is an example.

The aim of this paper is to extend the constructions of \cite{Mn3} to a formal global construction. In fact we will construct formal global observables by using the notion of a Hamiltonian $Q$-bundle \cite{KotovStrobl2015} together with notions of formal geometry, and we will study the formal global extension of Wilson loop type observables for the Poisson sigma model. 

Additionally, we discuss the formal global extension of Wilson surface observables which have been studied in \cite{CattRoss2005} by using the AKSZ formulation of $BF$ theories. We will show that these constructions lead to interesting gauge conditions such as the \emph{differential Quantum Master Equation} (and further extensions).

These constructions are expected to extend to manifolds with boundary by using the BV-BFV formalism as the globalization constructions have been studied for nonlinear split AKSZ theories on manifolds with boundary \cite{CMW4}.  
\\

{\textbf{Acknowledgements:}} 
I would like to thank Alberto Cattaneo for useful comments and remarks on a first draft of these notes.

\section{The Batalin--Vilkovisky (BV) formalism}
In this section we will recall some aspects of the Batalin--Vilkovisky formalism as in \cite{CMR2,Mnev2019}. An introductory reference for learning about the formalism is \cite{CattMosh1}, which also covers the most important concepts of supergeometry and the case of manifolds with boundary (BV-BFV). 

\subsection{Classical BV picture} Let us start with the classical setting of the BV formalism.

\begin{defn}[BV manifold]
\label{defn:BV_manifold}
A \emph{BV manifold} is a triple $$(\calF,\calS,\omega)$$ such that $\calF$ is a $\mathbb{Z}$-graded supermanifold\footnote{Typically, this is an infinite-dimensional manifold. However, there are certain cases where this is a finite-dimensional manifold, e.g. if we consider the moduli of flat connections on a compact, oriented 2-manifold with holonomies on the boundary according to Atiyah and Bott \cite{AtiyahBott1983} which is of importance regarding $BF$ theory.}, $\calS$ is an even function on $\calF$ of degree 0 and $\omega$ is an odd symplectic form on $\calF$ of degree $-1$. Moreover, we want that $\calS$ satisfies the Classical Master Equation (CME) 
\begin{equation}
\boxed{
    \{\calS,\calS\}_\omega=0,}
\end{equation}
where $\{\enspace,\enspace\}_{\omega}$ denotes the odd Poisson bracket induced by the odd symplectic form $\omega$. This odd Poisson bracket is also called \emph{BV bracket} and, according to Batalin and Vilkovisky, is often denoted by round brackets $(\enspace,\enspace)$. We will call $\calF$ the \emph{BV space of fields}\footnote{Usually, the BV space of fields is given by the $(-1)$-shifted cotangent bundle of the BRST space of fields, i.e. $\calF_{\mathrm{BV}}=T^*[-1]\calF_\mathrm{BRST}$}, $\calS$ the \emph{BV action} (sometimes also called the \emph{master action}) and $\omega$ the \emph{BV symplectic form}.
\end{defn}
\begin{rem}
In physics, the $\mathbb{Z}$-grading is called the \emph{ghost number}. We will denote the ghost number by $\mathrm{gh}$ and the form degree by $\deg$.
\end{rem}
\begin{rem}
The data of a BV manifold induces a \emph{symplectic cohomological} vector field $Q$ of degree $+1$ which is given by the Hamiltonian vector field of $\calS$, i.e. 
\begin{equation}
\label{eq:CME}
    \iota_Q\omega=\delta\calS,
\end{equation}
wher $\delta$ denotes the de Rham differential on $\calF$. The cohomological property means that $[Q,Q]=0$ and the symplectic property means $L_Q\omega=0$, where $L$ denotes the Lie derivative. Moreover, note that by definition $$Q=\{\calS,\enspace\}_\omega.$$
\end{rem}
\begin{defn}[Exact BV manifold]
\label{defn:exact_BV_manifold}
A BV manifold is called \emph{exact} if $\omega=\delta\alpha$ for some primitive 1-form $\alpha$.
\end{defn}
In what will follow, we will mostly consider exact BV manifolds. According to the use of sigma models we want to consider space-time manifolds as the source manifolds for our theory. Moreover, in this paper we will restrict ourself to topological theories. 
\begin{defn}[BV theory]
\label{defn:BV_theory}
A \emph{BV theory} is an assignment of a manifold $\Sigma$ to a BV manifold 
\begin{equation}
    \Sigma\mapsto(\calF_\Sigma,\calS_\Sigma,\omega_\Sigma,Q_\Sigma).
\end{equation}
\end{defn}

\subsection{Quantum BV picture}
We continue with the quantum setting of the BV formalism.

\begin{defn}[Quantum BV manifold]
\label{defn:quantum_BV_manifold}
A \emph{quantum BV manifold} is a quadruple $(\calF,\omega,\mu,\calS)$ such that $\calF$ is a $\mathbb{Z}$-graded supermanifold, $\omega$ a symplectic form on $\calF$ of degree $-1$, $\mu$ a volume element\footnote{We want the space of fields $\calF$ to be endowed with a natural measure.} of $\calF$ which is compatible with $\omega$ in the sense that the associated BV Laplacian 
\begin{equation}
    \Delta\colon f\mapsto \frac{1}{2}\Div_{\mu}\{f,\enspace\}_\omega
\end{equation}
satisfies 
\begin{equation}
    \Delta^2=0,
\end{equation}
and $\calS$ is a degree 0 function on $\calF$ such that it satisfies the QME \eqref{eq:QME}.
\end{defn}

\begin{rem}
The BV Laplacian satisfies a generalized BV Leibniz rule. For two functions $f,g$ on $\calF$ we have
\[
\Delta(fg)=\Delta(f)g\pm f\Delta(g)\pm\{f,g\}_\omega.
\]
see also \cite{Khudaverdian2004,Severa2006} for a mathematical exposure to the origin of the BV Laplacian.  
\end{rem}

Moreover, define $\delta_{\mathrm{BV}}$ to be the degree $+1$ operator given by 
\begin{equation}
    \delta_{\mathrm{BV}}:=Q-\I\hbar\Delta
\end{equation}
which satisfies 
\begin{equation}
    \delta_{\mathrm{BV}}^2=0.
\end{equation}

The following theorem is one of the main statements in the formalism developed by Batalin and Vilkovisky. In its present form it was stated by Schwarz on general manifolds \cite{S}.
\begin{thm}[Batalin--Vilkovisky]
\label{thm:BV}
    For any half-density $f$ on $\calF$ we have: 
    \begin{enumerate}
        \item If $f=\Delta g$, then $$\int_{\calL}f=0,$$ for a Lagrangian submanifold $\calL\subset\calF$,
        \item If $\Delta f=0$, then $$\frac{\dd}{\dd t}\int_{\calL_t}f=0,$$ for any continuous family $(\calL_t)$ of Lagrangian submanifolds of $\calF$.
    \end{enumerate}
\end{thm}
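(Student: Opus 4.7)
The plan is to prove both statements by reducing to a local Darboux-type picture, then leveraging the fact that a half-density on an odd-symplectic manifold restricts canonically to a density on any Lagrangian submanifold, and that under this restriction the BV Laplacian becomes essentially a divergence.

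First I would set up the local model. By a suitable Darboux theorem for odd symplectic manifolds one can, locally on $\calF$, choose coordinates $(x^i,\xi_i)$ in which $\omega=\sum_i\delta x^i\,\delta\xi_i$, with the $\xi_i$ of opposite parity/degree to the $x^i$, and in which the compatible volume element $\mu$ is the coordinate one; in these coordinates the BV Laplacian acts as $\Delta=\sum_i\partial_{x^i}\partial_{\xi_i}$ on functions, and extends to half-densities in the natural way. A half-density $f$ then takes the local form $f=\rho(x,\xi)\sqrt{\dd x\,\dd\xi}$, and on the Lagrangian $\calL=\{\xi=0\}$ it restricts to the density $\rho(x,0)\,\dd x$. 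Any Lagrangian can locally be brought to this form via a Darboux chart (up to a canonical transformation), so it suffices to argue in this model.

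For part (1), I would compute $\Delta g$ in these coordinates and observe that its restriction to $\calL=\{\xi=0\}$ is a total coordinate divergence in $x$. Concretely, writing $g=h(x,\xi)\sqrt{\dd x\,\dd\xi}$ and expanding $h=h_0(x)+\sum_i h^i(x)\xi_i+\cdots$, the term $(\Delta g)|_\calL$ picks out exactly $\sum_i \partial_{x^i}h^i(x)\,\dd x$, which integrates to zero on a closed $\calL$ by Stokes. A partition-of-unity/globalization argument (using that the statement is local and the restricted density is intrinsically defined) then yields $\int_\calL \Delta g=0$ in general.

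For part (2), I would derive it from (1) via an infinitesimal homotopy argument. Given a smooth one-parameter family $\calL_t$, the derivative $\frac{\dd}{\dd t}\int_{\calL_t}f$ at a given $t$ can be expressed, after choosing a tubular neighborhood of $\calL_t$ in $\calF$ and a generating function describing the deformation, as $\int_{\calL_t}\Delta g_t$ for a half-density $g_t$ built algebraically from $f$ and the infinitesimal deformation vector field; this is the standard Lagrangian-variation formula, which hinges on writing $f$ in terms of a generating function and using $\Delta f=0$ to trade the derivative in $t$ for a BV-exact expression. Applying part (1) to each $t$ then gives $\frac{\dd}{\dd t}\int_{\calL_t}f=0$.

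The main obstacle is not the finite-dimensional calculation, which is a direct application of Stokes and Darboux, but rather the fact that in the BV setting $\calF$ is typically infinite-dimensional, so both the half-density $f$ and the Lagrangian $\calL$ (the gauge fixing) require regularization; in practice one performs a perturbative expansion around a critical point of $\calS$ so that all expressions are polynomial Gaussian integrals, and checks that the identities above persist under the chosen regularization. This is the point at which care is needed for the application of the theorem to concrete field theories, which is precisely the reason the formalism is always coupled to a choice of gauge-fixing Lagrangian and a regularization scheme.
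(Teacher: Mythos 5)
The paper does not actually prove Theorem \ref{thm:BV}: it is quoted as a known result, attributed to Batalin--Vilkovisky in the form given by Schwarz on general odd symplectic manifolds, so there is no in-text argument to compare yours against. Your proposal is, in outline, the standard finite-dimensional proof from that literature, and it is essentially correct: odd Darboux coordinates, canonical restriction of a half-density to a density on a Lagrangian, the observation that $(\Delta g)\vert_{\calL}$ is a total divergence (hence Stokes for part (1)), and the generating-function/infinitesimal-homotopy reduction of part (2) to part (1). Two points deserve more care than you give them. First, the operator $\Delta$ appearing in the theorem is the canonical BV Laplacian on \emph{half-densities}, whose chart-independence (i.e.\ that $\sum_i\partial_{x^i}\partial_{\xi_i}$ acting on the coefficient of $\sqrt{\dd x\,\dd\xi}$ transforms correctly under odd canonical transformations) is itself a nontrivial fact due to Khudaverdian; the paper's $\Delta$ is defined on functions via a volume element $\mu$, and your ``extends to half-densities in the natural way'' silently bridges these two formulations, which is exactly where the invariance needs to be checked before the partition-of-unity globalization is legitimate. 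Second, in part (1) the vanishing by Stokes requires $\calL$ closed or the integrand to have suitable decay, and in part (2) the description of a nearby Lagrangian by a single generating function is only guaranteed locally in $t$; both are standard and easily patched, but should be stated. Your closing caveat about the infinite-dimensional setting and regularization is apt and is consistent with how the theorem is used in the rest of the paper.
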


\begin{rem}
The choice of Lagrangian submanifold is in fact equivalent to fixing a gauge. 
The second part of Theorem \ref{thm:BV} tells us that if we have an integral over a Lagrangian submanifold which is ill-defined, but on the other hand $\Delta f=0$, then we can deform the Lagrangian submanifold $\calL$ continuously to a Lagrangian submanifold $\calL'$ (choosing a different gauge) where the integral is well-defined. In application to quantum field theory, we have $f=\ee^{\frac{\I}{\hbar}\calS}$. Hence, for gauge-independence, we need to impose 
\begin{equation}
\label{eq:QME}
\boxed{
    \Delta\ee^{\frac{\I}{\hbar}\calS}=0\Longleftrightarrow \{\calS,\calS\}_\omega-2\I\hbar\Delta\calS=0.}
\end{equation}
The condition \eqref{eq:QME} is called the \emph{Quantum Master Equation (QME)}. If we let $\calS$ depend on $\hbar$, we can see that in order zero we get the CME $\{\calS,\calS\}_\omega=0$. One can then solve \eqref{eq:QME} order by order.
\end{rem}

\subsection{$L_\infty$-structure}
Recall that a $Q$-manifold with trivial body induces an $L_\infty$-algebra structure (see e.g. \cite{MehtaZambon2012}). More
generally, a $Q$-manifold with non-trivial body induces an $L_\infty$-algebroid structure. Similarly, a BV manifold endowed with its $Q$-structure induces an $L_\infty$-algebra structure on $\calF$ \cite{Stasheff1997_2}. This $L_\infty$-algebra encodes all the relevant classical information of the field theory. Hence, at the classical level, Lagrangian field theories can be equivalently described in terms of the underlying (cyclic\footnote{A \emph{cyclic} $L_\infty$-algebra \cite{Kontsevich1994} is an $L_\infty$-algebra $\mathfrak{g}$ endowed with a non-degenerate, symmetric, bilinear pairing $\langle\enspace,\enspace\rangle_\mathfrak{g}\colon \mathfrak{g}\oplus \mathfrak{g}\to \R$ such that 
\[
\langle X_1,\ell_{n+1}(X_2,\ldots,X_{n+1})\rangle_\mathfrak{g}=(-1)^{n+n(\deg(X_1)+\deg(X_{n+1}))+\deg(X_{n+1})\sum_{j=1}^n\deg(X_j)}\langle X_{n+1},\ell_{n}(X_1,\ldots,X_n)\rangle_\mathfrak{g},
\]
for $X_1,\ldots,X_{n+1}\in\mathfrak{g}$ and where $(\ell_n)$ denote the $n$-ary brackets on $\mathfrak{g}$. 
In the case of a $Q$-manifold the cyclic inner product corresponds to a symplectic structure.}) $L_\infty$-algebra structure. Moreover, equivalent theories induce quasi-isomorphic $L_\infty$-algebras. The unary operation $\ell_1$ is in fact encoded in the linear part of the action $Q=\{\calS,\enspace\}_\omega$ on the field corresponding to the image of $\ell_1$. The higher brackets then make the linearized expressions covariant and to allow for higher interaction terms. The operator $\delta_\mathrm{BV}$ in fact induces a quantum $L_\infty$-algebra (or loop homotopy algebra) on the same graded space. In particular, by a direct application of the homological perturbation lemma, one can prove a similar decomposition theorem and compute its \emph{minimal model} as for the classical case, which leads directly to a homotopy between a quantum $L_\infty$-algebra and its minimal model in which the non-triviality of the action is fully absorbed in the higher brackets. Moreover, the homotopy Maurer--Cartan theory\footnote{This is the theory induced by the action term $\calS_\mathrm{MC}(\Psi)=\sum_{j\geq1}\frac{1}{(j+1)!}\langle \Psi,\ell_j(\Psi,\ldots,\Psi)\rangle_\mathfrak{g}$ for a cyclic $L_\infty$-algebra $\mathfrak{g}$ endowed with an inner product $\langle\enspace,\enspace\rangle_\mathfrak{g}$.
Here $(\ell_j)$ denotes the family of $j$-ary brackets on $\mathfrak{g}$.
The stationary locus of this action is given by solutions of the homotopy Maurer--Cartan equation $\sum_{j\geq 1}\frac{1}{j!}\ell_j(\Psi,\ldots,\Psi)=0$. In fact, the deformed Lagrangian (still classical) 
\[
\calS(\Psi)=\frac{1}{2}\langle \Psi,Q(\Psi)\rangle_\mathfrak{g}+\sum_{j\geq1}\frac{1}{(j+1)!}\langle \Psi,\ell_j(\Psi,\ldots,\Psi)\rangle_\mathfrak{g}
\] satisfies the CME.} implies that for an arbitrary $L_\infty$-algebra the BV complex of fields, ghosts and anti fields is just the $L_\infty$-algebra itself. See e.g. \cite{Stasheff1997,Stasheff1997_2,JRSW2018} for a more detailed discussion of $L_\infty$-structures for BV field theories. 


\section{AKSZ theories}
\label{sec:AKSZ_Theories}

\subsection{Preliminaries}
In \cite{AKSZ}, Alexandrov, Kontsevich, Schwarz and Zaboronsky have proposed a class of local field theories which are compatibel with the Batalin--Vilkovisky gauge formalism construction, in the sense that the constructed local actions are solutions to the Classical Master Equation. Hence, these theories give a subclass of BV theories. In this section we want to recall the most important notions of AKSZ sigma models. We start with defining the ingredients.

\begin{defn}[Differential graded symplectic manifold]
\label{defn:dg_symplectic_manifold}
A \emph{differential graded symplectic manifold} of degree $k$ is a triple $$(\calM,\Theta_\calM,\omega_\calM=\dd_{\calM}\alpha_\calM)$$ such that $\calM$ is a $\mathbb{Z}$-graded manifold, $\Theta_\calM\in C^\infty(\calM)$ is a function on $\calM$ of degree $k+1$, and $\omega_\calM\in \Omega^2(\calM)$ is an exact symplectic form of degree $k$ with primitive 1-form $\alpha_\calM\in \Omega^1(\calM)$, such that 
\begin{equation}
    \left\{\Theta_\calM,\Theta_\calM\right\}_{\omega_\calM}=0,
\end{equation}
where $\{\enspace,\enspace\}_{\omega_\calM}$ is the odd Poisson bracket induced by $\omega_\calM$. We have denoted by $\dd_{\calM}$ the de Rham differential on $\calM$.
\end{defn}

\begin{rem}
We denote by $Q_\calM\in\mathfrak{X}(\calM)$ the Hamiltonian vector field of $\Theta_\calM$, defined by the equation $$\iota_{Q_\calM}\omega_\calM=\dd_{\calM}\Theta_\calM$$ with the properties $[Q_\calM,Q_\calM]=0$ (cohomological) and $L_{Q_\calM}\omega_\calM=0$ (symplectic). Note that $Q_\calM$ is of degree $+1$.
A quadruple $(\calM,Q_\calM,\Theta_\calM,\omega_\calM=\dd_{\calM}\alpha_\calM)$ as in Definition \ref{defn:dg_symplectic_manifold} is also called a \emph{Hamiltonian $Q$-manifold}.
\end{rem}

\subsection{AKSZ sigma models}
\label{sec:AKSZ_sigma_models}
Let $\Sigma_d$ be a $d$-dimensional compact, oriented manifold (possibly with boundary) and consider its shifted tangent bundle $T[1]\Sigma_d$. Moreover, fix a Hamiltonian $Q$-manifold $$(\calM,Q_{\calM},\Theta_{\calM},\omega_{\calM}=\dd_{\calM}\alpha_{\calM})$$ of degree $d-1$ for $d\geq 0$. We can consider the mapping space of graded manifolds from $T[1]\Sigma_d$ to $\calM$ to be our space of fields:
\begin{equation}
    \calF^\calM_{\Sigma_d}:=\Map_{\mathrm{GrMnf}}(T[1]\Sigma_d,\calM),
\end{equation}
where $\Map_{\mathrm{GrMnf}}$ denotes the mapping space between graded manifolds\footnote{More precisely, $\Map_{\mathrm{GrMnf}}$ denotes the right adjoint functor to the Cartesian product in the category of graded manifolds with a fixed factor. On objects $X,Y,Z$ we have $\Hom(X,\Map_{\mathrm{GrMnf}}(Y,Z))=\Hom(X\times Y,Z)$, where $\Hom$ denotes the set of graded manifold morphisms.}.
We would like to endow $\calF^\calM_{\Sigma_d}$ with a $Q$-manifold structure. This can be done by considering the lifts of the de Rham differential $\dd_{\Sigma_d}$ on $\Sigma_d$ and the cohomological vector field $Q_\calM$ on the target $\calM$ to the mapping space. Hence, we get a cohomological vector field 
\begin{equation}
    Q_{\Sigma_d}:=\Hat{\dd}_{\Sigma_d}+\Hat{Q}_{\calM}\in \mathfrak{X}\left(\calF^\calM_{\Sigma_d}\right),
\end{equation}
where $\Hat{\dd}_{\Sigma_d}$ and $\Hat{Q}_{\calM}$ denote the corresponding lifts to the mapping space. Note that we can regard $\dd_{\Sigma_d}$ as a cohomological vector field on $T[1]\Sigma_d$.
Consider the following push-pull diagram 
\begin{equation}
    \calF^\calM_{\Sigma_d}\xleftarrow{\mathrm{p}} \calF^\calM_{\Sigma_d}\times T[1]\Sigma_d\xrightarrow{\mathrm{ev}} \calM,
\end{equation}
where $\mathrm{p}$ denotes the projection onto $\calF^\calM_{\Sigma_d}$ and $\mathrm{ev}$ is the evaluation map. We can construct a \emph{transgression} map 
\begin{equation}
    \mathscr{T}_{\Sigma_d}:=\mathrm{p}_*\mathrm{ev}^*\colon \Omega^\bullet(\calM)\to \Omega^\bullet\left(\calF^\calM_{\Sigma_d}\right).
\end{equation}
Note that the map $\mathrm{p}_*$ is given by fiber integration on $T[1]\Sigma_d$. Now we can endow the space of fields $\calF^\calM_{\Sigma_d}$ with a symplectic structure $\omega_{\Sigma_d}$ by setting
\begin{equation}
    \omega_{\Sigma_d}:=(-1)^{d}\mathscr{T}_{\Sigma_d}(\omega_{\calM})\in \Omega^2\left(\calF^\calM_{\Sigma_d}\right).
\end{equation}
Moreover, we will get a solution $\calS_{\Sigma_d}$ to the CME, the BV action functional, by 
\begin{equation}
    \calS_{\Sigma_d}:=\underbrace{\iota_{\Hat{\dd}_{\Sigma_d}}\mathscr{T}_{\Sigma_d}(\alpha_{\calM})}_{=:\calS^\mathrm{kin}_{\Sigma_d}}+\underbrace{\mathscr{T}_{\Sigma_d}(\Theta_{\calM})}_{=:\calS^\mathrm{target}_{\Sigma_d}}\in C^\infty\left(\calF^\calM_{\Sigma_d}\right).
\end{equation}
Indeed, one can check that 
\begin{equation}
    \left\{\calS_{\Sigma_d},\calS_{\Sigma_d}\right\}_{\omega_{\Sigma_d}}=0.
\end{equation}
Note that the symplectic form $\omega_{\Sigma_d}$ is of degree $(d-1)-d=-1$ as expected. Moreover, the action $\calS_{\Sigma_d}$ is of degree $0$. Thus this setting does indeed induce a BV manifold $\left(\calF^\calM_{\Sigma_d},\calS_{\Sigma_d},\omega_{\Sigma_d}\right)$. Consider local coordinates $(x^\mu)$ on $\calM$ and let $(u^i)$ be local coordinates on $\Sigma_d$ for $1\leq i\leq d$. Denote the odd fiber coordinates of degree $+1$ on $T[1]\Sigma_d$ by $\theta^i=\dd_{\Sigma_d} u^i$. Then, for a field $\calA\in \calF^\calM_{\Sigma_d}$, we have the local expression
\begin{equation}
    \calA^\mu(u,\theta)=\sum_{\ell=0}^d\,\,\underbrace{\sum_{1\leq i_1<\dotsm <i_\ell\leq d}\calA^\mu_{i_1\ldots i_\ell}(u)\theta^{i_1}\land\dotsm \land \theta^{i_\ell}}_{\calA^\mu_{(\ell)}(u,\theta)}\in \bigoplus_{\ell=0}^d C^\infty(\Sigma_d)\otimes \bigwedge^\ell T^*\Sigma_d.
\end{equation}
The functions $\calA^\mu_{i_1\ldots i_\ell}\in C^\infty(\Sigma_d)$ are of degree $\deg(x^\mu)-\ell$ on $\calF^\calM_{\Sigma_d}$. The local expression of the symplectic form $\omega_{\calM}$ and its primitive 1-form $\alpha_{\calM}$ on $\calM$ are given by 
\begin{align}
    \alpha_{\calM}&=\alpha_{\mu}(x)\dd_{\calM}x^\mu\in \Omega^1(\calM),\\
    \omega_{\calM}&=\frac{1}{2}\omega_{\mu_1\mu_2}(x)\dd_{\calM}x^{\mu_1}\land \dd_{\calM}x^{\mu_2}\in \Omega^2(\calM).
\end{align}
Locally, using the expressions above, we get the following expression for the BV symplectic form, its primitive 1-form and the BV action functional:
\begin{align}
    \alpha_{\Sigma_d}&=\int_{\Sigma_d}\alpha_\mu(\calA)\delta\calA^\mu\in \Omega^1\left(\calF^\calM_{\Sigma_d}\right),\\
    \omega_{\Sigma_d}&=(-1)^d\frac{1}{2}\int_{\Sigma_d}\omega_{\mu_1\mu_2}(\calA)\delta\calA^{\mu_1}\land \delta\calA^{\mu_2}\in \Omega^2\left(\calF^\calM_{\Sigma_d}\right),\\
    \calS_{\Sigma_d}&=\int_{\Sigma_d}\alpha_\mu(\calA)\dd_{\Sigma_d}\calA^\mu+\int_{\Sigma_d}\Theta_{\calM}(\calA)\in C^\infty\left(\calF^\calM_{\Sigma_d}\right).
\end{align}
Note that we have denoted by $\delta$ the de Rham differential on $\calF^\calM_{\Sigma_d}$. If we consider Darboux coordinates on $\calM$, we get that \[\omega_\calM=\frac{1}{2}\omega_{\mu_1\mu_2}\dd_\calM x^{\mu_1}\land \dd_{\calM} x^{\mu_2},\] where the $\omega_{\mu_1\mu_2}$ are constant implying that $\alpha_\calM=\frac{1}{2}x^{\mu_1}\omega_{\mu_1\mu_2}\dd_\calM x^{\mu_2}$. Hence we get the BV symplectic form
\begin{equation}
    \omega_{\Sigma_d}=\frac{1}{2}\int_{T[1]\Sigma_d}\mu_{\Sigma_d}\left(\omega_{\mu_1\mu_2}\delta\calA^{\mu_1}\land\delta\calA^{\mu_2}\right)=\frac{1}{2}\int_{\Sigma_d}\left(\omega_{\mu_1\mu_2}\delta\calA^{\mu_1}\land\delta\calA^{\mu_2}\right)^\mathrm{top}
\end{equation}
and the master action 
\begin{equation}
    \calS_{\Sigma_d}=\int_{T[1]\Sigma_d}\mu_{\Sigma_d}\left(\frac{1}{2}\calA^\mu\omega_{\mu_1\mu_2}\boldsymbol{D}_{\Sigma_d}\calA^{\mu_2}\right)+(-1)^d\int_{T[1]\Sigma_d}\mu_{\Sigma_d}\calA^*\Theta_\calM,
\end{equation}
where $\mu_{\Sigma_d}$ is a canonical measure on $T[1]\Sigma_d$ and $\boldsymbol{D}_{\Sigma_d}=\theta^j\frac{\de}{\de u_j}$ the superdifferential on $T[1]\Sigma_d$.

\section{Hamiltonian $Q$-bundles}
We want to construct a combination of the notion of $Q$-manifolds and the concept of Hamiltonian vector fields together with the notion of vector bundles, where we want to extend most of our constructions on the fiber (see also \cite{KotovStrobl2015}). We will see that the fiber will represent the target of an AKSZ theory for an embedded source manifold when lifted to an AKSZ-BV theory. We will call the fiber theory \emph{auxiliary}. In this section we will give the main definitions as in \cite{Mn3}. Let us start with the definition of the trivial case.  

\begin{defn}[Trivial $Q$-bundle]
\label{defn:trivial_Q_bundle}
Let $\calN$ be a graded manifold and $(\calM,Q_\calM)$ a graded $Q$-manifold. A \emph{trivial $Q$-bundle} is a trivial bundle 
\begin{equation}
    \pi\colon \calE:=\calM\times \calN\to \calM
\end{equation}
such that $\dd\pi(Q_\calE)=Q_\calM$, where $Q_\calE$ denotes the $Q$-structure on the total space $\calE$.
\end{defn}

\begin{rem}
Note that this implies that 
\[Q_\calE=Q_\calM+\calV,\]
where $\calV\in \ker \dd\pi\cong C^\infty(\calM)\Hat{\otimes}\mathfrak{X}(\calN)$ denotes the vertical part of $Q_\calE$. The fact that $[Q_\calE,Q_\calE]=0$ can be translated to 
\begin{equation}
    \underbrace{[Q_\calM,Q_\calM]}_{=0}+[Q_\calM,\calV]+\frac{1}{2}[\calV,\calV]=0.
\end{equation}
\end{rem}

\begin{defn}[Trivial Hamiltonian $Q$-bundle]
\label{defn:trivial_Hamiltonian_Q-bundle}
A \emph{trivial Hamiltonian $Q$-bundle} of degree $n\in\mathbb{Z}$ is a trivial $Q$-bundle 
\[\pi\colon \calE:=\calM\times \calN\to \calM\] 
as in Definition \ref{defn:trivial_Q_bundle} with $Q_\calE=Q_\calM+\calV$ such that the fiber $\calN$ is endowed with an exact symplectic structure $\omega_\calN=\dd_\calN\alpha_\calN\in \Omega^2(\calN)$ of degree $n$ with $\alpha_\calN\in\Omega^1(\calN)$ and a Hamiltonian function $\Theta_\calE\in C^\infty(\calE)$ of degree $n+1$ satisfying
\begin{align}
    &\calV=\{\Theta_\calE,\enspace\}_{\omega_\calN}\\
    \label{eq:classical_1}
    &Q_\calM(\Theta_\calE)+\frac{1}{2}\{\Theta_\calE,\Theta_\calE\}_{\omega_\calN}=0.
\end{align}
\end{defn}

We can now give the definition of a general Hamiltonian $Q$-bundle.
\begin{defn}[Hamiltonian $Q$-bundle]
\label{defn:Hamiltonian_Q-bundle}
A \emph{Hamiltonian $Q$-bundle} is a $Q$-bundle $\pi\colon\calE\to \calM$ where the total space $\calE$ is endowed with a degree $n$ exact pre-symplectic form $\omega_\calE=\dd_\calE\alpha_\calE$ such that $\ker\omega_\calE\subset T\calE$ is transversal to the vertical distribution $T^\mathrm{vert}\calE$ and hence $\ker\omega_\calE$ defines a flat Ehresmann connection $\nabla_{\omega_\calE}$. Moreover, there is a Hamiltonian function $\Theta_\calE\in C^\infty(\calE)$ with $$\iota_{Q_\calE}\omega_\calE=\dd^\mathrm{vert}_\calE\Theta_\calE,$$ where $\dd_{\calE}^\mathrm{vert}$ denotes the vertical part of the de Rham differential on $\calE$ as a pullback by the natural inclusion $T^\mathrm{vert}\calE\hookrightarrow T\calE$. Finally, we also want that 
\begin{equation}
    \left(Q^\mathrm{hor}_\calE+\frac{1}{2}Q^\mathrm{vert}_\calE\right)(\Theta_\calE)=0,
\end{equation}
where we split $Q_\calE=Q^\mathrm{hor}_\calE+Q^\mathrm{vert}_\calE$ into its horizontal and vertical parts by using the Ehresmann connection $\nabla_{\omega_\calE}$ defined by $\omega_\calE$.
\end{defn}

\section{Observables in the BV formalism}
We want to define certain classes of observables arising within the BV construction which are compatible with the structure of an underlying $Q$-bundle. We will start with the classical setting.

\subsection{Observables for classical BV manifolds}

\begin{defn}[BV classical observable]
A \emph{classical observable} for a BV manifold $(\calF,\calS,Q,\omega)$ is defined as a function $\calO\in C^\infty(\calF)$ of degree 0 such that 
\begin{equation}
    Q(\calO)=0.
\end{equation}
\end{defn}

\begin{defn}[Equivalence of BV classical observables]
Two BV classical observables $\calO$ and $\Tilde{\calO}$ are said to be equivalent if 
\begin{equation}
    \Tilde{\calO}-\calO=Q(\Psi),\quad \Psi\in C^\infty(\calF),
\end{equation}
or equivalently, $\calO$ and $\Tilde{\calO}$ have the same $Q$-cohomology class.
\end{defn}

\begin{defn}[BV classical pre-observable]
For a classical BV theory $$(\calF,\calS,Q,\omega)$$ we define a \emph{pre-observable} to be a Hamiltonian $Q$-bundle over $\calF$ of degree $-1$. We denote the fiber by $\calF^{\mathrm{aux}}$ and call them the \emph{space of auxiliary fields}, which itself is endowed with a symplectic structure $\omega^{\mathrm{aux}}$ of degree $-1$ and an action functional $\calS^\mathrm{aux}\in C^\infty(\calF\times \calF^\mathrm{aux})$ of degree 0 such that 
\begin{equation}
    Q(\calS^\mathrm{aux})+\frac{1}{2}\{\calS^\mathrm{aux},\calS^\mathrm{aux}\}_{\omega^\mathrm{aux}}=0.
\end{equation}
\end{defn}

Using the notions of quantum BV manifolds as in Definition \ref{defn:quantum_BV_manifold}, we can define a fiber auxiliary version which is compatible with the Hamiltonian $Q$-bundle construction as in Definition \ref{defn:Hamiltonian_Q-bundle}.

\begin{defn}[BV semi-quantum pre-observable]
For a classical BV theory $(\calF,\calS,Q,\omega)$ we define a \emph{BV semi-quantum pre-observable} to be a quadruple $$(\calF^\mathrm{aux},\calS^\mathrm{aux},\omega^\mathrm{aux},\mu^\mathrm{aux})$$ such that $\mu^\mathrm{aux}$ is a volume form on $\calF^\mathrm{aux}$ compatible with $\omega^\mathrm{aux}$, i.e. the associated BV Laplacian on $C^\infty(\calF^\mathrm{aux})$ given by
\begin{equation}
    \Delta^\mathrm{aux}\colon f\mapsto \frac{1}{2}\mathrm{div}_{\mu^\mathrm{aux}}\{f,\enspace\}_{\omega^\mathrm{aux}}
\end{equation}
satisfies $(\Delta^\mathrm{aux})^2=0$. Moreover, the action functional $\calS^\mathrm{aux}$ satisfies 
\begin{equation}
    Q(\calS^\mathrm{aux})+\frac{1}{2}\{\calS^\mathrm{aux},\calS^\mathrm{aux}\}_{\omega^\mathrm{aux}}-\I\hbar \Delta^\mathrm{aux}\calS^\mathrm{aux}=0,
\end{equation}
which is equivalent to 
\begin{equation}
    \delta_\mathrm{BV}^\mathrm{aux}\ee^{\frac{\I}{\hbar}\calS^\mathrm{aux}}:=(Q-\I\hbar\Delta^\mathrm{aux})\ee^{\frac{\I}{\hbar}\calS^\mathrm{aux}}=0.
\end{equation}
\end{defn}
\begin{rem}
    The name ``semi-quantum'' is chosen since it is not a quantum observable yet, but rather the theory whose functional integral quantization will lead to a quantum observable in the sense that it is closed with respect to the infinitesimal symmetries.
\end{rem}

We also want to extend the notion of equivalent pre-observables to the case of semi-quantum pre-observables.

\begin{defn}[Equivalent BV semi-quantum pre-observables]
Two BV semi-quantum pre-observables 
\[
(\calF^\mathrm{aux},\calS^\mathrm{aux},\omega^\mathrm{aux},\mu^\mathrm{aux})\quad \text{and}\quad (\calF^\mathrm{aux},\Tilde{\calS}^\mathrm{aux},\omega^\mathrm{aux},\mu^\mathrm{aux}) 
\]
are said to be \emph{equivalent} if there exists a function $f^\mathrm{aux}\in C^\infty(\calF\times \calF^\mathrm{aux})$ such that 
\begin{equation}
    \ee^{\frac{\I}{\hbar}\Tilde{\calS}^\mathrm{aux}}-\ee^{\frac{\I}{\hbar}\calS^\mathrm{aux}}=(Q-\I\hbar\Delta^\mathrm{aux})\left(\ee^{\frac{\I}{\hbar}\calS^\mathrm{aux}}f^\mathrm{aux}\right).
\end{equation}
\end{defn}

\begin{prop}[\cite{CattRoss2005,Mn3}]
\label{prop:from_pre-observable_to_observable}
Let $(\calF^\mathrm{aux},\calS^\mathrm{aux},\omega^\mathrm{aux},\mu^\mathrm{aux})$ be a BV semi-quantum pre-observable. Define 
\begin{equation}
\label{eq:observable_path_integral}
    \calO_\calL:=\int_{\calL\subset\calF^\mathrm{aux}}\ee^{\frac{\I}{\hbar}\calS^\mathrm{aux}}\sqrt{\mu^\mathrm{aux}}\vert_\calL\in C^\infty(\calF),
\end{equation}
where $\calL\subset\calF^\mathrm{aux}$ is a Lagrangian submanifold. Then $\calO_\calL$ is an observable, i.e. $Q(\calO_\calL)=0$. Moreover, if for two Lagrangian submanifolds $\calL$ and $\Tilde{\calL}$ there exists a homotopy between them, then the observables $\calO_\calL$ and $\calO_{\Tilde{\calL}}$ are equivalent. Also for two equivalent BV semi-quantum pre-observables $\calS^\mathrm{aux}$ and $\Tilde{\calS}^\mathrm{aux}$, the corresponding observables $\calO_\calL$ and $\Tilde{\calO}_\calL$ are equivalent.
\end{prop}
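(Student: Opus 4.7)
The strategy is a direct application of Theorem~\ref{thm:BV} to the fiber integral defining $\calO_\calL$, exploiting that $Q$ acts on the base $\calF$ (so it commutes with the integration over the fiber slice $\calL\subset\calF^\mathrm{aux}$) while $\Delta^\mathrm{aux}$ acts on the fiber (so it integrates to zero against half-densities on Lagrangian submanifolds).

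\textbf{Step 1: $Q(\calO_\calL)=0$.} Since $Q$ differentiates only in the $\calF$-directions, it passes through the fiber integral defining $\calO_\calL$. Using the semi-quantum pre-observable condition $\delta^\mathrm{aux}_\mathrm{BV}\ee^{\frac{\I}{\hbar}\calS^\mathrm{aux}}=(Q-\I\hbar\Delta^\mathrm{aux})\ee^{\frac{\I}{\hbar}\calS^\mathrm{aux}}=0$, I rewrite
\[
Q\left(\ee^{\frac{\I}{\hbar}\calS^\mathrm{aux}}\right)=\I\hbar\,\Delta^\mathrm{aux}\ee^{\frac{\I}{\hbar}\calS^\mathrm{aux}}.
\]
Then
\[
Q(\calO_\calL)=\int_\calL \I\hbar\,\Delta^\mathrm{aux}\!\left(\ee^{\frac{\I}{\hbar}\calS^\mathrm{aux}}\right)\sqrt{\mu^\mathrm{aux}}\big\vert_\calL,
\]
which vanishes by part (1) of Theorem~\ref{thm:BV} applied to the fiber half-density.

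\textbf{Step 2: Homotopy-invariance in $\calL$.} For a smooth family $\{\calL_t\}$ of Lagrangians connecting $\calL$ and $\Tilde\calL$, the classical BV argument (Theorem~\ref{thm:BV}(2)) gives $\frac{\dd}{\dd t}\int_{\calL_t} f=0$ whenever $\Delta^\mathrm{aux} f=0$. Here $f=\ee^{\frac{\I}{\hbar}\calS^\mathrm{aux}}\sqrt{\mu^\mathrm{aux}}$ is not $\Delta^\mathrm{aux}$-closed, but $\Delta^\mathrm{aux} f=-\frac{\I}{\hbar} Q(f)$ by Step~1. Integrating the standard identity for the $t$-derivative against the homotopy tube and pulling $Q$ (which acts on $\calF$) out of the fiber integral yields
\[
\calO_{\Tilde\calL}-\calO_\calL=-\frac{\I}{\hbar}\,Q(\Psi),\qquad \Psi:=\int_{\mathrm{tube}}\ee^{\frac{\I}{\hbar}\calS^\mathrm{aux}}\sqrt{\mu^\mathrm{aux}}\in C^\infty(\calF),
\]
so the two observables are $Q$-cohomologous.

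\textbf{Step 3: Equivalence of pre-observables.} Suppose $\calS^\mathrm{aux}$ and $\Tilde\calS^\mathrm{aux}$ are equivalent via $f^\mathrm{aux}\in C^\infty(\calF\times\calF^\mathrm{aux})$. Integrating the defining identity over $\calL$ splits the right-hand side into a $Q$-piece and a $\Delta^\mathrm{aux}$-piece:
\[
\Tilde\calO_\calL-\calO_\calL=\int_\calL Q\!\left(\ee^{\frac{\I}{\hbar}\calS^\mathrm{aux}}f^\mathrm{aux}\right)\sqrt{\mu^\mathrm{aux}}\big\vert_\calL-\I\hbar\int_\calL \Delta^\mathrm{aux}\!\left(\ee^{\frac{\I}{\hbar}\calS^\mathrm{aux}}f^\mathrm{aux}\right)\sqrt{\mu^\mathrm{aux}}\big\vert_\calL.
\]
The second term vanishes by Theorem~\ref{thm:BV}(1). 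The first term equals $Q\!\left(\int_\calL \ee^{\frac{\I}{\hbar}\calS^\mathrm{aux}} f^\mathrm{aux}\sqrt{\mu^\mathrm{aux}}\vert_\calL\right)$ since $Q$ acts only on $\calF$, which exhibits $\Tilde\calO_\calL-\calO_\calL$ as $Q$-exact.

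\textbf{Main obstacle.} The one nontrivial point is Step~2: the BV theorem in the form quoted assumes $\Delta^\mathrm{aux} f=0$, which fails here. The workaround — converting $\Delta^\mathrm{aux}$-noncloseness into an explicit $Q$-exact remainder via the pre-observable equation — is essentially the whole content of the statement. Care is needed to justify interchanging $Q$ with the fiber integral (this is clean since $Q$ differentiates only in $\calF$-directions and preserves the half-density structure on $\calL$), and to interpret $\sqrt{\mu^\mathrm{aux}}\vert_\calL$ consistently as a half-density on the fiber Lagrangian throughout.
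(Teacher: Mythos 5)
The paper states this proposition without proof, importing it verbatim from \cite{CattRoss2005,Mn3}, so there is no in-text argument to compare against; your proof is the standard one from those references and is correct. Steps 1 and 3 are exactly the intended applications of Theorem \ref{thm:BV}(1) combined with the fact that $Q$ differentiates only along $\calF$ and hence commutes with the fiber integral, and Step 2 correctly isolates the one subtle point, namely converting the failure of $\Delta^{\mathrm{aux}}$-closedness of $\ee^{\frac{\I}{\hbar}\calS^{\mathrm{aux}}}$ into a $Q$-exact remainder via the pre-observable equation (your sign $\Delta^{\mathrm{aux}}f=-\frac{\I}{\hbar}Q(f)$ is right). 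The only caveat is that the ``tube'' integral defining $\Psi$ should be read in its infinitesimal gauge-fermion form, $\frac{\dd}{\dd t}\int_{\calL_t}f=\int_{\calL_t}\Delta^{\mathrm{aux}}(f\Psi_t)-\int_{\calL_t}(\Delta^{\mathrm{aux}}f)\Psi_t$ with the first term killed by Theorem \ref{thm:BV}(1), since a half-density cannot literally be integrated over a chain of one dimension more than a Lagrangian; this is cosmetic and does not affect the argument.
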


\begin{defn}[Good auxiliary splitting]
We say that a semi-quantum pre-observable $(\calF^\mathrm{aux},\calS^\mathrm{aux},\omega^\mathrm{aux},\mu^\mathrm{aux})$ has a \emph{good splitting} if there is a decomposition $$\calF^\mathrm{aux}=\mathsf{F}^\mathrm{aux}\times \mathscr{F}^\mathrm{aux}$$ such that 
\begin{align}
    \omega^\mathrm{aux}&=\omega_1^\mathrm{aux}+\omega_2^\mathrm{aux},\\
    \mu^\mathrm{aux}&=\mu^\mathrm{aux}_1\otimes\mu^\mathrm{aux}_2,
\end{align}
where $\omega_1^\mathrm{aux}$ is a symplectic form on $\mathsf{F}^\mathrm{aux}$, $\omega_2^\mathrm{aux}$ is a symplectic form on $\mathscr{F}^\mathrm{aux}$, $\mu^\mathrm{aux}_1$ is a volume form on $\mathsf{F}^\mathrm{aux}$ and $\mu^\mathrm{aux}_2$ is a volume form on $\mathscr{F}^\mathrm{aux}$. 
\end{defn}

\begin{rem}
This is in fact the trivial case. The general version, called \emph{hedgehog}, is discussed in \cite{CMR2}.
\end{rem}

\begin{rem}
We split the auxiliary fields into \emph{high energy modes} $\mathscr{F}^\mathrm{aux}$ and \emph{low energy modes} $\mathsf{F}^\mathrm{aux}$. This splitting can be done by using \emph{Hodge decomposition} of differential forms into exact, coexact and harmonic forms (see Appendix A of \cite{CMR2}). Note that, in addition, we might also have \emph{background fields}\footnote{These
are background choices for classical fields that are not fixed by the
boundary conditions and the Euler–Lagrange equations.}.
If $\Sigma_d$ would have boundary, one can in general split the space of fields into three parts, the low energy fields, the high energy fields and the boundary fields. The boundary fields are generally given by techniques of symplectic reduction as the leaves of a chosen polarization on the boundary. This is the content of the BV-BFV formalism \cite{CMR1,CMR2,CattMosh1}. 
\end{rem}

\begin{prop}[\cite{Mn3}]
Let $(\calF^\mathrm{aux},\calS^\mathrm{aux},\omega^\mathrm{aux},\mu^\mathrm{aux})$ be a semi-quantum pre-observable with a good splitting. Define $\mathsf{S}^\mathrm{aux}\in C^\infty(\calF\times \mathsf{F}^\mathrm{aux})$ by
\begin{equation}
\label{eq:low_energy_effective}
    \mathsf{S}^\mathrm{aux}=-\I\hbar\log\int_{\mathscr{L}\subset\mathscr{F}^\mathrm{aux}}\ee^{\frac{\I}{\hbar}\calS^\mathrm{aux}}\sqrt{\mu_2^\mathrm{aux}}\big|_{\mathscr{L}},
\end{equation}
where $\mathscr{L}$ is a Lagrangian submanifold of $\mathscr{F}^\mathrm{aux}$. Then $(\mathsf{F}^\mathrm{aux},\mathsf{S}^\mathrm{aux},\omega^\mathrm{aux}_1,\mu^\mathrm{aux}_1)$ defines a semi-quantum pre-observable for the same BV theory. Moreover, the observable for the BV theory induced by $\mathsf{S}^\mathrm{aux}$ using Equation \eqref{eq:observable_path_integral} with a Lagrangian submanifold $\mathsf{L}\subset \mathsf{F}^\mathrm{aux}$ is equivalent to the one induced by $\calS^\mathrm{aux}$ using the Lagrangian submanifold $\calL \subset \calF^\mathrm{aux}$, if there exists a homotopy between $\calL$ and $\mathsf{L}\times \mathscr{L}$ in $\calF^\mathrm{aux}$.
\end{prop}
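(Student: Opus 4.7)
The plan is to reduce both assertions to the semi-quantum modified master equation for $\calS^\mathrm{aux}$ combined with part (1) of the Batalin--Vilkovisky Theorem (integrals of $\Delta$-exact half-densities over a Lagrangian vanish), together with Fubini to recognize the effective observable as a full-space observable against a product Lagrangian. The key observation that unlocks everything is that, because the good splitting decomposes both $\omega^\mathrm{aux} = \omega_1^\mathrm{aux}+\omega_2^\mathrm{aux}$ and $\mu^\mathrm{aux}=\mu_1^\mathrm{aux}\otimes \mu_2^\mathrm{aux}$ as block sums/tensor products, the induced BV bracket and hence the BV Laplacian split as $\Delta^\mathrm{aux}=\Delta^\mathrm{aux}_1+\Delta^\mathrm{aux}_2$, with each summand acting only on the respective factor and each squaring to zero. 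Moreover, the base vector field $Q$ acts trivially on the fiber coordinates, so $Q$, $\Delta^\mathrm{aux}_1$ and $\Delta^\mathrm{aux}_2$ all pairwise (graded) commute with one another and with fiber integration over $\mathscr{L}\subset \mathscr{F}^\mathrm{aux}$.

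For the first claim, I rewrite the modified master equation for $\calS^\mathrm{aux}$ in exponential form as $(Q-\I\hbar\Delta^\mathrm{aux})\ee^{\frac{\I}{\hbar}\calS^\mathrm{aux}}=0$ and split it as
\begin{equation*}
(Q-\I\hbar\Delta^\mathrm{aux}_1)\ee^{\frac{\I}{\hbar}\calS^\mathrm{aux}}=\I\hbar\Delta^\mathrm{aux}_2\ee^{\frac{\I}{\hbar}\calS^\mathrm{aux}}.
\end{equation*}
Integrating both sides against $\sqrt{\mu_2^\mathrm{aux}}|_\mathscr{L}$ over $\mathscr{L}$, the right-hand side vanishes by Theorem \ref{thm:BV}(1) applied to the auxiliary high-energy BV data, and commuting $Q-\I\hbar\Delta^\mathrm{aux}_1$ through the fiber integral gives
\begin{equation*}
(Q-\I\hbar\Delta^\mathrm{aux}_1)\ee^{\frac{\I}{\hbar}\mathsf{S}^\mathrm{aux}}=0,
\end{equation*}
which is the desired modified master equation for $\mathsf{S}^\mathrm{aux}$ on the low-energy BV data $(\mathsf{F}^\mathrm{aux},\omega_1^\mathrm{aux},\mu_1^\mathrm{aux})$. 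This also uses the straightforward check that $\omega_1^\mathrm{aux}$ is compatible with $\mu_1^\mathrm{aux}$ (inherited from the original compatibility via the splitting) so that $(\Delta^\mathrm{aux}_1)^2=0$ and $(\mathsf{F}^\mathrm{aux},\mathsf{S}^\mathrm{aux},\omega_1^\mathrm{aux},\mu_1^\mathrm{aux})$ is indeed a semi-quantum pre-observable.

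For the second claim, I compute the observable produced by $\mathsf{S}^\mathrm{aux}$ against a Lagrangian $\mathsf{L}\subset \mathsf{F}^\mathrm{aux}$ by substituting the definition of $\mathsf{S}^\mathrm{aux}$ and applying Fubini:
\begin{equation*}
\int_{\mathsf{L}}\ee^{\frac{\I}{\hbar}\mathsf{S}^\mathrm{aux}}\sqrt{\mu_1^\mathrm{aux}}\big|_{\mathsf{L}}
=\int_{\mathsf{L}\times \mathscr{L}}\ee^{\frac{\I}{\hbar}\calS^\mathrm{aux}}\sqrt{\mu^\mathrm{aux}}\big|_{\mathsf{L}\times \mathscr{L}}
=\calO_{\mathsf{L}\times \mathscr{L}}.
\end{equation*}
Here $\mathsf{L}\times \mathscr{L}$ is Lagrangian in $\calF^\mathrm{aux}$ because $\omega^\mathrm{aux}=\omega_1^\mathrm{aux}+\omega_2^\mathrm{aux}$ is block diagonal and $\mathsf{L},\mathscr{L}$ are Lagrangian in their respective factors. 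Invoking Proposition \ref{prop:from_pre-observable_to_observable} for the assumed homotopy between $\calL$ and $\mathsf{L}\times \mathscr{L}$ in $\calF^\mathrm{aux}$, the observables $\calO_\calL$ and $\calO_{\mathsf{L}\times \mathscr{L}}$ differ by a $Q$-exact term, which gives the desired equivalence.

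The only delicate point is the interchange of $Q-\I\hbar\Delta^\mathrm{aux}_1$ with the fiber integral over $\mathscr{L}$: $Q$ commutes trivially since it is purely horizontal, while $\Delta^\mathrm{aux}_1$ acts only on $\mathsf{F}^\mathrm{aux}$-directions and hence passes through the $\mathscr{L}$-integral provided the measure splitting is exact (no cross Jacobians). In the trivial good-splitting setting this is immediate; in the \emph{hedgehog} generalization alluded to in the remark after the definition of good splitting this would require a more careful analysis, but it is not needed here.
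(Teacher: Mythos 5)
Your proof is correct. The paper itself gives no proof of this proposition (it is quoted from \cite{Mn3}), but your argument --- splitting $\Delta^\mathrm{aux}=\Delta_1^\mathrm{aux}+\Delta_2^\mathrm{aux}$ via the product structure of $\omega^\mathrm{aux}$ and $\mu^\mathrm{aux}$, killing the $\Delta_2^\mathrm{aux}$-exact term with part (1) of Theorem \ref{thm:BV}, commuting $Q-\I\hbar\Delta_1^\mathrm{aux}$ past the fiber integral over $\mathscr{L}$, and then using Fubini plus Proposition \ref{prop:from_pre-observable_to_observable} for the equivalence statement --- is exactly the standard BV-pushforward argument underlying the cited result, and you correctly flag the only delicate points (the separate nilpotency of $\Delta_1^\mathrm{aux}$ inherited from $(\Delta^\mathrm{aux})^2=0$ and the absence of cross terms in the measure).
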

\begin{rem}
    Note that Equation \eqref{eq:low_energy_effective} means that $\mathsf{S}^\mathrm{aux}$ is the \emph{low energy effective action} (zero modes).
\end{rem}

\subsection{Observables for quantum BV manifolds}

\begin{defn}[BV quantum observable]
A \emph{BV quantum observable} for a quantum BV manifold is a function $\calO$ on $\calF$ of degree 0 such that 
\begin{equation}
    \delta_{\mathrm{BV}}\calO=0\Longleftrightarrow \Delta\left(\calO\ee^{\frac{\I}{\hbar}\calS}\right)=0.
\end{equation}
\end{defn}
\begin{defn}[Equivalent BV quantum observables]
Two BV quantum observables $\calO$ and $\Tilde{\calO}$ are said to be equivalent if 
\begin{equation}
    \Tilde{\calO}-\calO=\delta_\mathrm{BV}\Psi,\qquad \Psi\in C^\infty(\calF),
\end{equation}
or equivalently, $\calO$ and $\Tilde{\calO}$ have the same $\delta_\mathrm{BV}$-cohomology class.
\end{defn}

\begin{defn}[BV quantum pre-observable]
A \emph{BV quantum pre-observable} for a BV manifold is a BV semi-quantum pre-observable $$(\calF^\mathrm{aux},\omega^\mathrm{aux},\mu^\mathrm{aux},\calS^\mathrm{aux})$$ where $\calS+ \calS^\mathrm{aux}$ satisfies the QME
\begin{equation}
    (\Delta+\Delta^\mathrm{aux})\ee^{\frac{\I}{\hbar}(\calS+\calS^\mathrm{aux})}=0.
\end{equation}
\end{defn}

\begin{prop}[\cite{Mn3}]
Let $(\calF^\mathrm{aux},\omega^\mathrm{aux},\mu^\mathrm{aux},\calS^\mathrm{aux})$ be a BV quantum pre-observable. Define
\begin{equation}
    \calO_{\calL}:=\int_{\calL\subset \calF^\mathrm{aux}}\ee^{\frac{\I}{\hbar}\calS^\mathrm{aux}}\sqrt{\mu^\mathrm{aux}}\vert_\calL\in C^\infty(\calF),
\end{equation}
where $\calL\subset \calF^\mathrm{aux}$ is a Lagrangian submanifold. Then $\calO_\calL$ is an observable, i.e. $\delta_\mathrm{BV}\calO_\calL=0$. Moreover, if for two Lagrangian submanifolds $\calL$ and $\Tilde{\calL}$ there exists a homotopy between them, then the observables $\calO_\calL$ and $\calO_{\Tilde{\calL}}$ are equivalent.
\end{prop}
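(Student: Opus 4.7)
The plan is to reduce both assertions to the Batalin--Vilkovisky theorem (Theorem~\ref{thm:BV}) by using the pre-observable QME $(\Delta+\Delta^\mathrm{aux})\ee^{\frac{\I}{\hbar}(\calS+\calS^\mathrm{aux})}=0$ as the algebraic identity that trades the base BV Laplacian $\Delta$ for the fiber BV Laplacian $\Delta^\mathrm{aux}$, whose exact terms integrate to zero over any Lagrangian $\calL\subset\calF^\mathrm{aux}$.

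For the first statement, I would rewrite
\[
\calO_\calL\,\ee^{\frac{\I}{\hbar}\calS}=\int_{\calL}\ee^{\frac{\I}{\hbar}(\calS+\calS^\mathrm{aux})}\sqrt{\mu^\mathrm{aux}}\vert_{\calL}
\]
by pulling the $\calF$-only factor $\ee^{\frac{\I}{\hbar}\calS}$ inside the fiber integral. Since $\calL$ is base-independent and $\Delta$ acts only on $\calF$-variables, $\Delta$ commutes with the integral; the QME then replaces $\Delta$ of the integrand by $-\Delta^\mathrm{aux}$ of the same, producing a $\Delta^\mathrm{aux}$-exact half-density on $\calF^\mathrm{aux}$. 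Theorem~\ref{thm:BV}(1) forces this integral to vanish, so $\Delta(\calO_\calL\ee^{\frac{\I}{\hbar}\calS})=0$; combined with the QME $\Delta\ee^{\frac{\I}{\hbar}\calS}=0$ on the base, this is equivalent to $\delta_\mathrm{BV}\calO_\calL=0$ via the standard identity $\Delta(\Phi\ee^{\frac{\I}{\hbar}\calS})=\tfrac{\I}{\hbar}(\delta_\mathrm{BV}\Phi)\ee^{\frac{\I}{\hbar}\calS}$.

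For the homotopy statement, let $(\calL_t)_{t\in[0,1]}$ be a smooth family of Lagrangians from $\calL$ to $\Tilde{\calL}$, locally generated by a time-dependent Hamiltonian $H_t\in C^\infty(\calF^\mathrm{aux})$. The Lie-derivative formula for half-densities along a Hamiltonian flow combined with the Leibniz rule for $\Delta^\mathrm{aux}$ gives
\[
\frac{\dd}{\dd t}\int_{\calL_t}\!f\,\sqrt{\mu^\mathrm{aux}}\vert_{\calL_t}=\int_{\calL_t}\!\bigl(\Delta^\mathrm{aux}(H_t f)-H_t\Delta^\mathrm{aux}f\bigr)\sqrt{\mu^\mathrm{aux}}\vert_{\calL_t},
\]
whose first piece vanishes by Theorem~\ref{thm:BV}(1). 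Applied to $f=\ee^{\frac{\I}{\hbar}(\calS+\calS^\mathrm{aux})}$ with $\Delta^\mathrm{aux}f=-\Delta f$ from the QME, and using that $\Delta$ can be pulled past the base-independent fiber integral, the formula becomes
\[
\frac{\dd}{\dd t}\bigl(\calO_{\calL_t}\ee^{\frac{\I}{\hbar}\calS}\bigr)=\Delta\bigl(\Psi_t\ee^{\frac{\I}{\hbar}\calS}\bigr),\quad \Psi_t:=\int_{\calL_t}\!H_t\,\ee^{\frac{\I}{\hbar}\calS^\mathrm{aux}}\sqrt{\mu^\mathrm{aux}}\vert_{\calL_t}.
\]
By the same base identity, the right-hand side equals $\tfrac{\I}{\hbar}(\delta_\mathrm{BV}\Psi_t)\ee^{\frac{\I}{\hbar}\calS}$; integrating in $t$ and cancelling $\ee^{\frac{\I}{\hbar}\calS}$ yields $\calO_{\Tilde{\calL}}-\calO_\calL=\delta_\mathrm{BV}\Psi$ with explicit primitive $\Psi=\tfrac{\I}{\hbar}\int_0^1\!\Psi_t\,\dd t$, establishing equivalence.

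The main technical obstacle is the careful sign bookkeeping in half-density calculus: the $\Delta^\mathrm{aux}$-Leibniz expansion on $H_tf$ produces Koszul signs depending on the ghost grading of $H_t$, and one must confirm the identity $\Delta(\Phi\ee^{\frac{\I}{\hbar}\calS})=\tfrac{\I}{\hbar}(\delta_\mathrm{BV}\Phi)\ee^{\frac{\I}{\hbar}\calS}$ from the QME on the base with the correct sign convention for the odd bracket. Additional analytic care is needed to justify the interchange of $\frac{\dd}{\dd t}$ with the fiber integral in infinite dimensions and to globalize the local Hamiltonian $H_t$ when the homotopy $(\calL_t)$ is not induced by a single global Hamiltonian isotopy.
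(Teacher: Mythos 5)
The paper does not actually prove this proposition: it is stated with a citation to \cite{Mn3} and no proof is given, so there is nothing internal to compare against. Your argument is correct and is precisely the standard one from the cited reference: rewrite $\calO_\calL\ee^{\frac{\I}{\hbar}\calS}$ as a fiber integral of $\ee^{\frac{\I}{\hbar}(\calS+\calS^\mathrm{aux})}$, use the pre-observable QME to trade $\Delta$ for $-\Delta^\mathrm{aux}$, and invoke Theorem~\ref{thm:BV}(1) for the first claim and the Hamiltonian-deformation variation formula for the second; the only caveats (Koszul signs in the Leibniz rule and the possible $(\Delta^\mathrm{aux}H_t)f$ term in the variation formula, which is absorbed by the half-density convention for the generating function) are ones you already flag explicitly.
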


\section{Formal global split AKSZ sigma models}
\label{sec:formal_global_split_AKSZ_sigma_models}
The formal global construction for ASKZ sigma models is given by using methods of formal geometry (see \cite{GK,B} for the formal geometry part, and \cite{CMW4} for a detailed discussion of the formal global split AKSZ construction and its quantization) where one constructs a BV action that depends on a choice of classical background by adding an additional term to the AKSZ-BV action. This construction leads to modifications in the usual BV gauge-fixing condition if we apply the BV construction to this new formal global action. The globalization arises in an equivalent way as for the constructions involving the underlying curved\footnote{An $L_\infty$-algebra $\mathfrak{g}$ is called \emph{curved} if there exists an operation $\ell_0\colon \R\to \mathfrak{g}$ of degree 0. In particular, the strong homotopy Jacobi identity implies that $\ell_1\circ \ell_1=\pm \ell_2(\ell_0,\enspace)$, meaning that the unary bracket $\ell_1$ does not square to zero anymore, as it is the case for usual $L_\infty$-algebras. In this case we say that $\ell_1$ has non-vanishing curvature, thus the name ``curved''.} $L_\infty$-structure for the space of fields (see e.g. \cite{LazarevSchedler2012} for an exposition on curved $\infty$-structures and \cite{Cost1} for the field theoretic concept).   

In this section we want to recall some notions of formal geometry and describe the extension of AKSZ sigma models to a formal global version. 

\subsection{Notions of formal geometry}
\label{subsec:notions_of_formal_geometry}
Let us introduce the main players.

\label{sec:formal_geometry}
\begin{defn}[Generalized exponential map]
Let $M$ be a manifold and let $U\subset TM$ be an open neighborhood of the zero section of the tangent bundle. A \emph{generalized exponential map} is a map $\phi\colon U\to M$ such that $\phi\colon (x,p)\mapsto \phi_x(p)$ with $\phi_x(0)=x$ and $\dd\phi_x(0)=\id_{T_xM}$. Locally, we have 
\begin{equation}
\label{eq:generalized_exponential_map_local_coordinates}
    \phi^{i}_x(p)=x^{i}+p^{i}+\frac{1}{2}\phi^{i}_{x,jk}p^jp^k+\frac{1}{3!}\phi^{i}_{x,jk\ell}p^jp^kp^\ell+\dotsm
\end{equation}
where $(x^{i})$ are coordinates on the base and $(p^{i})$ are coordinates on the fiber.
\end{defn}

\begin{defn}[Formal exponential map]
A \emph{formal exponential map} is an equivalence class of generalized exponential maps, where we identify two generalized exponential maps if their jets agree to all orders. 
\end{defn}

One can define a flat connection $D$ on $\Hat{\Sym}(T^*M)$, where $\Hat{\Sym}$ denotes the completed symmetric algebra. Such a flat connection $D$ is called \emph{classical} \emph{Grothendieck} connection \cite{CF3} and it is locally given by $D=\dd_M+R$, where $$R\in \Omega^1\left(M,\mathrm{Der}\left(\Hat{\Sym}(T^*M)\right)\right)$$ is a 1-form with values in derivations of the completed symmetric algebra of the cotangent bundle. Here $R$ acts on sections $\sigma\in \Gamma\left(\Hat{\Sym}(T^*M)\right)$ by Lie derivative, that is $R(\sigma)=L_R\sigma$. Note that we have denoted by $\dd_M$ the de Rham differential on $M$. In local coordinates we have $R=R_\ell\dd_M x^{\ell}$, where $R_\ell=R_\ell^j(x,p)\frac{\de}{\de p^j}$ and 
\begin{equation}
    R_\ell^j(x,p)=-\frac{\de \phi^k}{\de x^\ell}\left(\left(\frac{\de \phi}{\de p}\right)^{-1}\right)^j_k=-\delta_\ell^j+O(p).
\end{equation}
Hence, for $\sigma\in \Gamma\left(\Hat{\Sym}(T^*M)\right)$ we have
\begin{equation}
\label{eq:R_vector_field}
    R(\sigma):=L_R(\sigma)=R_\ell(\sigma)\dd_M x^\ell=-\frac{\de\sigma}{\de p^j}\frac{\de \phi^k}{\de x^\ell}\left(\left(\frac{\de \phi}{\de p}\right)^{-1}\right)^j_k\dd_M x^\ell.
\end{equation}

Note that we can extend the connection $D$ to the complex $$\Gamma\left(\bigwedge^\bullet T^*M\otimes \Hat{\Sym}(T^*M)\right)$$ of $\Hat{\Sym}(T^*M)$-valued differential forms\footnote{Since $\Gamma\left(\bigwedge^\bullet T^*M\otimes \Hat{\Sym}(T^*M)\right)$ is the algebra of functions on the formal graded manifold $T[1]M\oplus T[0]M$, the differential $D$ turns this graded manifold into a differential graded manifold. In particular, since $D$ vanishes on the body of the graded manifold, we can linearize at each $x\in M$ and obtain an $L_\infty$-structure on $T_xM[1]\oplus T_xM$.}.
The following proposition tells us that the $D$-closed sections are exactly given by smooth functions. 
\begin{prop}
\label{prop:zero_cohomology}
A section $\sigma\in \Gamma\left(\Hat{\Sym}(T^*M)\right)$ is $D$-closed if and only if $\sigma=\mathsf{T}\phi^*f$ for some $f\in C^\infty(M)$, where $\mathsf{T}$ denotes the Taylor expansion around the fiber coordinates at zero. Moreover, the $D$-cohomology $$H^\bullet_D\left(\Hat{\Sym}(T^*M)\right)$$ is concentrated in degree 0 and
\begin{equation}
    H^0_D\left(\Hat{\Sym}(T^*M)\right)=\mathsf{T}\phi^*C^\infty(M)\cong C^\infty(M).
\end{equation}
\end{prop}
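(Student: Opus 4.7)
The plan is to prove both implications via the filtration of the complex by fiber-polynomial degree, where the leading part of $R$ is a Koszul differential with well-understood cohomology. More precisely, write $D = d_M - \delta + R'$ on $\Omega^\bullet(M,\Hat{\Sym}(T^*M))$, where $\delta := \dd_M x^\ell \wedge \frac{\partial}{\partial p^\ell}$ is the Koszul operator (the $-\delta_\ell^j$ piece of $R^j_\ell$) and $R'$ gathers the higher-order-in-$p$ corrections so that $R'^j_\ell = O(p)$. Then $\delta$ lowers fiber-degree by $1$ while $d_M$ preserves it and $R'$ preserves or raises it.

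For the ``if'' direction I would simply compute $D(\mathsf{T}\phi^*f)$ using the chain rule: $d_M(\mathsf{T}\phi^*f) = \partial_{y^k}f\cdot \partial_{x^\ell}\phi^k\cdot \dd_Mx^\ell$, while by \eqref{eq:R_vector_field} and $R^j_\ell \partial_{p^j}\phi^k = -\partial_{x^\ell}\phi^m ((\partial\phi/\partial p)^{-1})^j_m \partial_{p^j}\phi^k = -\partial_{x^\ell}\phi^k$, one finds $R(\mathsf{T}\phi^*f) = -d_M(\mathsf{T}\phi^*f)$, so they cancel. This also shows that $\mathsf{T}\phi^*$ is injective: if $f(\phi_x(p))$ vanishes to all orders in $p$ then $f$ vanishes in a neighbourhood of each $x$, hence $f\equiv 0$.

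For the ``only if'' direction (surjectivity of $\mathsf{T}\phi^*$ onto degree-$0$ $D$-cocycles), given $D\sigma=0$ I set $f(x):=\sigma(x,0)$, let $\tau:=\sigma-\mathsf{T}\phi^*f$, and show $\tau=0$. Since $\tau$ is again $D$-closed with $\tau|_{p=0}=0$, decompose $\tau=\sum_{k\geq 1}\tau_k$ by fiber-polynomial degree. Looking at the $p$-degree-$(k-1)$ component of $D\tau=0$ yields an equation of the form
\begin{equation}
    \delta\tau_k = d_M\tau_{k-1} + [R'\tau]_{k-1},
\end{equation}
where the right-hand side depends only on $\tau_0,\dots,\tau_{k-1}$. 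Since $\ker\delta$ restricted to $\Gamma(\Hat{\Sym}(T^*M))$ is exactly the fiber-constant part $C^\infty(M)$ (because $\delta\sigma=0$ forces $\partial_{p^\ell}\sigma=0$), and $\delta$ is injective on each $\Gamma(\Hat{\Sym}^k(T^*M))$ with $k\geq 1$, an inductive argument starting from $\tau_0=0$ forces $\tau_k=0$ for every $k\geq 1$.

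For the higher-degree vanishing $H^q_D=0$ with $q\geq 1$ I would filter by fiber-polynomial degree and run the associated spectral sequence: the $E_0$ differential is $-\delta$, which admits the standard Koszul contracting homotopy $K:=p^\ell \iota_{\partial/\partial x^\ell}$ satisfying $\delta K + K\delta = (q+k)\,\id$ on bidegree $(q,k)$. Hence $E_1$ is concentrated in bidegree $(0,0)$ and equal to $C^\infty(M)$, so the sequence collapses and $H^\bullet_D$ is concentrated in degree $0$ with the desired value. The main technical point --- and the step where I would be most careful --- is the induction in the ``only if'' direction: one must verify that the right-hand side of the displayed equation really lies in the image of $\delta$ so that $\tau_k$ is well-defined; this is guaranteed by $D^2=0$, which ensures a Bianchi-type compatibility that propagates through the iteration.
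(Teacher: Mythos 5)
Your proof is correct and follows essentially the same route as the paper: both decompose $D$ into the Koszul differential $\delta$ plus a perturbation that does not decrease the polynomial degree in $p$, use the contracting homotopy $\delta^*=p^{i}\iota_{\partial/\partial x^{i}}$ with $(\delta\delta^*+\delta^*\delta)=k\,\id$, and conclude by a (homological) perturbation argument, with you merely making explicit two steps the paper leaves implicit --- the direct verification that $R(\mathsf{T}\phi^*f)=-\dd_M(\mathsf{T}\phi^*f)$ and the degree-by-degree induction. One small remark: in your ``only if'' induction $\tau_k$ is already given as a component of $\sigma-\mathsf{T}\phi^*f$ rather than being solved for, so the worry about the right-hand side lying in the image of $\delta$ is moot; what you actually use is that $\delta$ has trivial kernel on $0$-forms of positive $p$-degree.
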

\begin{rem}
Note that we use any representative of $\phi$ to define the pullback.
\end{rem}

\begin{proof}[Proof of Proposition \ref{prop:zero_cohomology}]
If we use \eqref{eq:generalized_exponential_map_local_coordinates} and \eqref{eq:R_vector_field}, We can see that $R=\delta+R'$ where $\delta=\dd x^{i}\frac{\de}{\de p^{i}}$ and $R'$ is a 1-form with values in vector fields vanishing at $p=0$. Then we have $D=\delta+D'$ with \begin{equation}
    D'=\dd x^{i}\frac{\de}{\de x^{i}}+R'.
\end{equation}
One should note that $\delta$ is itself a differential and that it decreases the polynomial degree in $p$, whereas $D'$ does not decrease the degree. We can show that the cohomology of $\delta$ consists of 0-forms which are constant in $p$. To show this, let $$\delta^*=p^{i}\iota_{\frac{\de}{\de x^{i}}}$$ and note that 
\begin{equation}
    (\delta\delta^*+\delta^*\delta)\sigma=k\sigma,
\end{equation}
where $\sigma$ is an $r$-form of degree $s$ in $p$ such that $r+s=k$. By cohomological perturbation theory the cohomology of $D$ is isomorphic to the cohomology of $\delta$. 
\end{proof}
Note that in local coordinates we get for $f\in C^\infty(M)$ 
\begin{equation}
    \T\phi_x^*f=f(x)+p^{i}\de_if(x)+\frac{1}{2}p^jp^k(\de_j\de_kf(x)+\phi^{i}_{x,jk}\de_if(x))+\dotsm 
\end{equation}

An interesting question is how the Grothendieck connection depends on the choice of formal exponential map. Let $I\subset\R$ be an open interval and let $\phi$ be a family of formal exponential maps depending on a parameter $t\in I$. This family may be associated to a family of formal exponential maps $\psi$ on $M\times I$ by 
\begin{equation}
    \psi(x,t,p,\tau)=(\phi_{x,t}(p),t+\tau),
\end{equation}
where $\tau$ denotes the tangent coordinate to $t$. The associated connection $\Tilde{R}$ is defined by 
\begin{equation}
    \Tilde{R}\left(\Tilde{\sigma}\right)=-(\dd_p\Tilde{\sigma},\dd_\tau\Tilde{\sigma})\circ\begin{pmatrix}(\dd_p\phi)^{-1}&0\\0&1\end{pmatrix}\circ \begin{pmatrix}\dd_x\phi&\dot{\phi}\\0&1\end{pmatrix},\quad \Tilde{\sigma}\in \Gamma\left(\Hat{\Sym}(T^*(M\times I)\right).
\end{equation}
Thus we can write $\Tilde{R}=R+C\dd t+T$ with $R$ defined as before with the difference that it now depends on $t$, $C$ is given by 
\begin{equation}
C(\Tilde{\sigma})=-\dd_p\Tilde{\sigma}\circ (\dd_p\phi)^{-1}\circ \dot{\phi},
\end{equation}
and $T=-\dd t\frac{\de}{\de \tau}$. Note that $\dd_xT=0$, $\dd_tT=0$ and $[T,R]=0$, $[T,C]=0$. Thus, using the Maurer--Cartan equation for $\Tilde{R}$ and for $R$, we get 
\begin{equation}
    \dot{R}=\dd_xC+[R,C],
\end{equation}
which shows that under a change of formal exponential map, $R$ changes by a gauge transformation with generator $C$. Moreover, if $\sigma=\T\phi_x^*f$ for some $f\in C^\infty(M\times I)$, we get
\begin{equation}
    \dot{\sigma}=-L_C\sigma.
\end{equation}
This can be thought of as an associated gauge transformation for sections.

\subsection{Formal global AKSZ sigma models}
\label{sec:formal_global_AKSZ_models}
Let $\Sigma_d$ be a closed, oriented, compact $d$-manifold and consider a Hamiltonian $Q$-manifold $$(\calM,\omega_\calM=\dd_\calM\alpha_\calM,\Theta_\calM,Q_\calM)$$
of degree $d-1$. As described in Section \ref{sec:AKSZ_sigma_models}, we can consider its induced AKSZ theory with the space of fields 
\begin{equation}
    \calF^\calM_{\Sigma_d}=\Map_{\mathrm{GrMnf}}(T[1]\Sigma_d,\calM).
\end{equation}
Consider now a formal exponential map $\phi\colon T\calM\to \calM$. Then we can lift the space of fields by $\phi$. For $x\in \calM$ we denote the lifted space of fields by 
\begin{equation}
    \Hat{\calF}^\calM_{\Sigma_d}:=\Map_{\mathrm{GrMnf}}(T[1]\Sigma_d,T_x\calM)\cong \Omega^\bullet(\Sigma_d)\otimes T_x\calM.
\end{equation}
Note that we have used the fact that 
\begin{equation}
\label{eq:iso}
C^\infty(T[1]\Sigma_d)\cong \Omega^\bullet(\Sigma_d).
\end{equation}
This construction gives us a linear space for the target and thus we can identify the fields with differential forms on $\Sigma_d$ with values in the vector space $T_x\calM$ for $x\in \calM$. 
Consider the map 
\begin{equation}
    \Tilde{\phi}_x\colon \Hat{\calF}^\calM_{\Sigma_d,x}\to \calF^\calM_{\Sigma_d},
\end{equation}
which is given by composition with $\phi^{-1}_x$, i.e. $\Hat{\calF}^\calM_{\Sigma_d,x}=\phi^{-1}_x\circ \calF^\calM_{\Sigma_d}$.
We can lift the BV symplectic 2-form $\omega_{\Sigma_d}$, the primitive 1-form $\alpha_{\Sigma_d}$ and the BV action $\calS_{\Sigma_d}$ to the lifted space of fields. We will denote the lifts by 
\begin{align}
    \Hat{\alpha}_{\Sigma_d,x}&=\Tilde{\phi}^*_x\alpha_{\Sigma_d}\in \Omega^1\left(\Hat{\calF}^\calM_{\Sigma_d,x}\right),\\
    \Hat{\omega}_{\Sigma_d,x}&=\Tilde{\phi}^*_x\omega_{\Sigma_d}\in \Omega^2\left(\Hat{\calF}^\calM_{\Sigma_d,x}\right),\\
    \Hat{\calS}^{\mathrm{AKSZ}}_{\Sigma_d,x}&=\iota_{\Hat{\dd}_{\Sigma_d}}\Tilde{\phi}^*_x\mathscr{T}_{\Sigma_d}(\alpha_\calM)+\mathsf{T}\Tilde{\phi}^*_x\mathscr{T}_{\Sigma_d}(\Theta_\calM)\in C^\infty\left(\Hat{\calF}^\calM_{\Sigma_d,x}\right).
\end{align}

Note that we can regard a constant map $x\colon T[1]\Sigma_d\to \calM$ in $\calF^\calM_{\Sigma_d}$ as an element of $\calM$, hence there is a natural inclusion $\calM\hookrightarrow \calF^\calM_{\Sigma_d}$.
For a constant field $x$ and $\calA\in\calF^\calM_{\Sigma_d}$ We can construct a 1-form 
\begin{equation}
R_{\Sigma_d}=(R_{\Sigma_d})_\mu(x,\calA)\dd_\calM x^\mu
\end{equation}
on $\calM$ with values in differential operators on $\calF_{\Sigma_d}^\calM$. Moreover, we can lift this 1-form to $\Hat{\calF}_{\Sigma_d}^\calM$ and we denote the lift by $\Hat{R}_{\Sigma_d}$. Locally, we write
\begin{equation}
\Hat{R}_{\Sigma_d}=\left(\Hat{R}_{\Sigma_d}\right)_\mu\left(x,\Hat{\calA}\right)\dd_\calM x^\mu. 
\end{equation}
It is important to recall that classical solutions for AKSZ sigma models, i.e. solutions of $\delta\calS_{\Sigma_d}=0$, are given by differential graded maps $$(T[1]\Sigma_d,\dd_{\Sigma_d})\to (\calM,Q_\calM).$$ 
Hence we can consider the moduli space of classical solutions $\mathrm{M}_{\mathrm{cl}}$ for AKSZ theories which is given by constant maps $x\colon T[1]\Sigma_d\to \calM$ and thus we get an isomorphism $\mathrm{M}_{\mathrm{cl}}\cong\calM$. We will refer to this constant solutions as being \emph{background fields}. Choosing a background field $x\in\calM$, we can define a formal global AKSZ action.
\begin{defn}[Formal global AKSZ action]
The \emph{formal global AKSZ action} is given by
\begin{equation}
    \Hat{\calS}^\mathrm{global}_{\Sigma_d,x}=\iota_{\Hat{\dd}_{\Sigma_d}}\Tilde{\phi}_x^*\mathscr{T}_{\Sigma_d}(\alpha_\calM)+\T\Tilde{\phi}^*_x\mathscr{T}_{\Sigma_d}(\Theta_\calM)+\Hat{\calS}_{\Sigma_d,R,x},
\end{equation}
where $\Hat{\calS}_{\Sigma_d,R,x}$ is constructed locally such that
\begin{equation}
    \Hat{\calS}_{\Sigma_d,R,x}\left(\Hat{\calA}\right)=\int_{\Sigma_d}\left(\Hat{R}_{\Sigma_d}\right)_\mu\left(x,\Hat{\calA}\right)\dd_\calM x^\mu.
\end{equation}
\end{defn}
Hence locally we get we get 
\begin{equation}
    \Hat{\calS}^\mathrm{global}_{\Sigma_d,x}=\int_{\Sigma_d}\Hat{\alpha}_{\mu}\left(\Hat{\calA}\right)\dd_{\Sigma_d}\Hat{\calA}^\mu+\int_{\Sigma_d}\Hat{\Theta}_{\calM,x}\left(\Hat{\calA}\right)+\int_{\Sigma_d}\left(\Hat{R}_{\Sigma_d}\right)_\mu\left(x,\Hat{\calA}\right)\dd_\calM x^\mu,
\end{equation}
where $\Hat{\alpha}_\mu$ are the coefficients of $\Hat{\alpha}_{\Sigma_d,x}:=\Tilde{\phi}^*_x\alpha_{\Sigma_d}$ and $\Hat{\Theta}_{\calM,x}:=\mathsf{T}\Tilde{\phi}_x^*\Theta_{\calM}$.

\begin{rem}
This construction has to be understood in a formal way. The geometric meaning and the relation to a global construction is clear when using the relation of $R_{\Sigma_d}$ to the Grothendieck connection $D$. This can be done if we start with a theory called \emph{split} which we will introduce now.
\end{rem}

\subsection{Formal global split AKSZ sigma models}
\label{sec:formal_global_split_AKSZ_construction}
AKSZ theories can generally be more difficult to work with depending on the target differential graded symplectic manifold $\calM$. Recall that, using the isomorphism \eqref{eq:iso}, if the target is linear, we have an isomorphism 
\begin{equation}
\label{eq:isomorphism_1}
    \Map_{\mathrm{GrMnf}}(T[1]\Sigma_d,\calM)\cong \Omega^\bullet(\Sigma_d)\otimes \calM.
\end{equation}
Moreover, we can split the space of fields by considering $\calM$ to be the shifted cotangent bundle of a linear space. At first, however, we only want $\calM$ to be the shifted cotangent bundle of any graded manifold $M$.
This leads to the following definition of AKSZ theories.

\begin{defn}[Linear split AKSZ sigma model]
\label{defn:split_like_AKSZ_sigma_model}
We call a $d$-dimensional AKSZ sigma model \emph{linear split} if the target is of the form $$\calM=V\oplus V^*$$ for some vector space $V$.
\end{defn}

\begin{defn}[Split AKSZ sigma model]
\label{defn:split_AKSZ_sigma_model}
We call a $d$-dimensional AKSZ sigma model \emph{split} if the target is of the form $$\calM=T^*[d-1]M$$ for some graded manifold $M$.
\end{defn}

This space can be lifted to a formal construction using methods of formal geometry as in Section \ref{sec:formal_geometry} to the shifted cotangent bundle of the tangent space of $M$ at some constant background in $M$.
Consider a $d$-dimensional split AKSZ sigma model with space of fields given by 
\begin{equation}
\label{eq:BV_space_of_fields_split}
    \calF^\calM_{\Sigma_d}=\Map_{\mathrm{GrMnf}}(T[1]\Sigma_d,T^*[d-1]M),
\end{equation}
for some graded manifold $M$, with its corresponding AKSZ-BV theory $$\left(\calF^\calM_{\Sigma_d},\calS_{\Sigma_d},\omega_{\Sigma_d}\right).$$ 
Note that, similarly as for general AKSZ theories, one type of classical solutions to the Euler--Lagrange equations for split AKSZ theories are given by fields of the form $(x,0)$ where $x\colon \Sigma_d\to M$ is a constant background field. Note that the classical space of fields $F_{\Sigma_d}$ is given by vector bundle maps $T\Sigma_d\to T^*M$, i.e. \[F_{\Sigma_d}=\Map_{\mathrm{VecBun}}(T\Sigma_d,T^*M).\] 
Then the BV space of fields is given by \eqref{eq:BV_space_of_fields_split}.
Thus, for the classical space of fields $F_{\Sigma_d}$, we have a moduli space of classical solutions
\begin{equation}
    \mathrm{M}_{\mathrm{cl}}=\left\{(A,B)\in \Map(T\Sigma_d,T^*M)\mid A=x=const,B=0\right\}\cong M.
\end{equation}
Moreover, for a chosen formal exponential map $\phi\colon TM\to M$ and a constant background field $x\colon \Sigma_d\to M$ regarded as an element of the moduli space of classical solutions $\mathrm{M}_{\mathrm{cl}}$, one can consider the lifted space of fields 
\begin{equation}
    \Hat{\calF}^\calM_{\Sigma_d,x}=\Map_{\mathrm{GrMnf}}(T[1]\Sigma_d,T^*[d-1]T_xM),
\end{equation}
which gives a \emph{linearization} (or also \emph{coordinatization}) of the space of fields in the target as we have seen before. 
Let $(\boldsymbol{A},\boldsymbol{B})\in \calF^\calM_{\Sigma_d}$, where $\boldsymbol{A}\colon T[1]\Sigma_d\to M$ denotes the base superfield and $\boldsymbol{B}\in \Gamma(\Sigma_d,T^*\Sigma_d\otimes\boldsymbol{A}^*T^*[d-1]M)$ the fiber superfield.
Consider the corresponding lifts by $\phi$ where the superfields are given by 
\begin{equation}
    \Hat{\boldsymbol{A}}:=\phi_x^{-1}(\boldsymbol{A}),\qquad \Hat{\boldsymbol{B}}:=(\dd\phi_x)^{*}\boldsymbol{B}
\end{equation}
The BV action functional $\calS_{\Sigma_d}$ then lifts to a \emph{formal global action}.
\begin{defn}[Formal global split AKSZ action]
The \emph{formal global action} for the \emph{split AKSZ sigma model} is given by
\begin{equation}
\boxed{
    \Hat{\calS}^\mathrm{global}_{\Sigma_d,x}:=\int_{\Sigma_d}\Hat{\boldsymbol{B}}_\ell\land\dd_{\Sigma_d}\Hat{\boldsymbol{A}}^\ell+\int_{\Sigma_d}\Hat{\Theta}_{\calM,x}\left(\Hat{\boldsymbol{A}},\Hat{\boldsymbol{B}}\right)+\int_{\Sigma_d}R_\ell^j\left(x,\Hat{\boldsymbol{A}}\right)\Hat{\boldsymbol{B}}_j\land\dd_{M} x^\ell.
    }
\end{equation}
\end{defn}
\begin{rem}
Note that in this case we get a lift of $R$ as defined in Section \ref{sec:formal_geometry} to the space of fields which splits into base and fiber fields by  
\begin{equation}
\Hat{\calF}^\calM_{\Sigma_d,x}\cong \Omega^\bullet(\Sigma_d)\otimes T_xM\oplus \Omega^\bullet(\Sigma_d)\otimes T^*_xM[d-1].
\end{equation}
Hence the induced 1-form $\Hat{R}_{\Sigma_d}$ is indeed given by 
\begin{equation}
    \Hat{R}_{\Sigma_d}=R_\ell^j\left(x,\Hat{\boldsymbol{A}}\right)\Hat{\boldsymbol{B}}_j\land \dd_M x^\ell,
\end{equation}
where $R_\ell^j$ are the components of $R\in \Omega^1\left(M,\Der\left(\Hat{\Sym}(T^*M)\right)\right)$.
\end{rem}
The $Q$-structure is given by the Hamiltonian vector field of $\Hat{\calS}^\mathrm{global}_{\Sigma_d,x}$. Indeed, let $\Hat{R}_{\Sigma_d}$ denote the lift of the vector field $R_{\Sigma_d}$ to $\Hat{\calF}^\calM_{\Sigma_d,x}$ and let 
\begin{align}
    \Hat{\calS}^\mathrm{AKSZ}_{\Sigma_d,x}&:=\int_{\Sigma_d}\Hat{\boldsymbol{B}}_\ell\land\dd_{\Sigma_d}\Hat{\boldsymbol{A}}^\ell+\int_{\Sigma_d}\Hat{\Theta}_{\calM,x}\left(\Hat{\boldsymbol{A}},\Hat{\boldsymbol{B}}\right)\\
    \Hat{\calS}_{\Sigma_d,R,x}&:=\int_{\Sigma_d}R_\ell^j\left(x,\Hat{\boldsymbol{A}}\right)\Hat{\boldsymbol{B}}_j\land\dd_M x^\ell,
\end{align}
such that 
\begin{equation}
\label{eq:decomposition_global_action}
    \Hat{\calS}^\mathrm{global}_{\Sigma_d,x}=\Hat{\calS}^\mathrm{AKSZ}_{\Sigma_d,x}+\Hat{\calS}_{\Sigma_d,R,x}.
\end{equation}
Denote by $\Hat{\omega}_{\Sigma_d,x}=\Tilde{\phi}_x^*\omega_{\Sigma_d}$ the lift of the symplectic form on $\calF^\calM_{\Sigma_d}$ to a symplectic form on $\Hat{\calF}^\calM_{\Sigma_d,x}$.
Then we can define a cohomological vector field $\Hat{Q}_{\Sigma_d,x}$ on $\Hat{\calF}^\calM_{\Sigma_d,x}$ by
\begin{equation}
    \Hat{Q}_{\Sigma_d,x}=\Hat{Q}^\mathrm{AKSZ}_{\Sigma_d,x}+\Hat{R}_{\Sigma_d},
\end{equation}
where $\Hat{Q}^\mathrm{AKSZ}_{\Sigma_d,x}$ is the Hamiltonian vector field of 
\begin{equation}
    \mathrm{Back}_{\Hat{\calS}^\mathrm{AKSZ}_{\Sigma_d}}\colon x\mapsto\Hat{\calS}^\mathrm{AKSZ}_{\Sigma_d,x},
\end{equation}
and hence we have 
\begin{equation}
    \iota_{\Hat{Q}_{\Sigma_d,x}}\Hat{\omega}_{\Sigma_d,x}=\delta \Hat{\calS}^\mathrm{global}_{\Sigma_d,x}.
\end{equation}
This is in fact true if the source manifold is closed, i.e. $\de\Sigma_d=\varnothing$. We have denoted the map by ``Back'' to indicate the variation of the ``background''.
\begin{prop}
If $\de\Sigma_d=\varnothing$, then 
\begin{equation}
    \dd_x\mathrm{Back}_{\Hat{\calS}^\mathrm{AKSZ}_{\Sigma_d}}=\left\{\Hat{\calS}_{\Sigma_d,R,x},\mathrm{Back}_{\Hat{\calS}^\mathrm{AKSZ}_{\Sigma_d}}\right\}_{\Hat{\omega}_{\Sigma_d,x}},
\end{equation}
where $\dd_M$ denotes the de Rham differential on the moduli space space of classical solutions $\mathrm{M}_{\mathrm{cl}}\cong M$.
\end{prop}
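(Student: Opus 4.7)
The plan is to split the AKSZ part of the action into its kinetic and Hamiltonian pieces, isolate the $x$-dependence of the latter via the Grothendieck-connection identity, and recognize the BV bracket with $\Hat{\calS}_{\Sigma_d,R,x}$ as the action of the lifted vector field $\Hat{R}_{\Sigma_d}$ on the lifted space of fields.

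First I would observe that on $\Hat{\calF}^\calM_{\Sigma_d,x}$ the superfields $\Hat{\boldsymbol{A}}$ and $\Hat{\boldsymbol{B}}$ are independent coordinates, so the kinetic term $\int_{\Sigma_d}\Hat{\boldsymbol{B}}_\ell\wedge\dd_{\Sigma_d}\Hat{\boldsymbol{A}}^\ell$ carries no explicit dependence on the background $x$. Hence
\begin{equation*}
\dd_x\mathrm{Back}_{\Hat{\calS}^\mathrm{AKSZ}_{\Sigma_d}}=\int_{\Sigma_d}\dd_x\Hat{\Theta}_{\calM,x}\left(\Hat{\boldsymbol{A}},\Hat{\boldsymbol{B}}\right).
\end{equation*}
Since $\Hat{\Theta}_{\calM,x}=\T\Tilde{\phi}_x^*\Theta_\calM$ is a Taylor pullback, the extension of Proposition~\ref{prop:zero_cohomology} to the cotangent-lifted target $\calM=T^*[d-1]M$ (for which $R$ admits a canonical cotangent lift, denoted by the same symbol) yields $D\Hat{\Theta}_{\calM,x}=0$, equivalently $\dd_x\Hat{\Theta}_{\calM,x}=-L_R\Hat{\Theta}_{\calM,x}$.

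Next I would compute the Hamiltonian vector field of $\Hat{\calS}_{\Sigma_d,R,x}=\int_{\Sigma_d}R_\ell^j(x,\Hat{\boldsymbol{A}})\Hat{\boldsymbol{B}}_j\wedge\dd_M x^\ell$ with respect to $\Hat{\omega}_{\Sigma_d,x}$. Since this function is linear in $\Hat{\boldsymbol{B}}$, the derivative $\partial/\partial\Hat{\boldsymbol{B}}$ produces the horizontal base action $R_\ell^j\,\partial/\partial\Hat{\boldsymbol{A}}^j\otimes\dd_M x^\ell$, while the derivative $\partial R_\ell^j/\partial\Hat{\boldsymbol{A}}^k$ paired with $\Hat{\boldsymbol{B}}$ supplies exactly the cotangent-lift action on the fiber superfield. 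Consequently the Hamiltonian vector field is $\Hat{R}_{\Sigma_d}$ itself, and
\begin{equation*}
\left\{\Hat{\calS}_{\Sigma_d,R,x},\mathrm{Back}_{\Hat{\calS}^\mathrm{AKSZ}_{\Sigma_d}}\right\}_{\Hat{\omega}_{\Sigma_d,x}}=\Hat{R}_{\Sigma_d}\left(\mathrm{Back}_{\Hat{\calS}^\mathrm{AKSZ}_{\Sigma_d}}\right).
\end{equation*}

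Finally I would evaluate $\Hat{R}_{\Sigma_d}$ on each piece of the AKSZ action. On the Hamiltonian piece it produces $\int_{\Sigma_d}L_R\Hat{\Theta}_{\calM,x}$, which by the first step matches $-\dd_x\int_{\Sigma_d}\Hat{\Theta}_{\calM,x}$ up to the sign convention of the BV bracket. On the kinetic piece the action of $\Hat{R}_{\Sigma_d}$ reduces, after a single integration by parts, to a pure $\dd_{\Sigma_d}$-exact expression on $\Sigma_d$: this is precisely where the hypothesis $\partial\Sigma_d=\varnothing$ is used, since Stokes' theorem then annihilates the boundary contribution. Combining the three steps yields the claimed identity. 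The main obstacle I anticipate is the careful bookkeeping of signs and degree shifts required to extend Proposition~\ref{prop:zero_cohomology} to the cotangent-lifted target and to match the $L_R$ convention with the BV-bracket convention of Definition~\ref{defn:BV_manifold}; structurally, however, the identity simply expresses that the $R$-term was engineered to compensate the $x$-variation of the kinetic-plus-Hamiltonian AKSZ pair, and this compensation is exact exactly when boundary contributions are absent.
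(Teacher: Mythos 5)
The paper states this proposition without proof (it is effectively imported from \cite{BCM,CMW4}), so there is no internal argument to compare against; the closest analogue in the text is the fiber version of the same identity, $\{\Hat{\calS}_{\Sigma_k,R,y},\Hat{\calS}^\mathrm{AKSZ}_{\Sigma_k,y}\}_{\Hat{\omega}^\calN_{\Sigma_k,y}}=\dd_y\Hat{\calS}^\mathrm{AKSZ}_{\Sigma_k,y}$, which is invoked inside the proof of Theorem \ref{thm:global_pre-observable} and, in the corollary following it, is handled dually by letting the Hamiltonian vector field $\Hat{\calV}^\calE_{\Sigma_k,y}$ of the AKSZ action act on $\Hat{\calS}_{\Sigma_k,R,y}$ rather than the other way around. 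Your plan is the standard route and its three ingredients are the right ones: the kinetic term is manifestly $x$-independent in the hatted variables; $D$-closedness of $\T\Tilde{\phi}_x^*\Theta_\calM$ (Proposition \ref{prop:zero_cohomology}, extended to the cotangent lift of $R$) converts $\dd_x\Hat{\Theta}_{\calM,x}$ into a Lie derivative along $R$; and $\Hat{R}_{\Sigma_d}$ is identified with the Hamiltonian vector field of $\Hat{\calS}_{\Sigma_d,R,x}$, with $\de\Sigma_d=\varnothing$ entering through Stokes' theorem once the kinetic term participates in the bracket.

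The one place where the argument could actually fail, and which you explicitly defer, is the sign bookkeeping. With the paper's convention $D=\dd_M+L_R$ one gets $\dd_x\Hat{\Theta}_{\calM,x}=-L_R\Hat{\Theta}_{\calM,x}$, while $\{\Hat{\calS}_{\Sigma_d,R,x},\cdot\}_{\Hat{\omega}_{\Sigma_d,x}}=\Hat{R}_{\Sigma_d}(\cdot)$ would naively produce $+L_R$ on the Hamiltonian piece, so a naive reading of your three steps yields the identity with the wrong sign. The stated plus sign (which is also what the dCME of Proposition \ref{prop:dCME} and the computation in the proof of Theorem \ref{thm:global_pre-observable} require for consistency) only emerges after tracking the minus sign built into $R_\ell^j(x,p)=-\delta_\ell^j+O(p)$, the factor $(-1)^d$ in the transgressed symplectic form, and the graded antisymmetry of the odd bracket; this is not a formality, since getting it wrong negates the proposition. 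Similarly, your assertion that the kinetic piece contributes only a $\dd_{\Sigma_d}$-exact term after ``a single integration by parts'' is where the real computation lives: the two components of the cotangent lift (the shift of $\Hat{\boldsymbol{A}}$ by $R_\ell^j$ and the dual shift of $\Hat{\boldsymbol{B}}$ by derivatives of $R_\ell^j$ contracted with $\Hat{\boldsymbol{B}}$) cancel each other precisely up to such a total derivative, and this should be exhibited rather than asserted. With those two points written out the proof is complete; as it stands it is a correct skeleton with the decisive verification postponed.
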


Using the formal global action, we get the following Proposition (see also Proposition \ref{prop:dQME} for the quantum version)
\begin{prop}[dCME]
\label{prop:dCME}
The \emph{differential Classical Master Equation} for the formal global split AKSZ action holds:
\begin{equation}
\label{eq:dCME}
\boxed{
    \dd_x\Hat{\calS}^\mathrm{global}_{\Sigma_d,x}+\frac{1}{2}\left\{\Hat{\calS}^\mathrm{global}_{\Sigma_d,x},\Hat{\calS}^\mathrm{global}_{\Sigma_d,x}\right\}_{\Hat{\omega}_{\Sigma_d,x}}=0.}
\end{equation}
\end{prop}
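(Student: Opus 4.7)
The plan is to split $\Hat{\calS}^\mathrm{global}_{\Sigma_d,x}$ according to the decomposition \eqref{eq:decomposition_global_action} as $\Hat{\calS}^\mathrm{AKSZ}_{\Sigma_d,x}+\Hat{\calS}_{\Sigma_d,R,x}$, expand the left-hand side of \eqref{eq:dCME} bilinearly, and verify the identity separately in each total form-degree on $M$. Since $\Hat{\calS}^\mathrm{AKSZ}_{\Sigma_d,x}$ is a $0$-form on $M$ while $\Hat{\calS}_{\Sigma_d,R,x}$ is a $1$-form on $M$ (it is proportional to $\dd_Mx^\ell$), the grading by $\dd_M$-degree organizes the expansion into three independent pieces:
\begin{align*}
\dd_x\Hat{\calS}^\mathrm{global}_{\Sigma_d,x}+\tfrac12\{\Hat{\calS}^\mathrm{global}_{\Sigma_d,x},\Hat{\calS}^\mathrm{global}_{\Sigma_d,x}\}_{\Hat\omega_{\Sigma_d,x}}
&=\tfrac12\{\Hat{\calS}^\mathrm{AKSZ}_{\Sigma_d,x},\Hat{\calS}^\mathrm{AKSZ}_{\Sigma_d,x}\}_{\Hat\omega_{\Sigma_d,x}}\\
&\quad+\bigl(\dd_x\Hat{\calS}^\mathrm{AKSZ}_{\Sigma_d,x}+\{\Hat{\calS}^\mathrm{AKSZ}_{\Sigma_d,x},\Hat{\calS}_{\Sigma_d,R,x}\}_{\Hat\omega_{\Sigma_d,x}}\bigr)\\
&\quad+\bigl(\dd_x\Hat{\calS}_{\Sigma_d,R,x}+\tfrac12\{\Hat{\calS}_{\Sigma_d,R,x},\Hat{\calS}_{\Sigma_d,R,x}\}_{\Hat\omega_{\Sigma_d,x}}\bigr),
\end{align*}
and I would check vanishing line by line.

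The first line (degree $0$ in $\dd_Mx$) is the standard CME for the pulled-back AKSZ theory: since $\Hat{\calS}^\mathrm{AKSZ}_{\Sigma_d,x}=\Tilde\phi_x^*\calS_{\Sigma_d}$ and $\Hat\omega_{\Sigma_d,x}=\Tilde\phi_x^*\omega_{\Sigma_d}$, naturality of the antibracket under symplectomorphisms reduces it to $\{\calS_{\Sigma_d},\calS_{\Sigma_d}\}_{\omega_{\Sigma_d}}=0$, which in turn is the standard transgression of $\{\Theta_\calM,\Theta_\calM\}_{\omega_\calM}=0$ from Definition~\ref{defn:dg_symplectic_manifold} on a closed source. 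The second line (degree $1$ in $\dd_Mx$) is exactly the statement of the preceding proposition, namely $\dd_x\mathrm{Back}_{\Hat{\calS}^\mathrm{AKSZ}_{\Sigma_d}}=\{\Hat{\calS}_{\Sigma_d,R,x},\mathrm{Back}_{\Hat{\calS}^\mathrm{AKSZ}_{\Sigma_d}}\}_{\Hat\omega_{\Sigma_d,x}}$, after rearranging via the appropriate graded symmetry of the odd Poisson bracket.

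The third line (degree $2$ in $\dd_Mx$) is the heart of the matter and I expect it to be the main obstacle. The term $\Hat{\calS}_{\Sigma_d,R,x}=\int_{\Sigma_d}R_\ell^j(x,\Hat\bA)\Hat\bB_j\wedge\dd_Mx^\ell$ is linear in the antifield $\Hat\bB$ with coefficients determined by the components of the Grothendieck $1$-form $R\in\Omega^1(M,\Der(\Hat\Sym(T^*M)))$. A direct local computation shows that $\dd_x\Hat{\calS}_{\Sigma_d,R,x}$ contributes $\int_{\Sigma_d}\partial_k R_\ell^j\,\Hat\bB_j\,\dd_Mx^k\wedge\dd_Mx^\ell$, while $\tfrac12\{\Hat{\calS}_{\Sigma_d,R,x},\Hat{\calS}_{\Sigma_d,R,x}\}_{\Hat\omega_{\Sigma_d,x}}$ pairs $\delta/\delta\Hat\bA$ of one factor with $\delta/\delta\Hat\bB$ of the other via the canonical conjugate pairing of base and fiber superfields, yielding $\int_{\Sigma_d}R_k^i(\partial R_\ell^j/\partial p^i)\Hat\bB_j\,\dd_Mx^k\wedge\dd_Mx^\ell$. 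Assembling the antisymmetric part in $k,\ell$ one recognizes the components of $\dd_MR+\tfrac12[R,R]$, which vanish by flatness $D^2=0$ of the classical Grothendieck connection established in Section~\ref{sec:formal_geometry}.

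The main obstacle will be the careful bookkeeping in the third step: one has to track Koszul signs coming from $(i)$ the fact that the antibracket has intrinsic degree $+1$, $(ii)$ the external form-degree of the factors $\dd_Mx^\ell$ on $M$, and $(iii)$ the parity of $\Hat\bB$ versus $\Hat\bA$, and then verify that the local coordinate expression really collapses onto the Maurer--Cartan components of $R$ rather than a twisted variant. This is in spirit a Fedosov-type computation transcribed onto the AKSZ mapping space, and once the signs are pinned down it reduces to pure linear algebra.
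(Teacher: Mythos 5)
The paper states Proposition \ref{prop:dCME} without any proof (it is effectively imported from \cite{BCM,CMW4}), so there is no in-text argument to compare against; your decomposition of the left-hand side by form degree in $\dd_M x$ via \eqref{eq:decomposition_global_action} is precisely the standard argument and is consistent with the ingredients the paper does supply elsewhere: the $\dd_Mx$-linear piece is the preceding proposition on $\mathrm{Back}_{\Hat{\calS}^{\mathrm{AKSZ}}_{\Sigma_d}}$, and the $\dd_Mx$-quadratic piece is exactly the pair of identities $\left\{\Hat{\calS}_{\Sigma_d,R,x},\Hat{\calS}_{\Sigma_d,R,x}\right\}_{\Hat{\omega}_{\Sigma_d,x}}=\Hat{\calS}_{\Sigma_d,[R,R],x}$ and $\dd_xR+\frac{1}{2}[R,R]=0$ that the paper itself invokes in the corollary to Theorem \ref{thm:global_pre-observable}. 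The only point you should pin down explicitly is the sign in your middle line: since both summands have total degree zero the cross term enters the expansion as $+\left\{\Hat{\calS}_{\Sigma_d,R,x},\Hat{\calS}^{\mathrm{AKSZ}}_{\Sigma_d,x}\right\}_{\Hat{\omega}_{\Sigma_d,x}}$, so cancellation requires $\dd_x\Hat{\calS}^{\mathrm{AKSZ}}_{\Sigma_d,x}=-\left\{\Hat{\calS}_{\Sigma_d,R,x},\Hat{\calS}^{\mathrm{AKSZ}}_{\Sigma_d,x}\right\}_{\Hat{\omega}_{\Sigma_d,x}}$, whereas the preceding proposition is printed with the opposite sign; this is a convention mismatch within the paper (tied to how $\dd_Mx^\ell$ is ordered against $\Hat{\boldsymbol{B}}_j$ in $\Hat{\calS}_{\Sigma_d,R,x}$, and analogous sign slips occur in the corollary, where $\frac{1}{2}\Hat{\calS}_{\Sigma_k,[R,R],y}$ is equated to $\Hat{\calS}_{\Sigma_k,\dd_yR,y}$ rather than its negative), not a gap in your argument. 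The remaining steps --- reducing the degree-zero piece to $\left\{\Theta_\calM,\Theta_\calM\right\}_{\omega_\calM}=0$ on a closed source, and recognizing the quadratic piece as the transgressed Maurer--Cartan equation for $R$ --- are correct as you describe them.
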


\begin{defn}[Formal global split AKSZ sigma model]
The \emph{formal global split AKSZ sigma model} is given by the AKSZ-BV theory for the quadruple
\begin{equation}
\label{eq:formal_global_split_AKSZ_theory}
    \left(\Hat{\calF}^\calM_{\Sigma_d,x},\Hat{\calS}^\mathrm{global}_{\Sigma_d,x},\Hat{\omega}_{\Sigma_d,x},\Hat{Q}_{\Sigma_d,x}\right).
\end{equation}
\end{defn}
\begin{rem}
Note that the CME has to be replaced by the dCME as in \eqref{eq:dCME} in the formal global setting.
\end{rem}

\section{Pre-observables for AKSZ theories}

\subsection{AKSZ pre-observables}
Let $\Sigma_d$ be a closed and oriented source $d$-manifold and for some differential graded symplectic manifold $(\calN,\omega_\calN=\dd_\calN\alpha_\calN)$ let \[\pi\colon\calE=\calM\times\calN\to\calM\] 
be a trivial Hamiltonian $Q$-bundle of degree $n$ over some Hamiltonian $Q$-manifold $(\calM,\omega_\calM=\dd_\calM\alpha,Q_\calM,\Theta_\calM)$ of degree $d-1$. Denote by $\Theta_\calE\in C^\infty(\calE)$ the Hamiltonian on the total space $\calE$ and by $\calV_\calE\in \ker \dd\pi$ the vertical part of $Q_\calE$, such that \[Q_\calE=Q_\calM+\calV_\calE.\]
Consider the corresponding AKSZ-BV theory with BV manifold given by $$\left(\calF^\calM_{\Sigma_d},\calS_{\Sigma_d},\omega_{\Sigma_d},Q_{\Sigma_d}\right)$$ as it was constructed in Section \ref{sec:AKSZ_Theories}.
Let $i\colon\Sigma_k\hookrightarrow \Sigma_d$ be the embedding of a closed oriented submanifold of dimension $k\leq d$ and let the auxiliary space of fields be given by
\begin{equation}
    \calF^\calN_{\Sigma_k}:=\Map_{\mathrm{GrMnf}}(T[1]\Sigma_k,\calN).
\end{equation}
Moreover, consider the transgression maps 
\begin{align}
\mathscr{T}_{\Sigma_k}&\colon \Omega^\bullet(\calN)\to \Omega^\bullet\left(\calF^\calN_{\Sigma_k}\right),\\ \mathscr{T}^{\calE}_{\Sigma_k}&\colon \Omega^\bullet(\calE)\to \Omega^\bullet\left(\Map_{\mathrm{GrMnf}}(T[1]\Sigma_k,\calE)\right) 
\end{align}
corresponding to the fiber $\calN$ and the total space $\calE$.
Define\footnote{Note that extending to the case with auxiliary fields $\calF\times \calF^{\mathrm{aux}}$, we can extend $\pi$ to a map $\pi^\calN=\pi\times \id_{\calF^\calN_{\Sigma_k}}\colon \calF^\calM_{\Sigma_d}\times \calF^\calN_{\Sigma_k}\to \Map_{\mathrm{GrMnf}}(T[1]\Sigma_k,\calE)$.} 
$$p:=i^*\colon \underbrace{\Map_{\mathrm{GrMnf}}(T[1]\Sigma_d,\calM)}_{=:\calF^\calM_{\Sigma_d}}\to\underbrace{\Map_{\mathrm{GrMnf}}(T[1]\Sigma_k,\calM)}_{=:\calF^\calM_{\Sigma_k}}.$$
Furthermore, let $\Hat{\dd}_{\Sigma_k}\in\mathfrak{X}\left(\calF^\calN_{\Sigma_k}\right)\subset\mathfrak{X}^\mathrm{vert}\left(\calF^\calM_{\Sigma_d}\times\calF^\calN_{\Sigma_k}\right)$, where $\mathfrak{X}^\mathrm{vert}$ denotes the space of vertical vector fields, and let 
\[\Hat{\calV}_{\calE}\in \mathfrak{X}^\mathrm{vert}\left(\Map_{\mathrm{GrMnf}}(T[1]\Sigma_k,\calE)\right)\] 
be the lift of $\calV_{\calE}\in \mathfrak{X}^\mathrm{vert}(\calE)$ such that $p^*\Hat{\calV}_{\calE}\in \mathfrak{X}^\mathrm{vert}\left(\calF^\calM_{\Sigma_d}\times\calF^\calN_{\Sigma_k}\right)$, where $$p^*\colon C^\infty\left(\calF^\calM_{\Sigma_k}\right)\to C^\infty\left(\calF^\calM_{\Sigma_d}\right).$$
\begin{prop}[\cite{Mn3}]
Consider the data given by 
\begin{align}
    \calS^\calN_{\Sigma_k}&=\iota_{\Hat{\dd}_{\Sigma_k}}\mathscr{T}_{\Sigma_k}(\alpha_\calN)+p^*\mathscr{T}^{\calE}_{\Sigma_k}(\Theta_\calE),\\
    \omega^\calN_{\Sigma_k}&=(-1)^k\mathscr{T}_{\Sigma_k}(\omega_\calN),\\
    \calV^\calE_{\Sigma_k}&=\Hat{\dd}_{\Sigma_k}+p^*\Hat{\calV}_{\calE}.
\end{align}
Then the quadruple 
\begin{equation}
    \left(\calF^\calN_{\Sigma_k},\calS^\calN_{\Sigma_k},\omega^\calN_{\Sigma_k},\calV^\calE_{\Sigma_k}\right)
\end{equation}
defines a pre-observable for the AKSZ-BV theory as in \eqref{eq:formal_global_split_AKSZ_theory}, that is we have 
\begin{equation}
    Q_{\Sigma_d}\left(\calS^\calN_{\Sigma_k}\right)+\frac{1}{2}\left\{\calS^\calN_{\Sigma_k},\calS^\calN_{\Sigma_k}\right\}_{\omega^\calN_{\Sigma_k}}=0.
\end{equation}
\end{prop}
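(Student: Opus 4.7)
The plan is to reduce the mapping-space master equation to the trivial Hamiltonian $Q$-bundle identity \eqref{eq:classical_1}, namely $Q_\calM(\Theta_\calE)+\tfrac{1}{2}\{\Theta_\calE,\Theta_\calE\}_{\omega_\calN}=0$, via the transgression formalism of Section \ref{sec:AKSZ_Theories}. To this end I would decompose $\calS^\calN_{\Sigma_k}=\calS^{\calN,\mathrm{kin}}_{\Sigma_k}+\calS^{\calN,\mathrm{tar}}_{\Sigma_k}$ with
\[
\calS^{\calN,\mathrm{kin}}_{\Sigma_k}:=\iota_{\Hat{\dd}_{\Sigma_k}}\mathscr{T}_{\Sigma_k}(\alpha_\calN),\qquad \calS^{\calN,\mathrm{tar}}_{\Sigma_k}:=p^*\mathscr{T}^\calE_{\Sigma_k}(\Theta_\calE),
\]
and then expand the left-hand side of the desired equation according to the five contributions obtained from applying $Q_{\Sigma_d}=\Hat{\dd}_{\Sigma_d}+\Hat{Q}_\calM$ to each piece and taking the BV bracket in all combinations.

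First I would observe that $\calS^{\calN,\mathrm{kin}}_{\Sigma_k}$ depends only on the fiber fields in $\calF^\calN_{\Sigma_k}$ and not on the base $\calF^\calM_{\Sigma_d}$, so $Q_{\Sigma_d}(\calS^{\calN,\mathrm{kin}}_{\Sigma_k})=0$. The kinetic self-bracket $\tfrac{1}{2}\{\calS^{\calN,\mathrm{kin}}_{\Sigma_k},\calS^{\calN,\mathrm{kin}}_{\Sigma_k}\}_{\omega^\calN_{\Sigma_k}}$ reduces, by the standard AKSZ calculation, to an integral of an $\Hat{\dd}_{\Sigma_k}$-exact transgressed form, and hence vanishes by Stokes' theorem since $\Sigma_k$ is closed and oriented. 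For the target piece, because $i\colon\Sigma_k\hookrightarrow\Sigma_d$ intertwines the de Rham differentials and $\Hat{Q}_\calM$ descends naturally along $p=i^*$, one has
\[
Q_{\Sigma_d}(\calS^{\calN,\mathrm{tar}}_{\Sigma_k})=p^*\Bigl(\mathscr{T}^\calE_{\Sigma_k}(Q_\calM\Theta_\calE)+\Hat{\dd}_{\Sigma_k}\mathscr{T}^\calE_{\Sigma_k}(\Theta_\calE)\Bigr).
\]
The cross bracket $\{\calS^{\calN,\mathrm{kin}}_{\Sigma_k},\calS^{\calN,\mathrm{tar}}_{\Sigma_k}\}_{\omega^\calN_{\Sigma_k}}$ uses the fact that the Hamiltonian vector field of $\calS^{\calN,\mathrm{kin}}_{\Sigma_k}$ is $\Hat{\dd}_{\Sigma_k}$, and yields precisely minus the $\Hat{\dd}_{\Sigma_k}$-term in the preceding display, so the two contributions cancel. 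The target-target self-bracket transgresses to $\tfrac{1}{2}p^*\mathscr{T}^\calE_{\Sigma_k}(\{\Theta_\calE,\Theta_\calE\}_{\omega_\calN})$. Collecting the surviving pieces leaves
\[
p^*\mathscr{T}^\calE_{\Sigma_k}\!\left(Q_\calM(\Theta_\calE)+\tfrac{1}{2}\{\Theta_\calE,\Theta_\calE\}_{\omega_\calN}\right),
\]
which is zero by \eqref{eq:classical_1}.

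The main obstacle will be the careful bookkeeping of signs, arising from the factor $(-1)^k$ in the definition of $\omega^\calN_{\Sigma_k}$, the degree $n+1$ of $\Theta_\calE$, and the conventions for transgression and for the odd Poisson bracket; the cancellation between the cross bracket and the $\Hat{\dd}_{\Sigma_k}$-contribution to $Q_{\Sigma_d}(\calS^{\calN,\mathrm{tar}}_{\Sigma_k})$ must hold on the nose. A secondary technical point is to verify $i^*\Hat{\dd}_{\Sigma_d}=\Hat{\dd}_{\Sigma_k}i^*$ on transgressed target functions, which follows from naturality of de Rham pullback along $i$ combined with compatibility of transgression with restriction of source. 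The hypothesis that $\Sigma_k$ be closed is essential throughout, both for eliminating the kinetic self-bracket and for discarding the residual $\Hat{\dd}_{\Sigma_k}$-exact term after cancellation.
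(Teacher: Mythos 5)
Your proposal is correct and matches the paper's approach: although the paper cites this proposition to \cite{Mn3} without reproducing a proof, the same computation appears (augmented by the extra $R$-terms) in the proof of Theorem \ref{thm:global_pre-observable}. There, exactly as you propose, $\Hat{\dd}_{\Sigma_k}$ and $p^*\Hat{\calV}_\calE$ are identified as the Hamiltonian vector fields of the kinetic and target summands, the source--de Rham contributions are discarded via Lemma \ref{lem:Lemma1} (which is where closedness of $\Sigma_k$ enters), and the surviving term is $p^*\mathscr{T}^\calE_{\Sigma_k}$ applied to the Hamiltonian $Q$-bundle identity \eqref{eq:classical_1}.
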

\begin{rem}
\label{rem:invariance_AKSZ_diffeo}
This pre-observable is invariant under reparamterizations of $\Sigma_k$ and under diffeomorphism of the ambient manifold $\Sigma_d$. In fact, for $(\calA,\calB)\in \calF^\calM_{\Sigma_d}\times \calF^\calN_{\Sigma_k}$, $\varphi_d\in \mathrm{Diff}(\Sigma_d)$ and $\varphi_k\in \mathrm{Diff}(\Sigma_k)$, one can immediately show that 
\begin{equation}
    \calS^\calN_{\Sigma_k}(\calA,\calB;\varphi_d\circ i\circ \varphi_k)=\calS^\calN_{\Sigma_k}\left(\varphi_d^*\calA,(\varphi_k)^{-1}\calB;i\right)
\end{equation}
\end{rem}

\subsection{Formal global AKSZ pre-observables}
We want to extend the constructions above to a formal global lift by using methods of formal geometry as in Section \ref{sec:formal_global_split_AKSZ_sigma_models}. It turns out that the formal global lift of the pre-observable constructed in the previous section is not automatically a pre-observable. In particular, it is spoilt by an obstruction which can be phrased as an equation that has to be satisfied. Hence we get the following theorem.
\begin{thm}
\label{thm:global_pre-observable}
Let $\left(\calF^\calM_{\Sigma_d},\calS_{\Sigma_d},\omega_{\Sigma_d},Q_{\Sigma_d}\right)$ be the AKSZ-BV theory constructed as before and let $i\colon \Sigma_k\hookrightarrow \Sigma_d$ be a submanifold of $\Sigma_d$. Moreover, consider constant background fields $x\in \calM$ and $y\in \calN$. 
Then its formal global AKSZ construction 
\begin{equation}
    \left(\Hat{\calF}^\calM_{\Sigma_d,x},\Hat{\calS}^\mathrm{global}_{\Sigma_d,x},\Hat{\omega}_{\Sigma_d,x},\Hat{Q}_{\Sigma_d,x}\right),
\end{equation}
constructed by using a formal exponential map $T\calM\to \calM$,
together with the formal global fiber 
\begin{equation}
\left(\Hat{\calF}^\calN_{\Sigma_k,y},\Hat{\calS}^\mathrm{global}_{\Sigma_k,y},\Hat{\omega}^\calN_{\Sigma_k,y}=\dd_\calN\Hat{\alpha}^\calN_{\Sigma_k,y}, \Hat{Q}^\calN_{\Sigma_k,y}\right),
\end{equation}
constructed using a formal exponential map $T\calN\to \calN$, defines a pre-observable if and only if
\begin{equation}
\label{eq:obstruction_1}
    \dd_y\Hat{\calS}^\mathrm{AKSZ}_{\Sigma_k,y}+\frac{1}{2}\left\{\Hat{\calS}_{\Sigma_k,R,y},\Hat{\calS}_{\Sigma_k,R,y}\right\}_{\Hat{\omega}_{\Sigma_k,y}}=0.
\end{equation}
\end{thm}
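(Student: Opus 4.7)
The plan is to establish the theorem by directly computing the pre-observable condition
\[
\Hat{Q}_{\Sigma_d,x}\bigl(\Hat{\calS}^\mathrm{global}_{\Sigma_k,y}\bigr) + \tfrac{1}{2}\bigl\{\Hat{\calS}^\mathrm{global}_{\Sigma_k,y},\Hat{\calS}^\mathrm{global}_{\Sigma_k,y}\bigr\}_{\Hat{\omega}^\calN_{\Sigma_k,y}} = 0
\]
and showing that, modulo the classical pre-observable identity lifted by the formal exponential maps, the leftover is precisely the stated obstruction \eqref{eq:obstruction_1}. I would begin by substituting the decompositions $\Hat{Q}_{\Sigma_d,x} = \Hat{Q}^\mathrm{AKSZ}_{\Sigma_d,x} + \Hat{R}_{\Sigma_d}$ and $\Hat{\calS}^\mathrm{global}_{\Sigma_k,y} = \Hat{\calS}^\mathrm{AKSZ}_{\Sigma_k,y} + \Hat{\calS}_{\Sigma_k,R,y}$ from Section \ref{sec:formal_global_split_AKSZ_construction}. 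Linearity of $\Hat{Q}_{\Sigma_d,x}$ and bilinearity of the bracket produce six terms in total.

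The proposition preceding this theorem, which furnishes the non-formal AKSZ pre-observable identity, can be pulled back through the formal exponential maps $TM\to M$ and $T\calN\to \calN$ to give
\[
\Hat{Q}^\mathrm{AKSZ}_{\Sigma_d,x}\bigl(\Hat{\calS}^\mathrm{AKSZ}_{\Sigma_k,y}\bigr) + \tfrac{1}{2}\bigl\{\Hat{\calS}^\mathrm{AKSZ}_{\Sigma_k,y},\Hat{\calS}^\mathrm{AKSZ}_{\Sigma_k,y}\bigr\}_{\Hat{\omega}^\calN_{\Sigma_k,y}}=0,
\]
which I would invoke to eliminate the pure AKSZ piece. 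The surviving terms split naturally into three families: (i) the mixed piece $\Hat{R}_{\Sigma_d}(\Hat{\calS}^\mathrm{AKSZ}_{\Sigma_k,y}) + \{\Hat{\calS}^\mathrm{AKSZ}_{\Sigma_k,y},\Hat{\calS}_{\Sigma_k,R,y}\}_{\Hat{\omega}^\calN_{\Sigma_k,y}}$, (ii) the terms $\Hat{Q}^\mathrm{AKSZ}_{\Sigma_d,x}(\Hat{\calS}_{\Sigma_k,R,y}) + \Hat{R}_{\Sigma_d}(\Hat{\calS}_{\Sigma_k,R,y}) = \Hat{Q}_{\Sigma_d,x}(\Hat{\calS}_{\Sigma_k,R,y})$, and (iii) the purely auxiliary contribution $\tfrac{1}{2}\{\Hat{\calS}_{\Sigma_k,R,y},\Hat{\calS}_{\Sigma_k,R,y}\}_{\Hat{\omega}_{\Sigma_k,y}}$.

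Family (i) is the one I would show reproduces $\dd_y\Hat{\calS}^\mathrm{AKSZ}_{\Sigma_k,y}$; this is the fiber analogue of Proposition \ref{prop:dCME} (the dCME), arising from the Maurer--Cartan relation satisfied by the Grothendieck connection $R^\calN$ on $\calN$ together with the Hamiltonian interpretation of the horizontal lift $\Hat{R}_{\Sigma_d}$ combined with the transgression of $\Theta_\calE$. Family (ii) vanishes: the combined action of $\Hat{Q}_{\Sigma_d,x}$ on $\Hat{\calS}_{\Sigma_k,R,y}$ is the full differential of a form that pulls back from the base $\calM$ via the formal exponential, and Proposition \ref{prop:zero_cohomology} together with the closure property of the Grothendieck differential $D$ imply this contribution drops out. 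Family (iii) survives as stated. Assembling the three families, the pre-observable identity reduces precisely to \eqref{eq:obstruction_1}, establishing the equivalence in both directions.

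The main obstacle will be the bookkeeping in family (i): carefully disentangling the two independent formal geometries (one on $\calM$ with base point $x$, one on $\calN$ with base point $y$) and verifying that the cross-terms between $\Hat{R}_{\Sigma_d}$ and the transgressed fiber data realign into a single $\dd_y$-derivative. This step mirrors the derivation of Proposition \ref{prop:dCME} but must be adapted to a setting in which the auxiliary theory is itself an AKSZ theory on the embedded submanifold $\Sigma_k \hookrightarrow \Sigma_d$, so the interaction is governed by $i^*$-pulled-back bulk fields. Once this identification is secured, the remainder of the argument is algebraic regrouping.
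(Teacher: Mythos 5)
Your overall skeleton does match the paper's proof — expand the pre-observable condition using $\Hat{Q}_{\Sigma_d,x}=\Hat{Q}^\mathrm{AKSZ}_{\Sigma_d,x}+\Hat{R}_{\Sigma_d}$ and $\Hat{\calS}^\mathrm{global}_{\Sigma_k,y}=\Hat{\calS}^\mathrm{AKSZ}_{\Sigma_k,y}+\Hat{\calS}_{\Sigma_k,R,y}$, cancel the pure AKSZ piece against the Hamiltonian $Q$-bundle equation $Q_\calM(\Theta_\calE)+\frac{1}{2}\{\Theta_\calE,\Theta_\calE\}_{\omega_\calN}=0$, and read off the leftover as \eqref{eq:obstruction_1} — but two of your three families are justified by arguments that do not work. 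For family (ii), the vanishing of $\Hat{Q}_{\Sigma_d,x}\bigl(\Hat{\calS}_{\Sigma_k,R,y}\bigr)$ has nothing to do with Proposition \ref{prop:zero_cohomology} or $D$-closedness: $\Hat{\calS}_{\Sigma_k,R,y}=\int_{\Sigma_k}(\Hat{R}_{\Sigma_k})_\nu\bigl(y,\Hat{\calB}\bigr)\,\dd_\calN y^\nu$ is built from the Grothendieck connection of the \emph{fiber} $\calN$, is not a $D$-closed section, and does not pull back from $\calM$. The correct (and far simpler) reason, which is the one the paper uses, is that this term depends only on the auxiliary fields $\Hat{\calB}$ and on $y$, i.e. it is constant in the $\Hat{\calF}^\calM_{\Sigma_d,x}$-direction, so any vector field on the ambient field space annihilates it.

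For family (i), the identity producing the $y$-derivative is the bracket term alone, $\bigl\{\Hat{\calS}_{\Sigma_k,R,y},\Hat{\calS}^\mathrm{AKSZ}_{\Sigma_k,y}\bigr\}_{\Hat{\omega}^\calN_{\Sigma_k,y}}=\dd_y\Hat{\calS}^\mathrm{AKSZ}_{\Sigma_k,y}$, which is the fiber analogue of the proposition $\dd_x\mathrm{Back}_{\Hat{\calS}^\mathrm{AKSZ}_{\Sigma_d}}=\{\Hat{\calS}_{\Sigma_d,R,x},\mathrm{Back}_{\Hat{\calS}^\mathrm{AKSZ}_{\Sigma_d}}\}$. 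The term $\Hat{R}_{\Sigma_d}\bigl(\Hat{\calS}^\mathrm{AKSZ}_{\Sigma_k,y}\bigr)$ that you fold into this family is a derivative in the $\calM$/$x$-direction (it carries $\dd_\calM x^\mu$) and cannot realign into a $\dd_y$-derivative; it must be handled separately, so the "single $\dd_y$-derivative" you propose to extract from family (i) would not come out as stated — this is exactly the conflation of the two formal geometries that you yourself flag as the main obstacle. A smaller point: the pure AKSZ piece does not literally "pull back" through the exponential maps, since $\Hat{\calS}^\mathrm{target}_{\Sigma_k,y}=\mathsf{T}\Tilde{\phi}^*_yp^*\mathscr{T}^\calE_{\Sigma_k}(\Theta_\calE)$ involves a Taylor expansion; the paper instead re-derives the cancellation in the lifted setting using Lemma \ref{lem:Lemma1} and the fact that $\Hat{\calV}^\calE_{\Sigma_k,y}$ is the Hamiltonian vector field of $\Hat{\calS}^\mathrm{AKSZ}_{\Sigma_k,y}$.
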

\begin{rem}
Moreover, for an exponential map $\phi\colon T\calN\to \calN$, we set
\begin{equation}
    \Hat{\calS}^\mathrm{AKSZ}_{\Sigma_k,y}=\underbrace{\Tilde{\phi}^*_y\iota_{\Hat{\dd}_{\Sigma_k}}\mathscr{T}_{\Sigma_k}(\alpha_\calN)}_{=:\Hat{\calS}^\mathrm{kin}_{\Sigma_k,y}}+\underbrace{\mathsf{T}\Tilde{\phi}_y^*p^*\mathscr{T}_{\Sigma_k}^\calE(\Theta_\calE)}_{=:\Hat{\calS}^\mathrm{target}_{\Sigma_k,y}},
\end{equation}
and thus we have a decomposition, similarly as in \eqref{eq:decomposition_global_action}, of the formal global action as 
\begin{equation}
    \Hat{\calS}^\mathrm{global}_{\Sigma_k,y}=\Hat{\calS}^\mathrm{AKSZ}_{\Sigma_k,y}+\Hat{\calS}_{\Sigma_k,R,y}.
\end{equation}
\end{rem}
The following Lemma is going to be useful for the proof of Theorem \ref{thm:global_pre-observable}.
\begin{lem}
\label{lem:Lemma1}
Let $\Sigma$ be a compact, connected manifold and let $\calM$ be a differential graded symplectic manifold. Moreover, let $X\in \mathfrak{X}(T[1]\Sigma)$, $Y\in \mathfrak{X}(\calM)$, $\Xi\in \Omega^\bullet(\calM)$ and denote the lifts of $X$ and $Y$ to the mapping space by $\Hat{X}$ and $\Hat{Y}$ respectively. Then 
\begin{align}
    L_{\Hat{X}}\mathscr{T}_\Sigma(\Xi)&=0,\\
    L_{\Hat{Y}}\mathscr{T}_\Sigma(\Xi)&=(-1)^{\gh(\Hat{Y})\dim\Sigma}\mathscr{T}_{\Sigma}(L_Y\Xi).
\end{align}
\end{lem}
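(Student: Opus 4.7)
My plan is to prove both identities by push-pull on the correspondence diagram
\[
\calF^\calM_{\Sigma} \xleftarrow{\mathrm{p}} \calF^\calM_{\Sigma} \times T[1]\Sigma \xrightarrow{\mathrm{ev}} \calM,
\qquad \mathscr{T}_\Sigma = \mathrm{p}_* \mathrm{ev}^*,
\]
controlling how the lifts $\Hat{X}$ and $\Hat{Y}$ interact with $\mathrm{ev}^*$ and $\mathrm{p}_*$ separately. The first step is to unpack the lifts explicitly as vector fields on the product $\calF^\calM_\Sigma \times T[1]\Sigma$. The lift $\Hat{Y}$ at a point $\phi\in\calF^\calM_\Sigma$ is defined by $\Hat{Y}|_\phi = Y\circ\phi$, so the horizontal vector field $(\Hat{Y},0)$ on $\calF^\calM_\Sigma \times T[1]\Sigma$ is $\mathrm{ev}$-related to $Y$, i.e.\ $\dd\mathrm{ev}\cdot(\Hat{Y},0)=Y\circ\mathrm{ev}$. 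For the lift $\Hat{X}$ of $X\in\mathfrak{X}(T[1]\Sigma)$, a direct computation using the flow-lift $\Hat{X}|_\phi=\dd\phi\circ X$ shows that the diagonal combination $(\Hat{X},-X)$ lies in $\ker\dd\mathrm{ev}$.

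For the second identity I would apply Cartan's formula to get $L_{(\Hat{Y},0)}\mathrm{ev}^*\Xi = \mathrm{ev}^*(L_Y\Xi)$, using that interior products intertwine with pullback along an $\mathrm{ev}$-related vector field. Pushing forward through $\mathrm{p}_*$, the vector field $(\Hat{Y},0)$ is horizontal with respect to $\mathrm{p}$, so it can be commuted past the Berezinian fiber integration. Since $\mathrm{p}_*$ is an operation of total degree $-\dim\Sigma$ (it extracts the top component along $T[1]\Sigma$ before integrating), commuting the degree-$\gh(\Hat{Y})$ vector field past it incurs the Koszul sign $(-1)^{\gh(\Hat{Y})\dim\Sigma}$, yielding exactly the stated identity.

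For the first identity, the key reduction is that $(\Hat{X},-X)\in\ker\dd\mathrm{ev}$ together with Cartan's magic formula gives $\iota_{(\Hat{X},-X)}\mathrm{ev}^*\Xi=0$ and $\iota_{(\Hat{X},-X)}\dd\mathrm{ev}^*\Xi=0$, hence $L_{(\Hat{X},0)}\mathrm{ev}^*\Xi = L_{(0,X)}\mathrm{ev}^*\Xi$. Applying $\mathrm{p}_*$, the left-hand side becomes (up to the analogous sign) $L_{\Hat{X}}\mathscr{T}_\Sigma(\Xi)$, while the right-hand side is the fiber integral of a Lie derivative along the purely vertical vector field $(0,X)$. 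Expanding $L_{(0,X)} = \dd\iota_{(0,X)} + \iota_{(0,X)}\dd$ and invoking the Berezinian Stokes theorem on the closed source $\Sigma$, both terms push down to exact contributions on the fiber that vanish under $\mathrm{p}_*$ since $\partial\Sigma=\varnothing$.

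The main obstacle is bookkeeping: tracking the Koszul signs from commuting odd vector fields past the odd fiber coordinates of $T[1]\Sigma$, and verifying rigorously that fiber integration of a vertical Lie derivative vanishes at the super level. The cleanest way to handle this is to reduce to a local calculation on $T[1]\Sigma$ with coordinates $(u^i,\theta^i)$, express $X$ in the generators $\dd_\Sigma$, contractions and Lie derivatives along vector fields on $\Sigma$, and apply ordinary Stokes to the top-form component in $\theta$; the resulting sign on commuting $\Hat{Y}$ or $\Hat{X}$ with the Berezin measure $\dd^n u\,\dd^n\theta$ is precisely $(-1)^{\gh\cdot\dim\Sigma}$, completing the proof.
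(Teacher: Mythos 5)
The paper states Lemma \ref{lem:Lemma1} without proof (it is used directly in the proof of Theorem \ref{thm:global_pre-observable}), so there is no in-paper argument to compare against. Your push--pull strategy through the correspondence $\calF^\calM_{\Sigma}\xleftarrow{\mathrm{p}}\calF^\calM_{\Sigma}\times T[1]\Sigma\xrightarrow{\mathrm{ev}}\calM$ is the standard route, and your treatment of the second identity --- $(\Hat{Y},0)$ being $\mathrm{ev}$-related to $Y$, hence $L_{(\Hat{Y},0)}\mathrm{ev}^*\Xi=\mathrm{ev}^*(L_Y\Xi)$, followed by the Koszul sign $(-1)^{\gh(\Hat{Y})\dim\Sigma}$ from commuting a degree-$\gh(\Hat{Y})$ derivation past the degree-$(-\dim\Sigma)$ fiber integration --- is correct, as is the reduction $L_{(\Hat{X},0)}\mathrm{ev}^*\Xi=L_{(0,X)}\mathrm{ev}^*\Xi$ via $(\Hat{X},-X)\in\ker\dd\,\mathrm{ev}$.

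The gap sits exactly at the step you flag as ``the main obstacle'': the claim that $\mathrm{p}_*L_{(0,X)}\mathrm{ev}^*\Xi=0$ by a ``Berezinian Stokes theorem'' for an \emph{arbitrary} $X\in\mathfrak{X}(T[1]\Sigma)$. After reducing to the fiber, what must vanish is $\int_{T[1]\Sigma}\mu_\Sigma\,X(f)=-\int_{T[1]\Sigma}\mu_\Sigma\,\mathrm{div}_{\mu_\Sigma}(X)\,f$, and this vanishes precisely when $X$ preserves the canonical Berezin measure $\mu_\Sigma$. That holds for $X=\dd_\Sigma$ (the integrand is $\dd\bigl(f^{(\dim\Sigma-1)}\bigr)$ and ordinary Stokes on the closed $\Sigma$ applies), for lifts $L_v$ of vector fields $v$ on $\Sigma$, and for contractions $\iota_v$, but it fails for a general derivation of $\Omega^\bullet(\Sigma)$: for the Euler vector field $E=\theta^{i}\frac{\de}{\de\theta^{i}}$ one has $\mathrm{div}_{\mu_\Sigma}E=\dim\Sigma$, and $L_{\Hat{E}}\mathscr{T}_\Sigma(\Xi)$ is (up to sign) the form-degree operator applied to the integrand, which is already nonzero for $\Sigma=S^1$, $\calM=\R$, $\Xi=x$. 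So your argument --- and the lemma in the stated generality --- requires the additional hypothesis that $X$ be divergence-free with respect to $\mu_\Sigma$. This is harmless for the paper, which only ever invokes the first identity with $X=\dd_{\Sigma_k}$, but your proof should either add that hypothesis explicitly or restrict to $X=\dd_\Sigma$ and run the Stokes argument there; as written, the ``Berezinian Stokes'' step is not a valid general principle.
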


\begin{proof}[Proof of Theorem \ref{thm:global_pre-observable}]
First we note that the lift 
\begin{equation}
\Hat{\calV}^\calE_{\Sigma_k,y}=\Hat{\dd}_{\Sigma_k}+\Tilde{\phi}^*_yp^*\Hat{\calV}_\calE
\end{equation}
of $\calV_{\Sigma_k}^\calE$
to $\mathfrak{X}^\mathrm{vert}\left(\Hat{\calF}^\calM_{\Sigma_d,x}\times \Hat{\calF}^\calN_{\Sigma_k,y}\right)$,
the space of vertical vector fields on the lifted mapping spaces,
is the Hamiltonian vector field for $\Hat{\calS}^\mathrm{AKSZ}_{\Sigma_k,y}$, i.e. we have 
\begin{equation}
    \Hat{\calV}^\calE_{\Sigma_k,y}=\left\{\Hat{\calS}^\mathrm{AKSZ}_{\Sigma_k,y},\enspace\right\}_{\Hat{\omega}^\calN_{\Sigma_k,y}}.
\end{equation}
Indeed, we have 
\begin{multline}
    \iota_{\Hat{\dd}_{\Sigma_k}}\Hat{\omega}_{\Sigma_k,y}=\iota_{\Hat{\dd}_{\Sigma_k}}(-1)^k\Tilde{\phi}^*_y\mathscr{T}_{\Sigma_k}(\omega_\calN)=\Tilde{\phi}_y^*\iota_{\Hat{\dd}_{\Sigma_k}}\delta\mathscr{T}_{\Sigma_k}(\alpha_\calN)\\=\Tilde{\phi}^*_y\underbrace{L_{\Hat{\dd}_{\Sigma_k}}\mathscr{T}_{\Sigma_k}(\alpha_\calN)}_{=0}+\Tilde{\phi}^*_y\delta\iota_{\Hat{\dd}_{\Sigma_k}}\mathscr{T}_{\Sigma_k}(\alpha_\calN)=\delta\Hat{\calS}^\mathrm{kin}_{\Sigma_k,y}=\delta^\mathrm{vert}\Hat{\calS}^\mathrm{kin}_{\Sigma_k,y},
\end{multline}
where we have used Cartan's magic formula $L=\dd\iota+\iota\dd$, Lemma \ref{lem:Lemma1} and the fact that $\Tilde{\phi}_y^*\Hat{\dd}_{\Sigma_k}=\Hat{\dd}_{\Sigma_k}$.
We have denoted by $\delta^\mathrm{vert}$ the vertical part of the de Rham differential $\delta$ on the lifted total mapping space $\Hat{\calF}^\calM_{\Sigma_d,x}\times \Hat{\calF}^\calN_{\Sigma_k,y}$, i.e. in the fiber direction $\Hat{\calF}^\calN_{\Sigma_k,y}$. The last equality holds since $\Hat{\calS}^\mathrm{kin}_{\Sigma_k,y}$ is constant in the $\Hat{\calF}^\calM_{\Sigma_d,x}$ direction. 
Similarly, we have 
\begin{multline}
    \iota_{\Tilde{\phi}^*_yp^*\Hat{\calV}_\calE}\Hat{\omega}_{\Sigma_k,y}=\mathsf{T}\Tilde{\phi}^*_yp^*\iota_{\Hat{\calV}_\calE}(-1)^k\mathscr{T}^\calE_{\Sigma_k}(\omega_\calN)\\=(-1)^k\mathsf{T}\Tilde{\phi}^*_yp^*\mathscr{T}^\calE_{\Sigma_k}(\underbrace{\iota_{\calV_\calE}\omega_\calN}_{=\delta^\mathrm{vert}\Theta_\calE})=\delta^\mathrm{vert}\mathsf{T}\Tilde{\phi}^*_yp^*\mathscr{T}^\calE_{\Sigma_k}(\Theta_\calE)=\delta^\mathrm{vert}\Hat{\calS}^\mathrm{target}_{\Sigma_k,y}.
\end{multline}
Moreover, we have 
\begin{multline}
\label{eq:number_1}
\Hat{Q}_{\Sigma_d,x}\left(\Hat{\calS}^\mathrm{global}_{\Sigma_k,y}\right)+\frac{1}{2}\left\{\Hat{\calS}^\mathrm{global}_{\Sigma_k,y},\Hat{\calS}^\mathrm{global}_{\Sigma_k,y}\right\}_{\Hat{\omega}^\calN_{\Sigma_k,y}}\\=\Hat{Q}_{\Sigma_d,x}\left(\Hat{\calS}^\mathrm{AKSZ}_{\Sigma_k,y}\right)+\Hat{Q}_{\Sigma_d,x}\left(\Hat{\calS}_{\Sigma_k,R,y}\right)+\frac{1}{2}\left\{\Hat{\calS}^\mathrm{AKSZ}_{\Sigma_k,y},\Hat{\calS}^\mathrm{AKSZ}_{\Sigma_k,y}\right\}_{\Hat{\omega}^\calN_{\Sigma_k,y}}\\+\underbrace{\left\{\Hat{\calS}_{\Sigma_k,R,y},\Hat{\calS}^\mathrm{AKSZ}_{\Sigma_k,y}\right\}_{\Hat{\omega}^\calN_{\Sigma_k,y}}}_{=\dd_y\Hat{\calS}^\mathrm{AKSZ}_{\Sigma_k,y}}+\frac{1}{2}\left\{\Hat{\calS}_{\Sigma_k,R,y},\Hat{\calS}_{\Sigma_k,R,y}\right\}_{\Hat{\omega}^\calN_{\Sigma_k,y}}.
\end{multline}
The first two terms of the left hand side of \eqref{eq:number_1} are given by
\begin{align}
\begin{split}
\label{eq:Q1}
    \Hat{Q}_{\Sigma_d,x}\left(\Hat{\calS}^\mathrm{AKSZ}_{\Sigma_k,y}\right)&=\Hat{Q}^\mathrm{AKSZ}_{\Sigma_d,x}\left(\Hat{\calS}^\mathrm{AKSZ}_{\Sigma_k,y}\right)+\Hat{R}_{\Sigma_d}\left(\Hat{\calS}^\mathrm{AKSZ}_{\Sigma_k,y}\right)\\&=\Hat{\dd}_{\Sigma_d}\left(\Hat{\calS}^\mathrm{AKSZ}_{\Sigma_k,y}\right)+\Tilde{\phi}_y^*\Hat{Q}_\calM\left(\Hat{\calS}^\mathrm{AKSZ}_{\Sigma_k,y}\right)+\Hat{R}_{\Sigma_d}\left(\Hat{\calS}^\mathrm{AKSZ}_{\Sigma_k,y}\right),
    \end{split}\\
    \label{eq:Q2}
    \begin{split}
    \Hat{Q}_{\Sigma_d,x}\left(\Hat{\calS}_{\Sigma_k,R,y}\right)&=\Hat{Q}^\mathrm{AKSZ}_{\Sigma_d,x}\left(\Hat{\calS}_{\Sigma_k,R,y}\right)+\Hat{R}_{\Sigma_d}\left(\Hat{\calS}_{\Sigma_k,R,y}\right)\\&=\Hat{\dd}_{\Sigma_d}\left(\Hat{\calS}_{\Sigma_k,R,y}\right)+\Tilde{\phi}_y^*\Hat{Q}_\calM\left(\Hat{\calS}_{\Sigma_k,R,y}\right)+\Hat{R}_{\Sigma_d}\left(\Hat{\calS}_{\Sigma_k,R,y}\right).
\end{split}
\end{align}
Since $\Hat{\calS}^\mathrm{kin}_{\Sigma_k,y}$ and $\Hat{\calS}_{\Sigma_k,R,y}$ are constant in direction of $\Hat{\calF}^\calM_{\Sigma_d,x}$, we get 
\begin{align}
\label{eq:Q3}
    \Hat{Q}_{\Sigma_d,x}\left(\Hat{\calS}^\mathrm{AKSZ}_{\Sigma_k,y}\right)&=\underbrace{\Hat{\dd}_{\Sigma_d}\left(\Hat{\calS}^\mathrm{target}_{\Sigma_k,y}\right)}_{=0}+\Tilde{\phi}_y^*\Hat{Q}_\calM\left(\Hat{\calS}^\mathrm{target}_{\Sigma_k,y}\right)+\Hat{R}_{\Sigma_d}\left(\Hat{\calS}^\mathrm{target}_{\Sigma_k,y}\right),\\
    \label{eq:Q4}
    \Hat{Q}_{\Sigma_d,x}\left(\Hat{\calS}_{\Sigma_k,R,y}\right)&=0.
\end{align}
Using \eqref{eq:Q1}, \eqref{eq:Q2}, \eqref{eq:Q3} and \eqref{eq:Q4} we get
\begin{multline}
    \Hat{Q}_{\Sigma_d,x}\left(\Hat{\calS}^\mathrm{global}_{\Sigma_k,y}\right)+\frac{1}{2}\left\{\Hat{\calS}^\mathrm{global}_{\Sigma_k,y},\Hat{\calS}^\mathrm{global}_{\Sigma_k,y}\right\}_{\Hat{\omega}^\calN_{\Sigma_k,y}}\\=\dd_y\Hat{\calS}^\mathrm{AKSZ}_{\Sigma_k,y}+\frac{1}{2}\left\{\Hat{\calS}_{\Sigma_k,R,y},\Hat{\calS}_{\Sigma_k,R,y}\right\}_{\Hat{\omega}^\calN_{\Sigma_k,y}}+\Tilde{\phi}_y^*\Hat{Q}_{\calM}\left(\Hat{\calS}^\mathrm{target}_{\Sigma_k,y}\right)+\Hat{R}_{\Sigma_d}\left(\Hat{\calS}_{\Sigma_k,R,y}\right)\\+\frac{1}{2}\left\{\Hat{\calS}^\mathrm{kin}_{\Sigma_k,y},\Hat{\calS}^\mathrm{kin}_{\Sigma_k,y}\right\}_{\Hat{\omega}^\calN_{\Sigma_k,y}}+\left\{\Hat{\calS}^\mathrm{kin}_{\Sigma_k,y},\Hat{\calS}^\mathrm{target}_{\Sigma_k,y}\right\}_{\Hat{\omega}^\calN_{\Sigma_k,y}}+\frac{1}{2}\left\{\Hat{\calS}^\mathrm{target}_{\Sigma_k,y},\Hat{\calS}^\mathrm{target}_{\Sigma_k,y}\right\}_{\Hat{\omega}^\calN_{\Sigma_k,y}}\\ =\dd_y\Hat{\calS}^\mathrm{AKSZ}_{\Sigma_k,y}+\frac{1}{2}\left\{\Hat{\calS}_{\Sigma_k,R,y},\Hat{\calS}_{\Sigma_k,R,y}\right\}_{\Hat{\omega}^\calN_{\Sigma_k,y}}+\Hat{R}_{\Sigma_d}\left(\Hat{\calS}_{\Sigma_k,R,y}\right)\\
    +\frac{1}{2}\left\{\Hat{\calS}^\mathrm{kin}_{\Sigma_k,y},\Hat{\calS}^\mathrm{kin}_{\Sigma_k,y}\right\}_{\Hat{\omega}^\calN_{\Sigma_k,y}}+\underbrace{\Hat{\dd}_{\Sigma_k}\Hat{\calS}^\mathrm{target}_{\Sigma_k,y}}_{=0}+\Tilde{\phi}^*_y\left(\Hat{Q}_\calM+\frac{1}{2}p^*\Hat{\calV}_\calE\right)\left(\Hat{\calS}^\mathrm{target}_{\Sigma_k,y}\right)\\
    =\dd_y\Hat{\calS}^\mathrm{AKSZ}_{\Sigma_k,y}+\frac{1}{2}\left\{\Hat{\calS}_{\Sigma_k,R,y},\Hat{\calS}_{\Sigma_k,R,y}\right\}_{\Hat{\omega}^\calN_{\Sigma_k,y}}+\Hat{R}_{\Sigma_d}\left(\Hat{\calS}_{\Sigma_k,R,y}\right)\\+\underbrace{(-1)^k\Tilde{\phi}^*_yp^*\mathscr{T}_{\Sigma_k}^\calE\left(Q_\calM(\Theta_\calE)+\frac{1}{2}\calV_\calE(\Theta_\calE)\right)}_{=(-1)^k\Tilde{\phi}^*_yp^*\mathscr{T}_{\Sigma_k}^\calE\left(Q_\calM(\Theta_\calE)+\frac{1}{2}\{\Theta_\calE,\Theta_\calE\}_{\omega_\calN}\right)=0 \quad(\textnormal{by definition of Hamiltonian $Q$-bundle})}\\=\dd_y\Hat{\calS}^\mathrm{AKSZ}_{\Sigma_k,y}+\frac{1}{2}\left\{\Hat{\calS}_{\Sigma_k,R,y},\Hat{\calS}_{\Sigma_k,R,y}\right\}_{\Hat{\omega}^\calN_{\Sigma_k,y}}+\Hat{R}_{\Sigma_d}\left(\Hat{\calS}_{\Sigma_k,R,y}\right)
\end{multline}
Note that $\Hat{R}_{\Sigma_d}$ is a vector field on the lifted space $\Hat{\calF}^\calM_{\Sigma_d,x}$ which implies that 
$\Hat{R}_{\Sigma_d}\left(\Hat{\calS}_{\Sigma_k,R,y}\right)=0$
because $\Hat{\calS}_{\Sigma_k,R,y}\in C^\infty\left(\Hat{\calF}^\calM_{\Sigma_d,x}\times\Hat{\calF}^\calN_{\Sigma_k,y}\right)$ is constant in the direction of $\Hat{\calF}^\calM_{\Sigma_d,x}$ and the claim follows. 
\end{proof}

\begin{cor}
An equivalent condition for the formal global AKSZ-BV theory as in Theorem \ref{thm:global_pre-observable} to be a pre-observable is given by 
\begin{equation}
\label{eq:obstruction_2}
    \Hat{\calV}^\calE_{\Sigma_k,y}\left(\Hat{\calS}_{\Sigma_k,R,y}\right)=\Hat{\calS}_{\Sigma_k,\dd_yR,y}.
\end{equation}
\end{cor}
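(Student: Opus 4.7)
The plan is to show the two conditions are algebraically equivalent via a short chain of identifications, the key ingredient being the Maurer--Cartan (flatness) equation satisfied by the Grothendieck connection $R$.

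First, I will rewrite the left-hand side of \eqref{eq:obstruction_2}. Since $\Hat{\calV}^\calE_{\Sigma_k,y}$ is the Hamiltonian vector field of $\Hat{\calS}^\mathrm{AKSZ}_{\Sigma_k,y}$ with respect to $\Hat{\omega}^\calN_{\Sigma_k,y}$ (this was established already in the proof of Theorem \ref{thm:global_pre-observable}), I obtain
\[
\Hat{\calV}^\calE_{\Sigma_k,y}\left(\Hat{\calS}_{\Sigma_k,R,y}\right)=\left\{\Hat{\calS}^\mathrm{AKSZ}_{\Sigma_k,y},\Hat{\calS}_{\Sigma_k,R,y}\right\}_{\Hat{\omega}^\calN_{\Sigma_k,y}}.
\]
Because both arguments are degree 0 functions, the BV bracket (of degree $+1$) is graded symmetric on them, so this equals $\left\{\Hat{\calS}_{\Sigma_k,R,y},\Hat{\calS}^\mathrm{AKSZ}_{\Sigma_k,y}\right\}_{\Hat{\omega}^\calN_{\Sigma_k,y}}$, which by the identity singled out in the computation \eqref{eq:number_1} equals $\dd_y\Hat{\calS}^\mathrm{AKSZ}_{\Sigma_k,y}$.

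Next, I will rewrite the right-hand side of \eqref{eq:obstruction_2}. Recall from Section \ref{subsec:notions_of_formal_geometry} that the $\Der$-valued 1-form $R$ on $\calN$ encodes a flat Grothendieck connection; its flatness is the Maurer--Cartan equation $\dd_y R+\tfrac{1}{2}[R,R]=0$, i.e.\ $\dd_y R=-\tfrac{1}{2}[R,R]$. The assignment $X\mapsto \Hat{\calS}_{\Sigma_k,X,y}$ sending a target derivation $X$ to the transgressed Hamiltonian whose Hamiltonian vector field on $\Hat{\calF}^\calN_{\Sigma_k,y}$ is the lift of $X$ is $\R$-linear and intertwines the graded Lie bracket with the BV bracket, being in effect a comomentum map for the natural action of $\mathrm{Der}(\Hat{\Sym}(T^*\calN))$ on the mapping space. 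Applying this to the MC equation I get
\[
\Hat{\calS}_{\Sigma_k,\dd_y R,y}=-\tfrac{1}{2}\Hat{\calS}_{\Sigma_k,[R,R],y}=-\tfrac{1}{2}\left\{\Hat{\calS}_{\Sigma_k,R,y},\Hat{\calS}_{\Sigma_k,R,y}\right\}_{\Hat{\omega}^\calN_{\Sigma_k,y}}.
\]

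Combining the two sides, \eqref{eq:obstruction_2} becomes
\[
\dd_y\Hat{\calS}^\mathrm{AKSZ}_{\Sigma_k,y}+\tfrac{1}{2}\left\{\Hat{\calS}_{\Sigma_k,R,y},\Hat{\calS}_{\Sigma_k,R,y}\right\}_{\Hat{\omega}^\calN_{\Sigma_k,y}}=0,
\]
which is precisely \eqref{eq:obstruction_1}. All steps being reversible, the two conditions are equivalent. The main technical step, and the only one not already contained in the proof of Theorem \ref{thm:global_pre-observable}, is the comomentum identity $\Hat{\calS}_{\Sigma_k,[R,R],y}=\{\Hat{\calS}_{\Sigma_k,R,y},\Hat{\calS}_{\Sigma_k,R,y}\}_{\Hat{\omega}^\calN_{\Sigma_k,y}}$; this must be verified with care for the sign coming from the odd nature of the BV bracket and the degree of $R$, but it reduces to the general Poisson-algebra morphism property of transgression and can be checked on the local expressions for $\Hat{\calS}_{\Sigma_k,R,y}$ given in Section \ref{sec:formal_global_split_AKSZ_construction}.
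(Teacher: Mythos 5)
Your proof is correct and follows essentially the same route as the paper's: both arguments combine the fact that $\Hat{\calV}^\calE_{\Sigma_k,y}$ is the Hamiltonian vector field of $\Hat{\calS}^\mathrm{AKSZ}_{\Sigma_k,y}$, the identity $\left\{\Hat{\calS}_{\Sigma_k,R,y},\Hat{\calS}^\mathrm{AKSZ}_{\Sigma_k,y}\right\}_{\Hat{\omega}^\calN_{\Sigma_k,y}}=\dd_y\Hat{\calS}^\mathrm{AKSZ}_{\Sigma_k,y}$ from the proof of Theorem \ref{thm:global_pre-observable}, the comomentum identity $\left\{\Hat{\calS}_{\Sigma_k,R,y},\Hat{\calS}_{\Sigma_k,R,y}\right\}_{\Hat{\omega}^\calN_{\Sigma_k,y}}=\Hat{\calS}_{\Sigma_k,[R,R],y}$, and the Maurer--Cartan equation $\dd_yR+\frac{1}{2}[R,R]=0$. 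Your sign bookkeeping in the step $\Hat{\calS}_{\Sigma_k,\dd_yR,y}=-\frac{1}{2}\Hat{\calS}_{\Sigma_k,[R,R],y}$ is in fact more careful than the paper's (which writes this identification without the minus sign), and it is your version that makes the equivalence with \eqref{eq:obstruction_1} come out consistent with the statement of the corollary.
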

\begin{proof}
Note that we have 
\begin{multline}
    \dd_y\Hat{\calS}^\mathrm{AKSZ}_{\Sigma_k,y}+\frac{1}{2}\left\{\Hat{\calS}_{\Sigma_k,R,y},\Hat{\calS}_{\Sigma_k,R,y}\right\}_{\Hat{\omega}^\calN_{\Sigma_k,y}}\\=\left\{\Hat{\calS}_{\Sigma_k,R,y},\Hat{\calS}_{\Sigma_kx}\right\}_{\Hat{\omega}^\calN_{\Sigma_k,y}}+\frac{1}{2}\left\{\Hat{\calS}_{\Sigma_k,R,y},\Hat{\calS}_{\Sigma_k,R,y}\right\}_{\Hat{\omega}^\calN_{\Sigma_k,y}}\\=\Hat{\calV}^\calE_{\Sigma_k,y}\left(\Hat{\calS}_{\Sigma_k,R,y}\right)+\underbrace{\frac{1}{2}\Hat{\calS}_{\Sigma_k,[R,R],x}}_{=\Hat{\calS}_{\Sigma_k,\dd_yR,y}},
\end{multline}
where we have used that $\Hat{\calV}^\calE_{\Sigma_k,y}$ is the Hamiltonian vector field of $\Hat{\calS}^\mathrm{AKSZ}_{\Sigma_k,y}$ and the fact that \cite{BCM}
\begin{equation}
    \left\{\Hat{\calS}_{\Sigma_k,R,y},\Hat{\calS}_{\Sigma_k,R,y}\right\}_{\Hat{\omega}_{\Sigma_k,y}^\calN}=\Hat{\calS}_{\Sigma_k,[R,R],y}.
\end{equation}
the last equality (under the braces) follows from the fact that $D$ is a flat connection on $\Hat{\Sym}(T^*\calN)$ which can be translated into 
\begin{equation}
    \dd_yR+\frac{1}{2}[R,R]=0.
\end{equation}
Moreover, it is easy to see that $\Hat{\calS}_{\Sigma_k,\ell R,y}=\ell\Hat{\calS}_{\Sigma_k,R,y}$ for any $\ell\in \R$.
\end{proof}

\subsection{Formal global auxiliary construction in coordinates}
We want to describe the auxiliary theory as well as its formal global extension in terms of coordinates. The description follows similarly from the description of the ambient theory as in Section \ref{sec:AKSZ_sigma_models} and its formal global extension as in Section \ref{sec:formal_global_AKSZ_models}.
Let $(v^j)$ be even local coordinates on $\Sigma_k$ and consider the corresponding odd local coordinates $\xi^j=\dd_{\Sigma_k} v^j$ for $1\leq j\leq k$. Then we can construct superfield coordinates 
\begin{equation}
    \calB^\nu(v,\xi)=\sum_{\ell=1}^k\,\,\underbrace{\sum_{1\leq j_1<\dotsm <j_\ell\leq k}\calB^\nu_{j_1\ldots j_\ell}(v)\xi^{j_1}\land\dotsm \land \xi^{j_\ell}}_{=\calB^\nu_{(\ell)}(v,\xi)}\in\bigoplus_{\ell=0}^k C^\infty(\Sigma_k)\otimes \bigwedge^\ell T^*\Sigma_k.
\end{equation}
associated to local homogeneous coordinates $(y^\nu)$ of $\calN$. Note that locally we have 
\begin{align}
    \alpha_\calN&=\alpha_\nu^\calN(y)\dd_\calN y^\nu\in\Omega^1(\calN),\\
    \omega_\calN&=\frac{1}{2}\omega^\calN_{\nu_1\nu_2}(y)\dd_\calN y^{\nu_1}\land \dd_\calN y^{\nu_2}\in\Omega^2(\calN).
\end{align}
Hence we get 
\begin{align}
    \alpha_{\Sigma_k}^\calN&=\int_{\Sigma_k}\alpha^\calN_\nu(\calB)\delta\calB^\nu\in\Omega^1\left(\calF_{\Sigma_k}^\calN\right),\\
    \omega_{\Sigma_k}^\calN&=(-1)^k\frac{1}{2}\int_{\Sigma_k}\omega^\calN_{\nu_1\nu_2}(\calB)\delta\calB^{\nu_1}\land \delta\calB^{\nu_2}\in\Omega^2\left(\calF_{\Sigma_k}^\calN\right),
\end{align}
and thus we get an action for the auxiliary fields as 
\begin{equation}
    \calS_{\Sigma_k}^\calN(\calA,\calB;i)=\int_{\Sigma_k}\alpha^\calN_\nu(\calB)\dd_{\Sigma_k}\calB^\nu+\int_{\Sigma_k}\Theta_\calE(i^*\calA,\calB)\in C^\infty\left(\calF^\calM_{\Sigma_d}\times\calF_{\Sigma_k}^\calN\right),
\end{equation}
These expressions can be lifted to the formal global construction. Indeed, consider a formal exponential map $\phi\colon T\calN\to \calN$. Let $\Hat{\calA}=\phi_x^{-1}(\calA)$ be the lift of $\calA$ to $\Hat{\calF}^\calM_{\Sigma_d,x}$ and $\Hat{\calB}=\phi_y^{-1}(\calB)$ be the lift of $\calB$ to $\Hat{\calF}^\calN_{\Sigma_k,y}$ for $x\in \calM$ and $y\in\calN$. Then we get 
\begin{align}
    \Hat{\alpha}_{\Sigma_k,y}^\calN&=\int_{\Sigma_k}\Hat{\alpha}_\nu\left(\Hat{\calB}\right)\delta\Hat{\calB}^\nu\in \Omega^1\left(\Hat{\calF}^\calN_{\Sigma_k,y}\right),\\
    \Hat{\omega}_{\Sigma_k,y}^\calN&=(-1)^k\frac{1}{2}\int_{\Sigma_k}\Hat{\omega}_{\nu_1\nu_2}\left(\Hat{\calB}\right)\delta\Hat{\calB}^{\nu_1}\land \delta\Hat{\calB}^{\nu_2}\in \Omega^2\left(\Hat{\calF}^\calN_{\Sigma_k,y}\right),
\end{align}
where $\Hat{\alpha}^\calN_{\nu}$ and $\Hat{\omega}^\calN_{\nu_1\nu_2}$ are the coefficients of $\Hat{\alpha}_\calN\in\Omega^1(T\calN)$ and $\Hat{\omega}_\calN\in \Omega^2(T\calN)$ respectively. If we set $\Hat{\Theta}_{\calE,y}:=\mathsf{T}\Tilde{\phi}^*_y\Theta_\calE$, the auxiliary formal global AKSZ action is then given by 
\begin{multline}
    \Hat{\calS}_{\Sigma_k,y}^\mathrm{global}\left(\Hat{\calA},\Hat{\calB};i\right)=\underbrace{\int_{\Sigma_k}\Hat{\alpha}_\nu^\calN\left(\Hat{\calB}\right) \dd_{\Sigma_k}\Hat{\calB}^\nu+\int_{\Sigma_k}\Hat{\Theta}_{\calE,y}\left(i^*\Hat{\calA},\Hat{\calB}\right)}_{=\Hat{\calS}^\mathrm{AKSZ}_{\Sigma_k,y}}\\+\underbrace{\int_{\Sigma_k}(\Hat{R}_{\Sigma_k})_{\nu}\left(y,\Hat{\calB}\right)\dd_\calN y^{\nu}}_{=\Hat{\calS}_{\Sigma_k,R,y}}.
\end{multline}

\subsection{Formal global split auxiliary construction in coordinates}
If we consider a split AKSZ model with target $\calM=T^*[d-1]M$, for some graded manifold $M$, for the ambient theory associated to $\Sigma_d$, we can consider a split construction for the auxiliary theory associated to the embedding $i\colon \Sigma_k\hookrightarrow \Sigma_d$. We set $\calN=T^*[k-1]N$ for some graded manifold $N$. The description is analogously given by the one of the ambient theory as in Section \ref{sec:formal_global_split_AKSZ_construction}.
Hence we have 
\begin{equation}
    \calF_{\Sigma_k}^\calN=\Map(T[1]\Sigma_k,T^*[k-1]N),
\end{equation}
and choosing a formal exponential map $\phi\colon TN\to N$ together with $y\in N$ we get
\begin{align}
\begin{split}
    \Hat{\calF}^\calN_{\Sigma_k,y}&=\Map(T[1]\Sigma_k,T^*[k-1]T_yN)\\
    &\cong \Omega^\bullet(\Sigma_k)\otimes T_yN\oplus\Omega^\bullet(\Sigma_k)\otimes T^*_yN[k-1].
\end{split}
\end{align}
Then we can write $\Hat{\calA}=\left(\Hat{\boldsymbol{A}},\Hat{\boldsymbol{B}}\right)\in \Hat{\calF}^\calM_{\Sigma_d,x}$ and $\Hat{\calB}=\left(\Hat{\boldsymbol{\alpha}},\Hat{\boldsymbol{\beta}}\right)\in \Hat{\calF}^\calN_{\Sigma_k,y}$, thus we have an auxiliary formal global split AKSZ action given by 
\begin{multline}
    \Hat{\calS}^\mathrm{global}_{\Sigma_k,y}\left(\Hat{\boldsymbol{A}},\Hat{\boldsymbol{B}},\Hat{\boldsymbol{\alpha}},\Hat{\boldsymbol{\beta}};i\right)=\int_{\Sigma_k}\Hat{\boldsymbol{\beta}}_\ell\land\dd_{\Sigma_k}\Hat{\boldsymbol{\alpha}}^\ell+\int_{\Sigma_k}\Hat{\Theta}_{\calE,y}\left(i^*\Hat{\boldsymbol{A}},i^*\Hat{\boldsymbol{B}},\Hat{\boldsymbol{\alpha}},\Hat{\boldsymbol{\beta}}\right)\\+\int_{\Sigma_k}R_\ell^j\left(y,\Hat{\boldsymbol{\alpha}}\right)\Hat{\boldsymbol{\beta}}_j\land\dd_{N} y^\ell,
\end{multline}
where $R\in \Omega^1\left(N,\mathrm{Der}\left(\Hat{\Sym}(T^*N)\right)\right)$.

\section{From pre-observables to observables}

\subsection{AKSZ-observables}
We want to construct the observables for the AKSZ theories out of pre-obsrvables by integrating out means of auxiliary fields similarly as in Proposition \ref{prop:from_pre-observable_to_observable}. For a submanifold $i\colon \Sigma_k\hookrightarrow \Sigma_d$ we set 
\begin{equation}
    \calO_{\Sigma_k}(\calA,\calB;i)=\int_{\calL\subset \calF_{\Sigma_k}}\mathscr{D}[\calB]\ee^{\frac{\I}{\hbar}\calS^\calN_{\Sigma_k}(\calA,\calB;i)}\in C^\infty\left(\calF^\calM_{\Sigma_d}\right).
\end{equation}
There are several things to note. First, $\calO_{\Sigma_k}$ depends only on the fields in $\calF^\calM_{\Sigma_d}$ via the pullback of $i\colon \Sigma_k\hookrightarrow \Sigma_d$, hence $Q_{\Sigma_d}(\calO_{\Sigma_k})=0$ which is consistent with the definition of an observable. Moreover, the $Q_{\Sigma_k}$-cohomology class of $\calO_{\Sigma_k}$ does not depend on deformations of the Lagrangian submanifold $\calL\subset\calF_{\Sigma_k}$ and is invariant under isotopies of $\Sigma_k$. We get the following Proposition. 
\begin{prop}
\label{prop:invariance_1}
Let $\mathrm{Diff}_0(\Sigma_k)\subset \mathrm{Diff}(\Sigma_k)$ be diffeomorphisms on $\Sigma_k$ which are connected to the identity. Then for $\varphi_k\in \mathrm{Diff}(\Sigma_k)$ we have
\begin{equation}
    \calO_{\Sigma_k}(\calA,\calB;i\circ \varphi_k)=\calO_{\Sigma_k}(\calA,\calB;i)+\textnormal{$Q_{\Sigma_k}$-exact}.
\end{equation}
\end{prop}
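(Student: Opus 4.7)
The plan is to reduce the proposition to the homotopy-invariance of the observable under continuous deformations of the Lagrangian submanifold, which is part of Proposition \ref{prop:from_pre-observable_to_observable}. The essential ingredient is the exact equivariance of the pre-observable action $\calS^\calN_{\Sigma_k}$ recorded in Remark \ref{rem:invariance_AKSZ_diffeo}: specializing to $\varphi_d = \id_{\Sigma_d}$ one has
\[
\calS^\calN_{\Sigma_k}(\calA,\calB; i\circ \varphi_k) = \calS^\calN_{\Sigma_k}\bigl(\calA,(\varphi_k^{-1})^*\calB; i\bigr).
\]

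First I would substitute this identity into the functional integral defining $\calO_{\Sigma_k}(\calA,\calB; i\circ\varphi_k)$ and then perform the change of integration variable $\calB \mapsto \Phi_{\varphi_k}(\calB) := \varphi_k^*\calB$. Since $\varphi_k\in\mathrm{Diff}_0(\Sigma_k)$ is an orientation-preserving diffeomorphism, the induced map on the auxiliary space of fields $\calF^\calN_{\Sigma_k}$ is a symplectomorphism for $\omega^\calN_{\Sigma_k}$ and preserves the auxiliary BV measure; consequently it sends the Lagrangian $\calL\subset \calF^\calN_{\Sigma_k}$ to another Lagrangian $\Phi_{\varphi_k}(\calL)$, and the functional integral transforms as
\[
\calO_{\Sigma_k}(\calA,\calB; i\circ\varphi_k) = \int_{\Phi_{\varphi_k}(\calL)} \mathscr{D}[\calB]\, e^{\frac{\I}{\hbar}\calS^\calN_{\Sigma_k}(\calA,\calB; i)}.
\]
Next I would exploit the hypothesis $\varphi_k \in \mathrm{Diff}_0(\Sigma_k)$ to pick a smooth isotopy $\{\varphi_k^t\}_{t\in[0,1]}$ with $\varphi_k^0 = \id$ and $\varphi_k^1 = \varphi_k$. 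The family $\calL_t := \Phi_{\varphi_k^t}(\calL)$ is then a continuous family of Lagrangian submanifolds interpolating between $\calL$ at $t = 0$ and $\Phi_{\varphi_k}(\calL)$ at $t = 1$, so Proposition \ref{prop:from_pre-observable_to_observable} gives precisely that $\calO_\calL$ and $\calO_{\Phi_{\varphi_k}(\calL)}$ are equivalent observables, which combined with the displayed identity yields the claim.

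The main obstacle will be of a technical nature: I have to verify that the pull-back action $\calB \mapsto \varphi_k^*\calB$ really preserves the BV data defining the pre-observable integral, namely $\omega^\calN_{\Sigma_k}$, the compatible volume $\mu^\mathrm{aux}$, and the Lagrangian property of $\calL$. All three facts should follow from the naturality of the transgression $\mathscr{T}_{\Sigma_k}$ and of the fiber integration along $T[1]\Sigma_k$ with respect to orientation-preserving diffeomorphisms of $\Sigma_k$; the restriction to $\mathrm{Diff}_0(\Sigma_k)$ in the hypothesis is precisely what guarantees both orientation preservation and the existence of the required isotopy of Lagrangians needed to invoke Proposition \ref{prop:from_pre-observable_to_observable}.
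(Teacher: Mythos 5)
Your proposal is correct and follows essentially the same route as the paper's proof: both use the equivariance identity of Remark \ref{rem:invariance_AKSZ_diffeo} with $\varphi_d=\id$, perform the change of variables in the functional integral (moving the Lagrangian to its pullback while the measure is preserved), and then invoke the isotopy of $\varphi_k$ to the identity to conclude that the two Lagrangians are homotopic, so the observables differ by a $Q_{\Sigma_k}$-exact term via Proposition \ref{prop:from_pre-observable_to_observable}. The only differences are cosmetic (you push forward by $\varphi_k^*$ where the paper pulls back by $(\varphi_k^{-1})^*$, and you spell out the technical verification of measure and symplectic invariance slightly more explicitly).
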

\begin{proof}
Indeed, we have 
\begin{multline}
    \calO_{\Sigma_k}(\calA,\calB;i\circ\varphi_k)=\int_{\calL\subset \calF_{\Sigma_k}}\mathscr{D}[\calB]\ee^{\frac{\I}{\hbar}\calS^\calN_{\Sigma_k}(\calA,\calB;i\circ\varphi_k)}=\int_{\calL}\mathscr{D}[\calB]\ee^{\frac{\I}{\hbar}\calS^\calN_{\Sigma_k}\left(\calA,(\varphi_k^{-1})^*\calB;i\right)}\\ =\int_{(\varphi_k^{-1})^*\calL}(\varphi_k^{-1})^*\mathscr{D}[\calB]\ee^{\frac{\I}{\hbar}\calS^\calN_{\Sigma_k}(\calA,\calB;i)}=\int_{(\varphi_k^{-1})^*\calL}\mathscr{D}[\calB]\ee^{\frac{\I}{\hbar}\calS^\calN_{\Sigma_k}(\calA,\calB;i)}\\=\int_{\calL}\mathscr{D}[\calB]\ee^{\frac{\I}{\hbar}\calS^\calN_{\Sigma_k}(\calA,\calB;i)}+\textnormal{$Q_{\Sigma_k}$-exact}=\calO_{\Sigma_k}(\calA,\calB;i)+\textnormal{$Q_{\Sigma_k}$-exact},
\end{multline}
where we think of $\int_\calL\mathscr{D}[\calB]$ to be in fact given by $\int_\calL\sqrt{\mu}\vert_\calL$, with $\mu$ being the functional integral measure on $\calF_{\Sigma_k}^\calN$. Moreover, we have used the isotopy property of $\varphi_k$ to make sure that $\calL$ and $(\varphi_k^{-1})^*\calL$ are indeed homotopic.
\end{proof}

There is a similar invariance result for diffeomorphisms of the ambient manifold $\Sigma_d$ which is the content of the following Proposition.   

\begin{prop}
\label{prop:invariance_2}
For a diffeomorphism $\varphi_d\in \mathrm{Diff}(\Sigma_d)$ we get 
\begin{equation}
    \calO_{\Sigma_k}(\calA,\calB;\varphi_d\circ i)=\calO_{\Sigma_k}(\varphi_d^*\calA,\calB;i).
\end{equation}
\end{prop}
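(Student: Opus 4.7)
The plan is to show that the identity already holds at the level of the integrand, so that \emph{no} $Q_{\Sigma_k}$-exact correction term is required (in contrast to Proposition \ref{prop:invariance_1}, where a change of Lagrangian was involved). The key observation is that the embedding $i$ enters the auxiliary action $\calS^{\calN}_{\Sigma_k}(\calA,\calB;i)$ \emph{only} through the target term $p^*\mathscr{T}^{\calE}_{\Sigma_k}(\Theta_{\calE})$, where $p=i^{*}\colon \calF^{\calM}_{\Sigma_d}\to \calF^{\calM}_{\Sigma_k}$. The kinetic term $\iota_{\Hat{\dd}_{\Sigma_k}}\mathscr{T}_{\Sigma_k}(\alpha_{\calN})$ is constructed purely from the auxiliary superfield $\calB\in\calF^{\calN}_{\Sigma_k}$ and the intrinsic de Rham differential on $T[1]\Sigma_k$, hence depends on neither $\calA$ nor $i$.

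The main computational step is to rewrite the pullback by $\varphi_d\circ i$ as a composite. Under composition of maps we have the functorial identity $(\varphi_d\circ i)^{*}\calA = i^{*}(\varphi_d^{*}\calA)$, so evaluating the transgressed Hamiltonian gives
\begin{equation}
(\varphi_d\circ i)^{*}\mathscr{T}^{\calE}_{\Sigma_k}(\Theta_{\calE})(\calA,\calB) \;=\; i^{*}\mathscr{T}^{\calE}_{\Sigma_k}(\Theta_{\calE})(\varphi_d^{*}\calA,\calB).
\end{equation}
Combined with the $i$-independence of the kinetic term, this yields the pointwise identity
\begin{equation}
\calS^{\calN}_{\Sigma_k}(\calA,\calB;\varphi_d\circ i) \;=\; \calS^{\calN}_{\Sigma_k}(\varphi_d^{*}\calA,\calB;i).
\end{equation}

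Finally I would plug this into the path integral definition of $\calO_{\Sigma_k}$. Since the Lagrangian submanifold $\calL\subset\calF^{\calN}_{\Sigma_k}$, the measure $\mathscr{D}[\calB]$ and the domain of integration all live on the auxiliary side and are untouched by the ambient diffeomorphism $\varphi_d\in\mathrm{Diff}(\Sigma_d)$, integrating the above pointwise identity gives exactly $\calO_{\Sigma_k}(\calA,\calB;\varphi_d\circ i)=\calO_{\Sigma_k}(\varphi_d^{*}\calA,\calB;i)$. I do not expect any genuine obstacle here, as this is essentially the functoriality of pullbacks combined with Remark \ref{rem:invariance_AKSZ_diffeo}; the only subtlety to check carefully is that the measure $\mathscr{D}[\calB]$ and the choice of $\calL$ are indeed intrinsic to $\Sigma_k$ and $\calN$ and carry no hidden dependence on the embedding, so that the equality is strict rather than merely cohomological.
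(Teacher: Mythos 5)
Your proposal is correct and follows essentially the same route as the paper, which justifies the proposition in one line by noting that $\calO_{\Sigma_k}$ depends on the ambient field $\calA$ only through the pullback by $i$ (cf.\ also Remark \ref{rem:invariance_AKSZ_diffeo}). Your elaboration — isolating the $i$-dependence in the target term, using $(\varphi_d\circ i)^{*}\calA = i^{*}(\varphi_d^{*}\calA)$, and checking that $\calL$ and $\mathscr{D}[\calB]$ are intrinsic to the auxiliary side so the identity is strict — is just a fleshed-out version of that same observation.
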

This is indeed true since $\calO_{\Sigma_k}$ only depends of the ambient field $\calA$ via the pullback by $i$.

Another important property is that the correlator of an observable should be invariant under ambient isotopies. 

\begin{prop}
For $\varphi_d\in\mathrm{Diff}_0(\Sigma_d)$ we have 
\begin{equation}
    \langle \calO_{\Sigma_k}(\calA,\calB;\varphi_d\circ i)\rangle=\langle \calO_{\Sigma_k}(\calA,\calB;i)\rangle.
\end{equation}
\end{prop}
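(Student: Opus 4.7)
The plan is to reduce this statement to Proposition \ref{prop:invariance_2} together with the invariance statements of Theorem \ref{thm:BV}, using the fact that the AKSZ data (action, symplectic form, and functional measure) transforms covariantly under diffeomorphisms of the source.

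First, by Proposition \ref{prop:invariance_2}, one has the pointwise identity
\begin{equation*}
    \calO_{\Sigma_k}(\calA,\calB;\varphi_d\circ i)=\calO_{\Sigma_k}(\varphi_d^*\calA,\calB;i),
\end{equation*}
so that the expectation value can be rewritten as
\begin{equation*}
    \langle\calO_{\Sigma_k}(\calA,\calB;\varphi_d\circ i)\rangle=\int_{\calL\subset\calF^\calM_{\Sigma_d}}\sqrt{\mu_{\Sigma_d}}\big\vert_\calL\,\ee^{\frac{\I}{\hbar}\calS_{\Sigma_d}(\calA)}\calO_{\Sigma_k}(\varphi_d^*\calA,\calB;i).
\end{equation*}
Next I would perform the change of variables $\calA\mapsto \varphi_d^*\calA$ in the functional integral. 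Because the AKSZ action $\calS_{\Sigma_d}=\iota_{\Hat{\dd}_{\Sigma_d}}\mathscr{T}_{\Sigma_d}(\alpha_\calM)+\mathscr{T}_{\Sigma_d}(\Theta_\calM)$ is constructed purely from the transgression map and the de Rham differential on $T[1]\Sigma_d$, it is invariant under the induced action of $\mathrm{Diff}(\Sigma_d)$, i.e.\ $\calS_{\Sigma_d}(\varphi_d^*\calA)=\calS_{\Sigma_d}(\calA)$; the same reasoning applies to the BV symplectic form $\omega_{\Sigma_d}$ and hence to the compatible measure $\mu_{\Sigma_d}$, which is also diffeomorphism invariant (cf.\ the discussion preceding Remark \ref{rem:invariance_AKSZ_diffeo}). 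After the change of variables the Lagrangian submanifold is replaced by its pullback $(\varphi_d^{-1})^*\calL\subset\calF^\calM_{\Sigma_d}$, yielding
\begin{equation*}
    \langle\calO_{\Sigma_k}(\calA,\calB;\varphi_d\circ i)\rangle=\int_{(\varphi_d^{-1})^*\calL}\sqrt{\mu_{\Sigma_d}}\big\vert_{(\varphi_d^{-1})^*\calL}\,\ee^{\frac{\I}{\hbar}\calS_{\Sigma_d}(\calA)}\calO_{\Sigma_k}(\calA,\calB;i).
\end{equation*}

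Finally I would invoke the BV gauge independence theorem. Since $\varphi_d\in\mathrm{Diff}_0(\Sigma_d)$, there exists a smooth isotopy $\{\varphi_d^t\}_{t\in[0,1]}$ from $\id_{\Sigma_d}$ to $\varphi_d$, and its pullback action produces a continuous family of Lagrangian submanifolds $\calL_t:=(\varphi_d^{-t})^*\calL$ interpolating between $\calL_0=\calL$ and $\calL_1=(\varphi_d^{-1})^*\calL$. The integrand $\ee^{\frac{\I}{\hbar}\calS_{\Sigma_d}}\calO_{\Sigma_k}(\,\cdot\,,\calB;i)$ is a product of a $\Delta$-closed half-density (by the QME for $\calS_{\Sigma_d}$) and a $Q_{\Sigma_d}$-closed observable, and hence is itself BV-closed. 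By the second part of Theorem \ref{thm:BV}, the integral over $\calL_t$ is independent of $t$, which gives the desired equality $\langle\calO_{\Sigma_k}(\calA,\calB;\varphi_d\circ i)\rangle=\langle\calO_{\Sigma_k}(\calA,\calB;i)\rangle$.

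The main obstacle in executing this plan rigorously is the step invoking invariance of the BV measure $\mu_{\Sigma_d}$ and the compatibility of the isotopy-induced family $\calL_t$ with the hypotheses of Theorem \ref{thm:BV}; at the perturbative level this is usually handled by choosing a gauge-fixing (e.g.\ a Hodge decomposition with respect to a metric that transforms naturally under $\varphi_d^t$) so that the deformation $\calL_t$ is genuinely continuous and the measure transforms by a trivial Jacobian. The connectedness hypothesis $\varphi_d\in\mathrm{Diff}_0(\Sigma_d)$ is precisely what is needed to produce such a homotopy of Lagrangians.
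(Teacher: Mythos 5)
Your proof follows essentially the same route as the paper's: rewrite via Proposition \ref{prop:invariance_2}, change variables in the functional integral using the diffeomorphism invariance of $\calS_{\Sigma_d}$ and of the measure (the theory being topological), and then return from the transported Lagrangian to $\calL$ by the homotopy furnished by $\varphi_d\in\mathrm{Diff}_0(\Sigma_d)$ together with Theorem \ref{thm:BV}. Your version is if anything slightly more explicit than the paper's about why the last step is an honest equality (the integrand being BV-closed), so it is correct and matches the intended argument.
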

\begin{proof}
Indeed, we have 
\begin{multline}
    \langle\calO_{\Sigma_k}(\calA,\calB;\varphi_d\circ i)\rangle=\int_{\calL\subset\calF_{\Sigma_k}}\mathscr{D}[\calA]\calO_{\Sigma_k}(\calA,\calB;\varphi_d\circ i)\ee^{\frac{\I}{\hbar}\calS_{\Sigma_d}(\calA)}\\=\int_{\calL}\mathscr{D}[\calA]\calO_{\Sigma_k}(\varphi_d^*\calA,\calB;i)\ee^{\frac{\I}{\hbar}\calS_{\Sigma_d}(\calA)}=\int_{\varphi_d^*\calL}(\varphi_d^*)_*\mathscr{D}[\calA]\calO_{\Sigma_k}(\calA,\calB;i)\ee^{\frac{\I}{\hbar}\calS_{\Sigma_d}((\varphi^{-1})^*\calA)}\\=\int_{\varphi_d^*\calL}\mathscr{D}[\calA]\calO_{\Sigma_k}(\calA,\calB;i)\ee^{\frac{\I}{\hbar}\calS_{\Sigma_d}(\calA)}=\int_{\calL}\mathscr{D}[\calA]\calO_{\Sigma_k}(\calA,\calB;i)\ee^{\frac{\I}{\hbar}\calS_{\Sigma_d}(\calA)}\\=\langle\calO_{\Sigma_k}(\calA,\calB;i)\rangle,
\end{multline}
where we have used Proposition \ref{prop:invariance_2} and that the AKSZ action $\calS_{\Sigma_d}$ (see Remark \ref{rem:invariance_AKSZ_diffeo}) and the functional integral measure $\mathscr{D}[\calA]$ are invariant under diffeomorphisms for our theory is topological. 
\end{proof}

\subsection{Formal global AKSZ-observables}
The construction above can be extended to a formal global one if we start with a formal global pre-observable. Then we have 
\begin{equation}
    \Hat{\calO}_{\Sigma_k,y}\left(\Hat{\calA},\Hat{\calB};i\right)=\int_{\Hat{\calL}\subset \Hat{\calF}^\calN_{\Sigma_k,y}}\mathscr{D}\left[\Hat{\calB}\right]\ee^{\frac{\I}{\hbar}\Hat{\calS}^\mathrm{global}_{\Sigma_k,y}\left(\Hat{\calA},\Hat{\calB};i\right)}\in C^\infty\left(\Hat{\calF}^\calM_{\Sigma_d}\right).
\end{equation}
If we start with a split AKSZ theory we get 
\begin{equation}
    \Hat{\calO}_{\Sigma_k,y}\left(\Hat{\boldsymbol{A}},\Hat{\boldsymbol{B}};i\right)=\int_{\Hat{\calL}\subset \Hat{\calF}^\calN_{\Sigma_k,y}}\mathscr{D}\left[\Hat{\boldsymbol{\alpha}}\right]\mathscr{D}\left[\Hat{\boldsymbol{\beta}}\right]\ee^{\frac{\I}{\hbar}\Hat{\calS}^\mathrm{global}_{\Sigma_k,y}\left(\Hat{\boldsymbol{A}},\Hat{\boldsymbol{B}},\Hat{\boldsymbol{\alpha}},\Hat{\boldsymbol{\beta}};i\right)}\in C^\infty\left(\Hat{\calF}^\calM_{\Sigma_d}\right).
\end{equation}
We have the following proposition (quantum version of \eqref{eq:dCME}).
\begin{prop}[dQME]
\label{prop:dQME}
The \emph{differential Quantum Master Equation} (dQME) for the formal global split AKSZ-observable holds:
\begin{equation}
\label{eq:dQME}
\boxed{
    \dd_y\Hat{\calO}_{\Sigma_k,y}-(-1)^d\I\hbar\Delta\Hat{\calO}_{\Sigma_k,y}=0.}
\end{equation}
\end{prop}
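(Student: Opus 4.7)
My plan is to argue by analogy with Propositions \ref{prop:from_pre-observable_to_observable} and \ref{prop:dCME}, promoting the latter to the quantum level while retaining the Lagrangian-integration argument of the former. The key observation is that the ambient BV Laplacian $\Delta$ acts only on the ambient fields $(\Hat{\boldsymbol{A}},\Hat{\boldsymbol{B}})$ and commutes with the auxiliary functional integration, while $\dd_y$ acts only on the fiber-side background; both may therefore be pushed under the integral defining $\Hat{\calO}_{\Sigma_k,y}$. Applying the generalized BV Leibniz rule then rewrites the left-hand side of \eqref{eq:dQME}, up to the overall transgression sign $(-1)^d$, as
\begin{equation*}
\left(\dd_y-(-1)^d\I\hbar\Delta\right)\Hat{\calO}_{\Sigma_k,y}
=\int_{\Hat{\calL}}\mathscr{D}[\Hat{\boldsymbol{\alpha}}]\mathscr{D}[\Hat{\boldsymbol{\beta}}]\,
\tfrac{\I}{\hbar}\!\left[\dd_y\Hat{\calS}^\mathrm{global}_{\Sigma_k,y}+\tfrac{1}{2}\{\Hat{\calS}^\mathrm{global}_{\Sigma_k,y},\Hat{\calS}^\mathrm{global}_{\Sigma_k,y}\}^{\mathrm{amb}}-\I\hbar\Delta\Hat{\calS}^\mathrm{global}_{\Sigma_k,y}\right]\ee^{\frac{\I}{\hbar}\Hat{\calS}^\mathrm{global}_{\Sigma_k,y}},
\end{equation*}
where the superscript ``$\mathrm{amb}$'' refers to the BV bracket induced by $\Hat{\omega}_{\Sigma_d,x}$.

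The crux of the argument will be a global quantum pre-observable identity, namely
\begin{equation*}
\dd_y\Hat{\calS}^\mathrm{global}_{\Sigma_k,y}+\tfrac{1}{2}\{\Hat{\calS}^\mathrm{global}_{\Sigma_k,y},\Hat{\calS}^\mathrm{global}_{\Sigma_k,y}\}^{\mathrm{tot}}-\I\hbar(\Delta+\Delta^{\mathrm{aux}})\Hat{\calS}^\mathrm{global}_{\Sigma_k,y}=0,
\end{equation*}
where $\{\cdot,\cdot\}^{\mathrm{tot}}$ is the sum of the ambient and auxiliary BV brackets. This is the quantum analogue of the vanishing established in the proof of Theorem \ref{thm:global_pre-observable}, and I intend to derive it by re-running that calculation while tracking the additional $\I\hbar\Delta^{\mathrm{aux}}$ contributions coming from the measure $\mu^{\mathrm{aux}}$. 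The classical cancellations (the Hamiltonian $Q$-bundle identity $Q_\calM(\Theta_\calE)+\tfrac{1}{2}\{\Theta_\calE,\Theta_\calE\}_{\omega_\calN}=0$ and the Grothendieck flatness $\dd_yR+\tfrac{1}{2}[R,R]=0$) will still supply the required cancellations at the classical level, while the divergence terms will reassemble as $\Delta+\Delta^{\mathrm{aux}}$.

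Substituting this identity into the expression for $(\dd_y-(-1)^d\I\hbar\Delta)\Hat{\calO}_{\Sigma_k,y}$ replaces the ambient content inside the square brackets by minus the corresponding auxiliary content, and the generalized BV Leibniz rule then repackages the integrand as a multiple of $\Delta^{\mathrm{aux}}\ee^{\frac{\I}{\hbar}\Hat{\calS}^\mathrm{global}_{\Sigma_k,y}}$. Since $\Hat{\calL}\subset\Hat{\calF}^\calN_{\Sigma_k,y}$ is Lagrangian, part (1) of Theorem \ref{thm:BV} forces this remaining integral to vanish, establishing \eqref{eq:dQME}.

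The principal obstacle is the second step: the quantum pre-observable identity requires checking that $\mu^{\mathrm{aux}}$ transforms compatibly under the formal exponential map $\phi\colon T\calN\to\calN$ used to linearize the auxiliary target, and that the Grothendieck piece $\Hat{\calS}_{\Sigma_k,R,y}$ is compatible with $\Delta^{\mathrm{aux}}$ so that no anomalous divergence contributions arise. A secondary bookkeeping hurdle is pinning down the $(-1)^d$ prefactor, which I expect to originate from the transgression sign $\Hat{\omega}_{\Sigma_d,x}=(-1)^d\mathscr{T}_{\Sigma_d}(\omega_\calM)$ together with the graded commutator needed to slide $\dd_y$ through the fiber integration on $T[1]\Sigma_k$.
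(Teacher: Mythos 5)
Your proposal follows essentially the same route as the paper's proof, and the ingredients you isolate are the right ones; the differences are organizational. The paper does not first assemble the full combination $\dd_y\Hat{\calO}_{\Sigma_k,y}-(-1)^d\I\hbar\Delta\Hat{\calO}_{\Sigma_k,y}$ and then appeal to part (1) of Theorem \ref{thm:BV}; instead it computes $\dd_y\Hat{\calO}_{\Sigma_k,y}$ directly by differentiating under the functional integral, using that the $y$-dependence of the integrand enters only through $\dd_y\Hat{\calS}^{\mathrm{AKSZ}}_{\Sigma_k,y}=\{\Hat{\calS}_{\Sigma_k,R,y},\Hat{\calS}^{\mathrm{AKSZ}}_{\Sigma_k,y}\}_{\omega^\calN_{\Sigma_k,y}}$, and then recognizes the resulting integrand as $\Delta$-exact, namely $-\Delta\int\ee^{\frac{\I}{\hbar}\Hat{\calS}^{\mathrm{global}}_{\Sigma_k,y}}\Hat{\calS}_{\Sigma_k,R,y}$, \emph{provided} $\Delta\Hat{\calS}_{\Sigma_k,R,y}=0$. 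That proviso is precisely the ``principal obstacle'' you flag at the end: the paper resolves it by noting that $\Delta\Hat{\calS}_{\Sigma_k,R,y}=0$ holds when $\chi(\Sigma_k)=0$ or when $\Div_{\mathsf{T}\phi^*\mu}R=0$, and that the latter is equivalent to $\dd_y\mathsf{T}\phi^*\mu=0$, which can always be arranged by an appropriate choice of formal exponential map — so your worry about the compatibility of the auxiliary measure with $\phi$ is exactly where the paper puts its hypothesis. Your ``global quantum pre-observable identity'' is the dQME for the action that the paper records at the end of its proof, and under $\Delta$-closedness of the $R$-term it reduces to the classical statement already established in Proposition \ref{prop:dCME} and Theorem \ref{thm:global_pre-observable}, so the classical cancellations you invoke (the Hamiltonian $Q$-bundle identity and Grothendieck flatness) do carry the load; just be aware that this identity is not an independent new computation but the already-proven dCME plus the measure condition, and that the ambient bracket term $\{\Hat{\calS}^{\mathrm{global}}_{\Sigma_k,y},\Hat{\calS}^{\mathrm{global}}_{\Sigma_k,y}\}^{\mathrm{amb}}$ in your first display contributes nothing since $\Hat{\calS}^{\mathrm{global}}_{\Sigma_k,y}$ depends on the ambient fields only through the pullback along $i$.
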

\begin{proof}
Note that we have 
\begin{equation}
    \dd_y\Hat{\calO}_{\Sigma_k,y}=-\frac{\I}{\hbar}\int_{\Hat{\calL}\subset \Hat{\calF}^\calN_{\Sigma_k,y}}\mathscr{D}\left[\Hat{\boldsymbol{\alpha}}\right]\mathscr{D}\left[\Hat{\boldsymbol{\beta}}\right]\ee^{\frac{\I}{\hbar}\Hat{\calS}^\mathrm{global}_{\Sigma_k,y}\left(\Hat{\boldsymbol{A}},\Hat{\boldsymbol{B}},\Hat{\boldsymbol{\alpha}},\Hat{\boldsymbol{\beta}};i\right)}\left\{\Hat{\calS}_{\Sigma_k,R,y},\Hat{\calS}^\mathrm{AKSZ}_{\Sigma_k,y}\right\}_{\omega^\calN_{\Sigma_k,y}},
\end{equation}
which we can write as 
\begin{multline}
    -\frac{\I}{\hbar}\int_{\Hat{\calL}\subset \Hat{\calF}^\calN_{\Sigma_k,y}}\mathscr{D}\left[\Hat{\boldsymbol{\alpha}}\right]\mathscr{D}\left[\Hat{\boldsymbol{\beta}}\right]\ee^{\frac{\I}{\hbar}\Hat{\calS}^\mathrm{global}_{\Sigma_k,y}\left(\Hat{\boldsymbol{A}},\Hat{\boldsymbol{B}},\Hat{\boldsymbol{\alpha}},\Hat{\boldsymbol{\beta}};i\right)}\left\{\Hat{\calS}_{\Sigma_k,R,y},\Hat{\calS}^\mathrm{AKSZ}_{\Sigma_k,y}\right\}_{\omega^\calN_{\Sigma_k,y}}\\=-\Delta\int_{\Hat{\calL}\subset \Hat{\calF}^\calN_{\Sigma_k,y}}\mathscr{D}\left[\Hat{\boldsymbol{\alpha}}\right]\mathscr{D}\left[\Hat{\boldsymbol{\beta}}\right]\ee^{\frac{\I}{\hbar}\Hat{\calS}^\mathrm{global}_{\Sigma_k,y}\left(\Hat{\boldsymbol{A}},\Hat{\boldsymbol{B}},\Hat{\boldsymbol{\alpha}},\Hat{\boldsymbol{\beta}};i\right)}\Hat{\calS}_{\Sigma_k,R,y}
\end{multline}
if we assume that $\Delta\Hat{\calS}_{\Sigma_k,R,y}=0$, which is true e.g. if the Euler characteristic of $\Sigma_k$ is zero or if $\Div_{\mathsf{T}\phi^*\mu}R=0$, where $\mu$ is some volume form on $\calN$. Note that $\dd_y\mathsf{T}\phi^*\mu=-L_R\mathsf{T}\phi^*\mu$ which means that $\Div_{\mathsf{T}\phi^*\mu}R=0$ if and only if $\dd_y\mathsf{T}\phi^*\mu=0$. For any volume element $\mu$ it is always possible to find a formal exponential map $\phi$ such that the latter condition is satisfied. Note that this is then also translated into the \emph{differential Quantum Master Equation} 
\begin{equation}
    \dd_y\Hat{\calS}^\mathrm{global}_{\Sigma_k,y}+\frac{1}{2}\left\{\Hat{\calS}^\mathrm{global}_{\Sigma_k,y},\Hat{\calS}^\mathrm{global}_{\Sigma_k,y}\right\}_{\Hat{\omega}_{\Sigma_k,y}}-\I\hbar\Delta\Hat{\calS}^\mathrm{global}_{\Sigma_k,y}=0,
\end{equation}
and by the assumption $\Delta\Hat{\calS}^\mathrm{global}_{\Sigma_k,y}=0$, we obtain the differential CME as in \eqref{eq:dCME}. Hence the claim follows.
\end{proof}

\begin{rem}
One can check that $\Hat{Q}_{\Sigma_d,x}\left(\Hat{\calO}_{\Sigma_k,y}\right)=0$ and that Proposition \ref{prop:invariance_1} and \ref{prop:invariance_2} also hold for the formal global extension if we indeed start with a formal global pre-observable, i.e. that the assumption of Theorem \ref{thm:global_pre-observable} is satisfied.
\end{rem}

\begin{rem}
The dQME as in \eqref{eq:dQME} can be thought of as a descent equation for different form degrees. In fact we have 
\begin{equation}
    \Hat{\delta}_{\mathrm{BV}}\Hat{\calO}_{\Sigma_k,y}=(-1)^d\dd_y\Hat{\calO}_{\Sigma_k,y}
\end{equation}
since $\Hat{\calO}_{\Sigma_k,y}$ is a formal global observable. We have set $\Hat{\delta}_{\mathrm{BV}}=\Hat{Q}_{\Sigma_d,x}-\I\hbar\Delta$.
\end{rem}

\begin{rem}
Note that if $\calN$ is a point, we have $\calV_\calE=0$, $\omega^\calN_{\Sigma_k}=0$ and $\Theta_\calE\in C^\infty(\calM)$. The associated pre-observable is then given by 
\begin{equation}
    \calF^\calN_{\Sigma_k}=\mathrm{pt},\quad \calV^\calE_{\Sigma_k}=0,\quad \omega_{\Sigma_k}^\calN=0,\quad \calS^\calN_{\Sigma_k}(\calA)=\int_{\Sigma_k}\Theta_\calE(i^*\calA).
\end{equation}
Hence, since there are no auxiliary fields $\calB$, the constructed observable is given by 
\begin{equation}
    \calO_{\Sigma_k}(\calA;i)=\ee^{\frac{\I}{\hbar}\int_{\Sigma_k}\Theta_\calE(i^*\calA)}.
\end{equation}
This can be easily lifted to a formal global pre-observable by 
\begin{equation}
    \Hat{\calF}^\calN_{\Sigma_k,y}=\mathrm{pt},\quad \Hat{\calV}^\calE_{\Sigma_k,y}=0,\quad \Hat{\omega}_{\Sigma_k,y}^\calN=0,\quad \Hat{\calS}^\mathrm{global}_{\Sigma_k,y}\left(\Hat{\calA}\right)=\int_{\Sigma_k}\Hat{\Theta}_{\calE,y}\left(i^*\Hat{\calA}\right),
\end{equation}
Thus, we get a formal global observable by 
\begin{equation}
    \Hat{\calO}_{\Sigma_k,y}\left(\Hat{\calA};i\right)=\ee^{\frac{\I}{\hbar}\int_{\Sigma_k}\Hat{\Theta}_{\calE,y}\left(i^*\Hat{\calA}\right)}.
\end{equation}
\end{rem}

\subsection{Loop observables}
\label{subsec:loops_observables}
Let us consider the case where $S^1$ is embedded into $\Sigma_d$, i.e. $i\colon \Sigma_1:=S^1\hookrightarrow \Sigma_d$ and assume that $\calN$ is given by an ordinary symplectic manifold with symplectic structure $\omega_\calN=\dd_\calN\alpha_\calN$, which means that $\calN$ is concentrated in degree zero. 
Let $\sigma$ denote the coordinate on $\Sigma_1$. Then we can write the auxiliary field as 
\begin{equation}
    \calB^\nu(\sigma,\dd_{\Sigma_1}\sigma)=\calB^\nu_{(0)}(\sigma)+\calB^\nu_{(1)}\dd_{\Sigma_1}\sigma,
\end{equation}
and hence we get a pre-observable by
\begin{align}
    \label{eq:space_of_fields_loop}
    \calF^\calN_{\Sigma_1}&=\Map_{\mathrm{GrMnf}}(T[1]\Sigma_1,\calN),\\
    \begin{split}
    \label{eq:omega_loop}
    \omega^\calN_{\Sigma_1}&=-\oint_{\Sigma_1}\omega^\calN_{\nu_1\nu_2}\left(\calB_{(0)}\right)\delta\calB_{(0)}^{\nu_1}\land \delta\calB_{(1)}^{\nu_2}\\
    &+\oint_{\Sigma_1}\frac{1}{2}\calB^{\nu_3}_{(1)}\de_{\nu_3}\omega_{\nu_1\nu_2}^\calN\left(\calB_{(0)}\right)\delta\calB_{(0)}^{\nu_1}\land \delta\calB_{(0)}^{\nu_2},\end{split}\\
    \calS^\calN_{\Sigma_1}&=\oint_{\Sigma_1}\alpha^\calN_\nu\left(\calB_{(0)}\right)\dd_{\Sigma_1}\calB^\nu_{(0)}+\oint_{\Sigma_1}\Theta_{\calE}(i^*\calA,\calB),
\end{align}
where $\omega^\calN_{\nu_1\nu_2}$ are the coefficients of $\omega_\calN$ and $\alpha^\calN_\nu$ are the coefficients of $\alpha_\calN$. Note that in this setting we have
\begin{equation}
    \calF^\calN_{\Sigma_1}=\left\{\left(\calB_{(0)},\calB_{(1)}\right)\,\Big|\, \calB_{(0)}\colon \Sigma_1\to \calN,\,\, \calB_{(1)}\in \Gamma\left(\Sigma_1,T^*\Sigma_1\otimes \calB_{(0)}^*T^*\calN\right)[-1]\right\}.
\end{equation}
Hence we can construct the observable as 
\begin{equation}
    \calO_{\Sigma_1}(\calA;i)=\int_{\calL}\mathscr{D}\left[\calB_{(0)}\right]\ee^{\frac{\I}{\hbar}\oint_{\Sigma_1}\alpha_\nu^\calN(\calB_{(0)})\dd_{\Sigma_1}\calB_{(0)}+\frac{\I}{\hbar}\oint_{\Sigma_1}\Theta_{\calE}\left(i^*\calA,\calB_{(0)}\right)},
\end{equation}
where we have chosen the natural Lagrangian submanifold 
\begin{equation}
\calL=\Map_{\mathrm{Mnf}}(\Sigma_1,\calN)\subset\calF^\calN_{\Sigma_1}, 
\end{equation}
which is obtained by setting all odd variables $\calB_{(1)}$ to zero.

\begin{rem}[Bohr--Sommerfeld]
\label{rem:Bohr-Sommerfeld}
If $\calN$ is a differential graded symplectic manifold of degree different from zero, we know that the symplectic form $\omega_\calN$ is always exact since we can write it as \[\omega_\calN=\dd_\calN(\iota_E\omega_\calN),\]
(see \cite{Roytenberg2005}) where $E$ is the \emph{Euler vector field}. For the degree zero case, the symplectic form does not automatically have a primitive 1-form and hence one can not immediately define $\calS^\mathrm{kin}_{\Sigma_1}$. However, one can also assume that $\omega_\calN$ satisfies the \emph{Bohr--Sommerfeld condition}, which says that \[\frac{\omega_\calN}{2\pi}\in H^2(\calN,\mathbb{Z}).\]
Then the primitive 1-form can be understood as a Hermitian line bundle over $\calN$ endowed with a $U(1)$-connection $\nabla_\calN$ such that its curvature is given by $(\nabla_\calN)^2=\omega_\calN$. Thus we can define
\[\ee^{\frac{\I}{\hbar}\calS_{\Sigma_1}^\mathrm{kin}(\calB)}\] to be given by the holonomy of $\left(\calB_{(0)}\right)^*\nabla_\calN$ around $\Sigma_1$. Using Stokes' theorem we get
\begin{equation}
    \calS_{\Sigma_1}^{\mathrm{kin}}(\calB)=\int_{\mathbb{D}}\left(\calB_{(0)}^\mathrm{ext}\right)^*\omega_{\calN},
\end{equation}
where $\mathbb{D}$ is a disk with $\de\mathbb{D}=\Sigma_1$ and $\calB_{(0)}^\mathrm{ext}$ is any extension of $\calB_{(0)}$ to $\mathbb{D}$.
\end{rem}

\begin{rem}
This construction can be obviously extended to the formal global case. The case of a dimension 1 submanifold gives the same auxiliary theory as for the case when our theory is split. 
\end{rem}

\subsection{Formal global loop observables}

The following proposition is an extension of Proposition 5 in \cite{Mn3} to the formal global case.
\begin{prop}
\label{prop:formal_gloabl_wilson_loop}
Let $(\calN,\omega_\calN)$ be a symplectic manifold and assume that it can be geometrically quantized to a complex vector space $\calH$, the state space, and that the Hamiltonian $\Theta_\calE\in C^\infty(\calE)$ can be quantized to an operator valued function $\boldsymbol{\Theta}_\calE\in C^\infty(\calM)\otimes \mathrm{End}(\calH)$. Moreover, for a formal exponential map $\phi\colon T\calM\to \calM$, let $\Hat{\boldsymbol{\Theta}}_{\calE,x}:=\mathsf{T}\Tilde{\phi}^*_x\boldsymbol{\Theta}_\calE$ and assume that 
\begin{equation}
\label{eq:quantum_1}
    \Hat{Q}_{\calM}\left(\Hat{\boldsymbol{\Theta}}_{\calE,x}\right)+\Hat{R}_{\Sigma_d}\left(\Hat{\boldsymbol{\Theta}}_{\calE,x}\right)+\I\hbar\left(\Hat{\boldsymbol{\Theta}}_{\calE,x}\right)^2=0
\end{equation}
for $x\in\calM$. Then for $\Sigma_1:=S^1$ we get that
\begin{equation}
\label{eq:assumption_1}
    \Hat{\calO}_{\Sigma_1,x}=\tr_\calH \calP\exp\left(\frac{\I}{\hbar}\oint_{\Sigma_1}\Hat{\boldsymbol{\Theta}}_{\calE,x}\left(i^*\Hat{\calA}\right)\right)
\end{equation}
is a formal global observable, where we have denoted by $\tr_\calH$ the trace map on $\calH$ and $\calP\exp$ denotes the path-ordered exponential. 
\end{prop}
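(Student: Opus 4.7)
The goal is to verify $\Hat{Q}_{\Sigma_d,x}\bigl(\Hat{\calO}_{\Sigma_1,x}\bigr)=0$, which is the defining property of a formal global classical observable. I would set up the computation as a formal global extension of Proposition 5 of \cite{Mn3}. Write $A:=\tfrac{\I}{\hbar}\Hat{\boldsymbol{\Theta}}_{\calE,x}(i^{*}\Hat{\calA})$ and view it as an $\End(\calH)$-valued inhomogeneous differential form on $S^1$, so that $\Hat{\calO}_{\Sigma_1,x}=\tr_\calH\calP\exp\bigl(\oint_{S^1}A\bigr)$. Expand the path-ordered exponential via its Dyson series, and recall the standard variation formula: for any graded derivation $\mathsf{D}$ on $C^\infty\bigl(\Hat{\calF}^\calM_{\Sigma_d,x}\bigr)$, the cyclicity of $\tr_\calH$ yields
\[\mathsf{D}\,\tr_\calH\calP\exp\oint_{S^1}A=\tr_\calH\oint_{S^1}W(\sigma)\,\mathsf{D}A(\sigma),\]
where $W(\sigma)$ denotes the holonomy of $A$ around $S^1$ based at $\sigma$ and satisfies the ODE $\partial_\sigma W=[A,W]$.

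Now I would decompose $\Hat{Q}_{\Sigma_d,x}=\Hat{\dd}_{\Sigma_d}+\Tilde{\phi}^{*}_x\Hat{Q}_\calM+\Hat{R}_{\Sigma_d}$ and compute $\Hat{Q}_{\Sigma_d,x}A$ piece by piece. The pullback $i^{*}$ intertwines $\Hat{\dd}_{\Sigma_d}$ with $\dd_{S^1}$, while assumption \eqref{eq:quantum_1} rewrites the remaining two pieces as $-\I\hbar(\Hat{\boldsymbol{\Theta}}_{\calE,x})^{2}$. Multiplying by $\tfrac{\I}{\hbar}$ (and using $\I^{2}=-1$), one obtains
\[\Hat{Q}_{\Sigma_d,x}A=\dd_{S^1}A+\Hat{\boldsymbol{\Theta}}_{\calE,x}^{\,2}.\]
Inserting this into the variation formula above splits $\Hat{Q}_{\Sigma_d,x}\Hat{\calO}_{\Sigma_1,x}$ into a ``total-derivative'' piece $\tr_\calH\oint_{S^1}W\,\dd_{S^1}A$ and a ``quadratic'' piece $\tr_\calH\oint_{S^1}W\,\Hat{\boldsymbol{\Theta}}_{\calE,x}^{\,2}$.

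The heart of the matter, and the principal technical obstacle, is the cancellation of these two pieces. Conceptually, assumption \eqref{eq:quantum_1} is precisely the flatness condition $\mathrm{d}A+A\wedge A=0$ for the operator-valued ``connection'' $A$, once the differential is understood as the combined $\Hat{\dd}_{\Sigma_d}+\Tilde{\phi}^{*}_x\Hat{Q}_\calM+\Hat{R}_{\Sigma_d}$; the claim then reduces to the familiar statement that the trace of the holonomy of a flat connection is annihilated by the exterior derivative. Concretely, the cancellation is extracted by integrating the total-derivative piece by parts along $S^1$ (no honest boundary contribution appears since $\partial S^1=\varnothing$), using the commutator ODE $\partial_\sigma W=[A,W]$ and the cyclicity of $\tr_\calH$, together with the graded signs forced by the inhomogeneous form content of $A$. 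The bookkeeping of these signs, of the operator orderings inside $\calP\exp$, and of the mixed form-degrees of $i^{*}\Hat{\calA}$ is the most delicate aspect of the proof; once it is set up, the algebraic cancellation is exactly the one carried out in the non-formal-global setting of \cite{Mn3}, the only novelty being that the Grothendieck-connection term $\Hat{R}_{\Sigma_d}(\Hat{\boldsymbol{\Theta}}_{\calE,x})$ in \eqref{eq:quantum_1} is what absorbs the extra $\Hat{R}_{\Sigma_d}$-contribution present in $\Hat{Q}_{\Sigma_d,x}$ in the formal global setting.
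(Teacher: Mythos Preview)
Your proposal is correct and follows essentially the same route as the paper's proof: both apply the variation formula for the path-ordered exponential, use assumption \eqref{eq:quantum_1} to rewrite $\Hat{Q}_{\Sigma_d,x}$ acting on the integrand as $\dd_{S^1}$ plus a quadratic (commutator) term, and then cancel via cyclicity of the trace. The only cosmetic difference is that the paper first cuts $S^1$ open to $[0,1]$, works with the discrete product limit to obtain the explicit boundary expression $-\I\hbar\bigl(\Hat{\boldsymbol{\psi}}_{(0)}(1)\Hat{W}-\Hat{W}\Hat{\boldsymbol{\psi}}_{(0)}(0)\bigr)$, and then closes the loop, whereas you work directly on $S^1$ with the continuous Dyson series and integration by parts; these are equivalent packagings of the same telescoping argument.
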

\begin{rem}
    Note that \eqref{eq:quantum_1} is the formal global quantum version of \eqref{eq:classical_1}.
\end{rem}

\begin{proof}[Proof of Proposition \ref{prop:formal_gloabl_wilson_loop}]
Let $\gamma\colon\Sigma_1:=[0,1]\to \Sigma_{d}$ be a path in $\Sigma_d$ which is parametrized by $t\in[0,1]$. Denote by $$\Hat{\boldsymbol{\psi}}:=\Hat{\boldsymbol{\Theta}}_{\calE,x}\left(\gamma^*\Hat{\calA}\right)\in\Omega^\bullet([0,1])\otimes C^\infty\left(\Hat{\calF}^\calM_{\Sigma_d,x}\right)\otimes \End(\calH).$$ 
Moreover denote by $\Hat{\boldsymbol{\psi}}_{(0)}(t)$ and $\Hat{\boldsymbol{\psi}}_{(1)}(t,\dd t)$ the 0- and 1-form part of $\Hat{\boldsymbol{\psi}}$. Then, for the 1-form part, we get 
\begin{multline}
    \Hat{W}_{\Sigma_1,x}^\gamma=\calP\exp\left(\frac{\I}{\hbar}\int_0^1\Hat{\boldsymbol{\psi}}_{(1)}\right)\\
    =\lim_{N\to\infty}\overleftarrow{\prod_{0\leq r\leq N}}\left(\id_\calH+\frac{\I}{\hbar}\iota_{\frac{1}{N}\frac{\de}{\de t}}\Hat{\boldsymbol{\psi}}_{(1)}\left(\frac{r}{N},\dd t\right)\right)\in C^\infty\left(\Hat{\calF}_{\Sigma_d,x}\right)\otimes\End(\calH)
\end{multline}
Then we get 
\begin{multline}
    \Hat{Q}_{\Sigma_d,x}\left(\Hat{W}^\gamma_{\Sigma_1,x}\right)=-\I\hbar\int_0^1\calP\exp\left(\frac{\I}{\hbar}\int_t^1\Hat{\boldsymbol{\psi}}_{(1)}\right)\Hat{Q}_{\Sigma_d,x}\left(\Hat{\boldsymbol{\psi}}(t,\dd t)\right)\calP\exp\left(\frac{\I}{\hbar}\int_0^t\boldsymbol{\psi}_{(1)}\right)\\
    =-\I\hbar\int_0^1\calP\exp\left(\frac{\I}{\hbar}\int_t^1\Hat{\boldsymbol{\psi}}_{(1)}\right)\left(\dd t\frac{\de}{\de t}\Hat{\boldsymbol{\psi}}_{(0)}(t)-\I\hbar\left[\Hat{\boldsymbol{\psi}}_{(0)}(t),\Hat{\boldsymbol{\psi}}_{(1)}(t,\dd t)\right]\right)\calP\exp\left(\frac{\I}{\hbar}\int_0^t\Hat{\boldsymbol{\psi}}_{(1)}\right)\\
    =-\I\hbar\lim_{N\to\infty}\sum_{\ell=0}^{N-1}\overleftarrow{\prod_{\ell<r<N}}\left(\id_\calH+\frac{\I}{\hbar}\iota_{\frac{1}{N}\frac{\de}{\de t}}\Hat{\boldsymbol{\psi}}_{(1)}\left(\frac{r}{N},\dd t\right)\right)\times\\
    \times\left(\Hat{\boldsymbol{\psi}}_{(0)}\left(\frac{\ell+1}{N}\right)\left(\id_\calH+\frac{\I}{\hbar}\iota_{\frac{1}{N}\frac{\de}{\de t}}\Hat{\boldsymbol{\psi}}_{(1)}\left(\frac{\ell}{N},\dd t\right)\right)-\left(\id_\calH+\frac{\I}{\hbar}\iota_{\frac{1}{N}\frac{\de}{\de t}}\Hat{\boldsymbol{\psi}}_{(1)}\left(\frac{\ell}{N},\dd t\right)\right)\Hat{\boldsymbol{\psi}}_{(0)}\left(\frac{\ell}{N}\right)\right)\times\\
    \times \overleftarrow{\prod_{0\leq r<\ell}}\left(\id_\calH+\frac{\I}{\hbar}\iota_{\frac{1}{N}\frac{\de}{\de t}}\Hat{\boldsymbol{\psi}}_{(1)}\left(\frac{r}{N},\dd t\right)\right)=-\I\hbar\left(\Hat{\boldsymbol{\psi}}_{(0)}(1)\Hat{W}_{\Sigma_1,x}^\gamma-\Hat{W}_{\Sigma_1,x}^\gamma\Hat{\boldsymbol{\psi}}_{(0)}(0)\right).
\end{multline}
We have used \eqref{eq:assumption_1}, which gives us 
\begin{equation}
    \Hat{Q}_{\Sigma_d,x}\left(\Hat{\boldsymbol{\psi}}\right)=\dd_{\Sigma_1}\Hat{\boldsymbol{\psi}}-\I\hbar\left[\Hat{\boldsymbol{\psi}},\Hat{\boldsymbol{\psi}}\right],
\end{equation}
where $[\enspace,\enspace]$ denotes the commutator of operators.
Now if $\Sigma_1:=S^1$ we have $\gamma(0)=\gamma(1)$, and thus we get 
\begin{equation}
    \Hat{Q}_{\Sigma_d,x}\left(\Hat{\calO}_{\Sigma_1,x}\right)=\tr_\calH\Hat{Q}_{\Sigma_d,x}\left(\Hat{W}_{\Sigma_1,x}^\gamma\right)=-\I\hbar\tr_\calH\left[\Hat{\boldsymbol{\Theta}}_{\calE,x}\left(\Hat{\calA}_{(0)}(\gamma(0))\right),\Hat{W}_{\Sigma_1,x}^\gamma\right]=0,
\end{equation}
where $\Hat{\calA}_{(0)}$ denotes the degree zero component of $\Hat{\calA}$.
\end{proof}

\begin{rem}
The construction in Proposition \ref{prop:formal_gloabl_wilson_loop} does not require $\omega_\calN$ to be exact. It is in fact enough to require that $\omega_\calN$ satisfies the Bohr--Sommerfeld condition as discussed in Remark \ref{rem:Bohr-Sommerfeld}. This is necessary for the assumption that $\calN$ can be geometrically quantized. 
\end{rem}

\subsection{Formal global loop observables for the Poisson sigma model}
The Poisson sigma model is an example of a 2-dimensional AKSZ theory which is split as in Definition \ref{defn:split_AKSZ_sigma_model}. Let $M$ be a Poisson manifold with Poisson bivector $\pi\in \Gamma\left(\bigwedge^2TM\right)$. Moreover, consider a 2-dimensional source $\Sigma_2$. 
Let $(x,p)$ be base and fiber coordinates on $T^*[1]M$.
Then we can define a differential graded symplectic manifold as the target of the AKSZ theory by the data
\begin{align}
    \calM&=T^*[1]M,\\
    Q_\calM&=\left\langle\pi(x),p\frac{\de}{\de x}\right\rangle+\frac{1}{2}\left\langle \frac{\de}{\de x}\pi(x),(p\land p)\otimes\frac{\de}{\de p}\right\rangle,\\
    \omega_\calM&=\langle \delta p,\delta x\rangle,\\
    \alpha_\calM&=\langle p,\delta x\rangle,\\
    \Theta_\calM&=\frac{1}{2}\langle \pi(x),p\land p\rangle.
\end{align}
The corresponding 2-dimensional AKSZ-BV theory is given by the data
\begin{align}
    \begin{split}
    \calF_{\Sigma_2}&=\Map_{\mathrm{GrMnf}}(T[1]\Sigma_2,T^*[1]M)\\
    &\cong \Omega^\bullet(\Sigma_2)\otimes T_xM\oplus \Omega^\bullet(\Sigma_2)\otimes T_x^*M[1]\ni(\boldsymbol{X},\boldsymbol{\eta}),
    \end{split}\\
    \omega_{\Sigma_2}&=\int_{\Sigma_2}\langle \delta\boldsymbol{\eta},\delta\boldsymbol{X}\rangle,\\
    \calS_{\Sigma_2}&=\int_{\Sigma_2}\langle \boldsymbol{\eta},\dd_{\Sigma_2}\boldsymbol{X}\rangle+\frac{1}{2}\int_{\Sigma_2}\langle \pi(\boldsymbol{X}),\boldsymbol{\eta}\land\boldsymbol{\eta}\rangle.
\end{align}
Choosing a formal exponential map $\phi\colon TM\to M$ together with a background field $x\colon T[1]\Sigma_2\to M$, the formal global action for the Poisson sigma model is given by 
\begin{multline}
    \Hat{\calS}^\mathrm{global}_{\Sigma_2,x}\left(\Hat{\boldsymbol{X}},\Hat{\boldsymbol{\eta}}\right)=\int_{\Sigma_2}\Hat{\boldsymbol{\eta}}_\ell\land\dd_{\Sigma_2}\Hat{\boldsymbol{X}}^\ell+\frac{1}{2}\int_{\Sigma_2}\left(\T\Tilde{\phi}_x^*\pi\right)^{ij}\left(\Hat{\boldsymbol{X}}\right)\Hat{\boldsymbol{\eta}}_i\land\Hat{\boldsymbol{\eta}}_j\\+\int_{\Sigma_2}R_{\ell}^j\left(x,\Hat{\boldsymbol{X}}\right)\Hat{\boldsymbol{\eta}}_j\land \dd_Mx^\ell.
\end{multline}

We want to construct a formal global Wilson loop like observables using the Poisson sigma model toegtehr with an auxiliary theory for an embedding $i\colon\Sigma_1:=S^1\hookrightarrow \Sigma_2$. Consider an exact symplectic manifold $(\calN,\omega_\calN=\dd_\calN\alpha_\calN)$. We can construct a vertical vector field $\calV$ on the trivial bundle $\calN\times M\to \calN$  which can be viewed as a map $\calN\to \mathfrak{X}(M)$ with the property 
\[
\frac{1}{2}\{\calV,\calV\}_{\omega_\calN}+[\pi,\calV]_{\mathrm{SN}}+R\land \calV=0,
\]
where $[\enspace,\enspace]_{\mathrm{SN}}$ denotes the \emph{Schouten--Nijenhuis} bracket defined on polyvector fields on $M$.
We have a degree 0 Hamiltonian $Q$-bundle structure on $$T^*[1]M\times \calN\to T^*[1]M$$ with fiber $\calN$ endowed with the structure  
\begin{align}
    \calV_\calE&=\langle p,\{\calV,\enspace\}_{\omega_\calN}\rangle,\\
    \Theta_\calE&=\langle p,\calV\rangle,
\end{align}
where $\calE=T^*[1]M\times\calN$. If we use the notation of Section \ref{subsec:loops_observables}, we can associate a pre-observable to the Poisson sigma model given by the data \eqref{eq:space_of_fields_loop} and \eqref{eq:omega_loop} together with the auxiliary action 
\begin{equation}
    \calS^\calN_{\Sigma_1}(\boldsymbol{X},\boldsymbol{\eta},\calB;i)=\oint_{\Sigma_1}\alpha^\calN_\nu(\calB)\dd_{\Sigma_1}\calB^\nu+\oint_{\Sigma_1}\langle i^*\boldsymbol{\eta},\calV(i^*\boldsymbol{X},\calB)\rangle.
\end{equation}
Choosing a formal exponential map $\phi\colon T\calN\to\calN$ together with local coordinates we can lift this to a formal global auxiliary action 
\begin{multline}
    \Hat{\calS}^\mathrm{global}_{\Sigma_1,y}\left(\Hat{\boldsymbol{X}},\Hat{\boldsymbol{\eta}},\Hat{\calB};i\right)=\oint_{\Sigma_1}\alpha^\calN_\nu\left(\Hat{\calB}\right)\dd_{\Sigma_1}\Hat{\calB}^\nu+\oint_{\Sigma_1}\T\Tilde{\phi}_y^*\langle i^*\boldsymbol{\eta},\calV(i^*\boldsymbol{X},\calB)\rangle\\
    +\oint_{\Sigma_1}(\Hat{R}_{\Sigma_1})_\nu\left(y,\Hat{\calB}\right)\dd_\calN y^\nu.
\end{multline}
The corresponding auxiliary formal global observable is given by 
\begin{equation}
    \Hat{\calO}_{\Sigma_1,y}\left(\Hat{\boldsymbol{X}},\Hat{\boldsymbol{\eta}};i\right)=\int_{\Hat{\calL}} \mathscr{D}\left[\Hat{\calB}_{(0)}\right]\ee^{\frac{\I}{\hbar}\Hat{\calS}^\mathrm{global}_{\Sigma_1,y}\left(\Hat{\boldsymbol{X}},\Hat{\boldsymbol{\eta}},\Hat{\calB}_{(0)};i\right)},
\end{equation}
where we use the gauge-fixing Lagrangian 
\begin{equation}
\Hat{\calL}=\Map_{\mathrm{Mnf}}(\Sigma_1,T_y\calN)\subset \Map_{\mathrm{GrMnf}}(T[1]\Sigma_1,T_y\calN)\cong\Omega^\bullet(\Sigma_1)\otimes T_y\calN.
\end{equation}

If we assume that $(\calN,\omega_\calN)$ can be geometrically quantized to a space of states $\calH$ and $\calV$ is quantized to an operator-valued vector field $\boldsymbol{\calV}\in\End(\calH)\otimes\mathfrak{X}(M)$ such that $[\pi,\boldsymbol{\calV}]_{\mathrm{SN}}+R\land\boldsymbol{\calV}+\I\hbar\boldsymbol{\calV}\land \boldsymbol{\calV}=0$, then we get that 
\begin{equation}
    \Hat{\calO}_{\Sigma_1,x}\left(\Hat{\boldsymbol{X}},\Hat{\boldsymbol{\eta}};i\right)=\tr_\calH\calP\exp\left(\frac{\I}{\hbar}\oint_{\Sigma_1}\reallywidehat{\langle i^*\boldsymbol{\eta},\boldsymbol{\calV}(i^* \boldsymbol{X})\rangle}\right),
\end{equation}
which is, by Proposition \ref{prop:formal_gloabl_wilson_loop}, indeed a formal global observable. Here we have chosen an exponential map for the base of the target of the Poisson sigma model $TM\to M$ with background field $x\in M$.

\section{Wilson surfaces and their formal global extension}

\subsection{$BF$ theory and Wilson surfaces}
Let $G$ be a Lie group and denote by $\mathfrak{g}$ its Lie algebra. Moreover, consider a principal $G$-bundle $P$ over some $d$-manifold $\Sigma_d$ and construct the adjoint bundle of $P$, denoted by $\mathrm{ad}P$, given as the frame bundle $P\times^\mathrm{Ad}\mathfrak{g}$ with respect to the adjoint representation $\mathrm{Ad}\colon G\to \mathrm{Aut}(\mathfrak{g})$ and let $\mathrm{ad}^*P$ denote its coadjoint bundle. Let $\mathscr{A}$ be the affine space of connection 1-forms on $P$ and $\mathscr{G}$ the group of gauge transformations. 
For a connection $A\in \mathscr{A}$, let $\dd_A$ be the covariant derivative on $\Omega^\bullet(\Sigma_d,\mathrm{ad}P)$ and $\Omega^\bullet(\Sigma_d,\mathrm{ad}^*P)$. 
Let $A\in \mathscr{A}$ and $B\in \Omega^{d-2}(M,\mathrm{ad}^*P)$ and define the $BF$ action by
\begin{equation}
\label{eq:BF_action}
    S(A,B):=\int_{\Sigma_d} \langle B,F_A\rangle,
\end{equation}
where $\langle\enspace,\enspace\rangle$ denotes the extension of the adjoint and coadjoint type for the canonical pairing between $\mathfrak{g}$ and $\mathfrak{g}^*$ to differential forms.
\begin{rem}[Abelian $BF$ theory]
The \emph{abelian} $BF$ action, i.e. the action for the case where $\mathfrak{g}=\R$, in fact arises as the unperturbed part of many different AKSZ theories such as the Poisson sigma model or Chern--Simons theory. In fact, for the abelian case we have $(A,B)\in\Omega^\bullet(\Sigma_d)[1]\oplus\Omega^\bullet(\Sigma_d)[d-1]$ such that $F_A=\dd A$ and thus we get an action $S=\int_{\Sigma_d}B\land \dd A$.
\end{rem}
The solutions to the Euler--Lagrange equations $\delta S=0$ for $S$ defined as in \eqref{eq:BF_action} are given by 
\begin{equation}
    \mathrm{M}_{\mathrm{cl}}=\left\{(A,B)\in \mathscr{A}\times \Omega^{d-2}(\Sigma_d,\mathrm{ad}^*P)\,\big|\, F_A=0,\dd_AB=0\right\}
\end{equation}

\begin{rem}
One can check that the $BF$ action is invariant under the action of 
\begin{equation}
    \Tilde{\mathscr{G}}:=\mathscr{G}\rtimes \Omega^{d-3}(\Sigma_d,\ad^* P),
\end{equation}
where $\mathscr{G}$ acts on $\Omega^{d-3}(\Sigma_d,\ad^* P)$ by the coadjoint action. For $(g,\sigma)\in\Tilde{\mathscr{G}}$ and $(A,B)\in \mathscr{A}\times \Omega^{d-2}(\Sigma_d,\ad^*P)$ we have an action 
\begin{equation}
    A\mapsto A^g,\qquad B\mapsto B^{(g,\sigma)}=\Ad^*_{g^{-1}}B+\dd_{A^g}\sigma.
\end{equation}
It is then easy to check that $S(A^g,B^{(g,\sigma)})=S(A,B)$.
\end{rem}
Consider an embedded submanifold $i\colon \Sigma_{d-2}\hookrightarrow \Sigma_d$ and consider the pullback bundle of $P$ by $i$ according to the diagram 
\begin{center}
\begin{tikzcd}
i^*P \arrow[r, ""] \arrow[d,]
& P \arrow[d, ""] \\
\Sigma_{d-2} \arrow[r,hook, "i"]
& \Sigma_d
\end{tikzcd}
\end{center}

We can now formulate an important type of classical action which is important for the study of higher-dimensional knots \cite{CattRoss2005}.

\begin{defn}[Wilson surface action]
The \emph{Wilson surface action} is given by 
\begin{equation}
    W(\alpha,\beta,A,B;i):=\int_{\Sigma_{d-2}}\langle \alpha,\dd_{i^*A}\beta+i^*B\rangle,
\end{equation}
where $\alpha\in \Omega^0(\Sigma_{d-2},\ad i^*P)$ and $\beta\in \Omega^{d-3}(\Sigma_{d-2},\ad^*i^*P)$. 
\end{defn}

\begin{defn}[Wilson surface observable]
The \emph{Wilson surface observable} is given by 
\begin{equation}
    \calW_{\Sigma_{d-2}}(A,B;i):=\int \mathscr{D}[\alpha]\mathscr{D}[\beta]\ee^{\frac{\I}{\hbar}W(\alpha,\beta,A,B;i)}
\end{equation}
\end{defn}

\begin{rem}
The expectation values of Wilson surface observables in fact give certain higher-dimensional knot invariants \cite{CattRoss2001}. These invariants are based on the construction of invariants by Bott \cite{Bott1996} giving the generalization to a family of isotopy invariants for long knots $\R^{n}\hookrightarrow\R^{n+2}$ for odd $n\geq 3$, which are based on constructions involving combinations of configuration space integrals. In \cite{Watanabe2007} it was proven that these invariants are of finite type for the case of long ribbon knots and that they are related to the Alexander polynomial for these type of knots. Further generalizations based on this construction, in particular for rectifiable knots, have been given in \cite{Leturcq2019,Leturcq2020}.
\end{rem}

\subsection{BV formulation of $BF$ theory}
We can consider $BF$ theory in terms of its BV extension. The BV space of fields is given by 
\begin{equation}
    \calF_{\Sigma_d}=\Omega^\bullet(\Sigma_d,\ad P)[1]\oplus\Omega^\bullet(\Sigma_d,\ad^*P)[d-2],
\end{equation}
where $\mathscr{A}=\Omega^1(\Sigma_d,\ad P)$. We will denote the superfields in $\calF_{\Sigma_d}$ by $(\boldsymbol{A},\boldsymbol{B})$. 
Note that there is an induced Lie bracket $[\![\enspace,\enspace ]\!]$ on $\Omega^\bullet(\Sigma_d,\ad P)[1]$ which is induced by the Lie bracket on $\mathfrak{g}$. 
\begin{rem}
If we consider local coordinates on $\mathfrak{g}$ with corresponding basis $(e_i)$, we have
\begin{equation}
    [\![a,b]\!]=(-1)^{\mathrm{gh}(a)\deg(b)}a^ib^jf_{ij}^ke_k,
\end{equation}
where $f_{ij}^k$ denotes the structure constants of $\mathfrak{g}$.
\end{rem}
Moreover, for $\boldsymbol{A}\in \Omega^\bullet(\Sigma_d,\ad P)[1]$ we get the curvature 
\begin{equation}
    \boldsymbol{F}_{\boldsymbol{A}}=F_{A_0}+\dd_{A_0}\boldsymbol{a}+\frac{1}{2}[\![\boldsymbol{a},\boldsymbol{a}]\!],
\end{equation}
where $A_0$ is any reference connection and $\boldsymbol{a}:=\boldsymbol{A}-A_0\in \Omega^\bullet(\Sigma_d,\ad P)[1]$. 
\begin{defn}[BV action for $BF$ theory]
\label{defn:BV_action_BF_theory}
The BV action for $BF$ theory is defined by 
\begin{equation}
    \calS_{\Sigma_d}(\boldsymbol{A},\boldsymbol{B})=\int_{\Sigma_d}\langle\!\langle \boldsymbol{B},\boldsymbol{F}_{\boldsymbol{A}}\rangle\!\rangle,
\end{equation}
where $\langle\!\langle\enspace,\enspace\rangle\!\rangle$ is the extension to forms of the adjoint and coadjoint type of the canonical pairing between $\mathfrak{g}$ and $\mathfrak{g}^*$. For two forms $a,b$ we have  
\begin{equation}
    \langle\!\langle a,b\rangle\!\rangle=(-1)^{\mathrm{gh}(a)\deg(b)}\langle a,b\rangle,
\end{equation}
\end{defn}
We can see that 
\begin{equation}
    \calF_{\Sigma_d}=T^*[-1]\Omega^\bullet(\Sigma_d,\ad P)[1],
\end{equation}
hence we have a canonical symplectic structure $\omega_{\Sigma_d}$ on $\calF_{\Sigma_d}$. Similarly as before, let us denote the odd Poisson bracket induced by $\omega_{\Sigma_d}$ by $\{\enspace,\enspace\}_{\omega_{\Sigma_d}}$ and note that $\calS_{\Sigma_d}$ satisfies the CME
\begin{equation}
    \left\{\calS_{\Sigma_d},\calS_{\Sigma_d}\right\}_{\omega_{\Sigma_d}}=0.
\end{equation}
The cohomological vector field $Q_{\Sigma_d}$ is given as the Hamiltonian vector field of $\calS_{\Sigma_d}$, thus $Q_{\Sigma_d}=\left\{\calS_{\Sigma_d},\enspace\right\}_{\omega_{\Sigma_d}}$. Note that
\begin{equation}
\label{eq:vector_field_2}
    Q_{\Sigma_d}(\boldsymbol{A})=(-1)^d\boldsymbol{F}_{\boldsymbol{A}},\qquad Q_{\Sigma_d}(\boldsymbol{B})=(-1)^d\dd_{\boldsymbol{A}}\boldsymbol{B}.
\end{equation}
If we choose a volume element $\mu$ which is compatible with $\omega_{\Sigma_d}$, we can define the BV Laplacian by 
\begin{equation}
    \Delta\colon f\mapsto \frac{1}{2}\Div_\mu\{f,\enspace\}_{\omega_{\Sigma_d}}.
\end{equation}
Then we can show that the QME holds:
\begin{equation}
    \delta_{\mathrm{BV}}\calS_{\Sigma_d}=\left\{\calS_{\Sigma_d},\calS_{\Sigma_d}\right\}_{\omega_{\Sigma_d}}-2\I\hbar\Delta\calS_{\Sigma_d}=0.
\end{equation}
This is in fact true since $\Delta\calS_{\Sigma_d}=0$. Moreover, as expected, we have $\delta_{\mathrm{BV}}^2=0$.

\subsection{Formal global $BF$ theory from the AKSZ construction}
Let us consider the case of abelian $BF$ theory. Note that in this case the Wilson surface action is given by 
\begin{equation}
    W(\alpha,\beta,A,B;i):=\int_{\Sigma_{d-2}} \alpha(\dd\beta+i^*B),
\end{equation}
where $\dd$ is the de Rham differential on $\R$. Solving the Euler--Lagrange equations for $\delta W=0$, we get that the ciritical points are solutions to 
\begin{align}
    \dd\alpha&=0,\\
    \dd \beta+i^*B&=0.
\end{align}
We want to deal with $B$ perturbatively, that means we can consider solutions to $\dd\alpha=\dd\beta=0$ instead and
hence we look at solutions of the form $\alpha=const$ and $\beta=0$. This means that the constant field $\alpha$ is going to take the place of the background field. The Wilson surface observable is then given by 
\begin{equation}
    \calW_{\Sigma_{d-2}}(A,B;i)=\int \mathscr{D}[\alpha]\mathscr{D}[\beta]\ee^{\frac{\I}{\hbar}\int_{\Sigma_{d-2}}\alpha\dd\beta}\int_{x\in\R}\mu(x)\ee^{\frac{\I}{\hbar}x\int_{\Sigma_{d-2}}i^*B},
\end{equation}
where $\mu$ is a volume element on the moduli space of classical solutions for the auxiliary theory which is given by
\begin{equation}
    \mathrm{M}_{\mathrm{cl}}=\left\{(\alpha,\beta)\in \Omega^0(\Sigma_{d-2})\oplus \Omega^{d-3}(\Sigma_{d-2})\,\big|\, \alpha=const,\, \beta=0\right\}\cong \R.
\end{equation}
By abbuse of notation we will also denote the perturbation of $\alpha$ around $x\in \R$ by $\alpha$. 
Moreover, if we assume that $P$ is a trivial bundle, not for the abelian case, we get
\begin{align}
    \calF_{\Sigma_d}&\cong \Omega^\bullet(\Sigma_d)\otimes\mathfrak{g}[1]\oplus\Omega^\bullet(\Sigma_d)\otimes \mathfrak{g}^*[d-2]\\
    &\cong \Map_{\mathrm{GrMnf}}(T[1]\Sigma_d,\mathfrak{g}[1]\oplus \mathfrak{g}^*[d-2]).
\end{align}
\begin{rem}
    The assumption that $P$ is trivial is similar to a formal lift, whereas the background field is given by a constant critical point of the form $(x,0)$ with constant background field $x\colon T[1]\Sigma_d\to \mathfrak{g}[1]\oplus \mathfrak{g}^*[d-2]$. In fact, it induces a \emph{linear split} theory as in Definition \ref{defn:split_like_AKSZ_sigma_model}.
\end{rem}

\begin{rem}[CE complex and $L_\infty$-structure]
Let $\mathfrak{g}$ be a Lie algebra and consider the differential graded algebra 
\begin{equation}
    \mathrm{CE}(\mathfrak{g}):=\left(\bigwedge^\bullet \mathfrak{g}^*,\dd_{\mathrm{CE}}\right)\cong \left(C^\infty(\mathfrak{g}[1]),Q\right).
\end{equation}
This is called the \emph{Chevalley--Eilenberg algebra} of $\mathfrak{g}$ \cite{ChevalleyEilenberg1948}. The real valued \emph{Chevalley--Eilenberg complex} is given by 
\begin{equation}
    0\rightarrow \Hom\left(\bigwedge^0\mathfrak{g},\R\right)\xrightarrow{\dd_{\mathrm{CE}}}\Hom\left(\bigwedge^1\mathfrak{g},\R\right)\xrightarrow{\dd_{\mathrm{CE}}}\Hom\left(\bigwedge^2\mathfrak{g},\R\right)\xrightarrow{\dd_{\mathrm{CE}}}\dotsm
\end{equation}
endowed with the \emph{Chevalley--Eilenberg differential} \[\dd_\mathrm{CE}\colon \Hom\left(\bigwedge^{n}\mathfrak{g},\R\right)\to \Hom\left(\bigwedge^{n+1}\mathfrak{g},\R\right)\] given by
\begin{multline}
    (\dd_\mathrm{CE}F)(X_1,\ldots,X_{n+1}):=\sum_{j=1}^{n+1}(-1)^{j+1}X_iF(X_1,\ldots,\Hat{X}_j,\ldots,X_{n+1})\\
    +\sum_{1\leq j<k\leq n+1}(-1)^{j+k}F([X_j,X_k],X_1,\ldots,\Hat{X}_j,\ldots,\Hat{X}_k,\ldots,X_{n+1}),
\end{multline}
where the hat means that these elements are omitted.
Denote by $(\xi^{i})$ the coordinates on $\mathfrak{g}[1]$ of degree $+1$. Then $Q$ has to be of the form 
\[
Q=-\frac{1}{2}f_{ij}^k\xi^{i}\xi^{j}\frac{\de}{\de \xi^{k}},
\]
where $f_{ij}^k$ are the structure constants of $\mathfrak{g}$.
Note that a function $F\in \Hom\left(\bigwedge^n\mathfrak{g},\R\right)$ corresponds to an element in $C_n^\infty(\mathfrak{g}[1])$ such that the Chevalley--Eilenberg differential is indeed mapped to $Q$ under the isomorphism 
\[
F(X_{j_1}\land\ldots\land X_{j_n})=:F_{j_1\ldots j_n}\longleftrightarrow \frac{1}{n!}\xi^{j_1}\dotsm \xi^{j_n}F_{j_1\ldots j_n}.
\]
In fact, for a graded vector space $\mathfrak{g}=\bigoplus_{k\in\mathbb{Z}}\mathfrak{g}_k$, the differential graded algebra $(C^\infty(\mathfrak{g}),Q)$ corresponds to an $L_\infty$-algebra which is actually given by the Chevalley--Eilenberg algebra $\mathrm{CE}(\mathfrak{g}[-1])$ of the $L_\infty$-algebra $\mathfrak{g}[-1]$. The dual of the cohomological vector field $Q$ is given by a codifferential $D$ of homogenous degree $+1$ on $\Hat{\Sym}(\mathfrak{g})\cong \Hat{\Sym}(\mathfrak{g}[-1])$. The isomorphism is induced by the \emph{shift isomorphism} $s\colon\mathfrak{g}\xrightarrow{\sim} \mathfrak{g}[1]$. The codifferential $D$ decomposes into a sum $D=\sum_{j\geq 1}\bar D_j$ such that the restrictions 
\[
D_j:=\bar D_j\big|_{\Hat{\Sym}^j(\mathfrak{g})}\colon \Hat{\Sym}^j(\mathfrak{g})\to \mathfrak{g}
\]
satisfy
\[
\ell_j=(-1)^{\frac{1}{2}j(j-1)+1}s^{-1}\circ D_j\circ s^{\otimes j},\qquad \forall j\geq 1.
\]
Note that since $Q^2=0$, we get $D^2=0$. Such a codifferential induces a classical Grothendieck connection as in Section \ref{subsec:notions_of_formal_geometry}.
\end{rem}

\begin{rem}[$L_\infty$-structure on $\Omega^\bullet$]
If $\mathfrak{g}$ is endowed with a (curved) $L_\infty$-structure, we can view 
\[
\Omega^\bullet(\Sigma_{d},\mathfrak{g})=\bigoplus_{\substack{r+j=k \\ 0\leq r\leq d\\ j\in\mathbb{Z}}}\Omega^r(\Sigma_d)\otimes\mathfrak{g}_{j}
\]
as a (curved) $L_\infty$-algebra. The $L_\infty$-structure arises as the linear extension of the higher brackets
\begin{align}
    \hat{\ell}_1(\alpha_1\otimes X_1)&:=\dd_{\Sigma_d}\alpha_1\otimes X_1+(-1)^{\deg(\alpha_1)}\alpha_1\otimes \ell_1(X_1)\\
    \begin{split}
    \hat{\ell}_n(\alpha_1\otimes X_1,\ldots,\alpha_n\otimes X_n)&:=(-1)^{n\sum_{j=1}^{n}\deg(\alpha_j)+\sum_{j=0}^{n-2}\deg(\alpha_{n-j})\sum_{k=1}^{n-j-1}\deg(X_k)}\times\\
    &\times (\alpha_1\land\dotsm\land \alpha_n)\otimes \ell_n(X_1,\ldots,X_n)
    \end{split}
\end{align}
for $n\geq 2$, $\alpha_1,\ldots,\alpha_n\in\Omega^\bullet(\Sigma_d)$ and $X_1,\ldots,X_n\in\mathfrak{g}$. If $\mathfrak{g}$ is cyclic, and $\Sigma_d$ is compact, oriented without boundary, there is a natural cyclic inner product on $\Omega^\bullet(\Sigma_d,\mathfrak{g})$ given by
\begin{equation}
    \langle \alpha_1\otimes X_1,\alpha_2\otimes X_2\rangle_{\Omega^\bullet(\Sigma_d,\mathfrak{g})}=(-1)^{\deg(\alpha_2)\deg(X_1)}\int_{\Sigma_d}\alpha_1\land\alpha_2\langle X_1,X_2\rangle_{\mathfrak{g}}
\end{equation}
for $\alpha_1,\alpha_2\in\Omega^\bullet(\Sigma_d)$ and $X_1,X_2\in\mathfrak{g}$.
\end{rem}

\subsection{BV extension of Wilson surfaces}
We will now construct the BV extended observable for the auxiliary codimension 2 theory in the case where $P$ is a trivial bundle. Let 
\begin{equation}
    \calF_{\Sigma_{d-2}}\cong \Omega^\bullet(\Sigma_{d-2})\otimes\mathfrak{g}[1]\oplus\Omega^\bullet(\Sigma_{d-2})\otimes \mathfrak{g}^*[d-2]
\end{equation}
endowed with the symplectic form $\omega_{\Sigma_{d-2}}$ which induces the corresponding BV bracket $\{\enspace,\enspace\}_{\omega_{\Sigma_{d-2}}}$. For auxiliary superfields $(\boldsymbol{\alpha},\boldsymbol{\beta})\in \calF_{\Sigma_{d-2}}$ and ambient superfields $(\boldsymbol{A},\boldsymbol{B})\in\calF_{\Sigma_d}$ we have the following definition:
\begin{defn}[BV extended Wilson surface action]
The \emph{BV extended Wilson surface action} is given by 
\begin{equation}
\label{eq:BV_extended_Wilson_surface_action}
    \boldsymbol{W}^0_{\Sigma_{d-2}}(\boldsymbol{\alpha},\boldsymbol{\beta},\boldsymbol{A},\boldsymbol{B};i)=\int_{\Sigma_{d-2}}\langle\!\langle \boldsymbol{\alpha},\dd_{i^*\boldsymbol{A}}\boldsymbol{\beta}+i^*\boldsymbol{B}\rangle\!\rangle.
\end{equation}
\end{defn}
\begin{rem}
As it was shown in \cite{CattRoss2005}, we can extend $\boldsymbol{W}^0_{\Sigma_{d-2}}$, regarded as a function on embeddings $i\colon \Sigma_{d-2}\hookrightarrow \Sigma_d$, to a form-valued function $\boldsymbol{W}_{\Sigma_{d-2}}$ on these embeddings by setting
\begin{equation}
    \boldsymbol{W}_{\Sigma_{d-2}}(\boldsymbol{\alpha},\boldsymbol{\beta},\boldsymbol{A},\boldsymbol{B};i):=\pi_*\langle\!\langle \boldsymbol{\alpha},\dd_{\mathrm{ev}^*\boldsymbol{A}}\boldsymbol{\beta}+\mathrm{ev}^*\boldsymbol{B}\rangle\!\rangle,
\end{equation}
where $\mathrm{ev}$ denotes the evaluation map of embeddings $\Sigma_{d-2}\hookrightarrow\Sigma_d$ and $\pi_*$ denotes the integration along the fiber $\Sigma_{d-2}$.
\end{rem}
\begin{prop}[\cite{CattRoss2005}]
\label{prop:CattRoss1}
The Wilson surface action satisfies a modified version of the dCME, i.e. we have 
\begin{equation}
    \dd\boldsymbol{W}_{\Sigma_{d-2}}-(-1)^d\left\{\boldsymbol{W}_{\Sigma_{d-2}},\boldsymbol{W}_{\Sigma_{d-2}}\right\}_{\omega_{\Sigma_d}}-\frac{1}{2}\left\{\boldsymbol{W}_{\Sigma_{d-2}},\boldsymbol{W}_{\Sigma_{d-2}}\right\}_{\omega_{\Sigma_{d-2}}}=0
\end{equation}
\end{prop}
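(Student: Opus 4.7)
The strategy is to compute the three terms $\dd\boldsymbol{W}_{\Sigma_{d-2}}$, $\{\boldsymbol{W}_{\Sigma_{d-2}},\boldsymbol{W}_{\Sigma_{d-2}}\}_{\omega_{\Sigma_d}}$, and $\{\boldsymbol{W}_{\Sigma_{d-2}},\boldsymbol{W}_{\Sigma_{d-2}}\}_{\omega_{\Sigma_{d-2}}}$ separately by exploiting the transgression formula $\boldsymbol{W}_{\Sigma_{d-2}}=\pi_*\langle\!\langle\boldsymbol{\alpha},\dd_{\mathrm{ev}^*\boldsymbol{A}}\boldsymbol{\beta}+\mathrm{ev}^*\boldsymbol{B}\rangle\!\rangle$, and then show that they cancel modulo Bianchi. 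The relevant geometry is the evaluation $\mathrm{ev}\colon\Sigma_{d-2}\times\Imb(\Sigma_{d-2},\Sigma_d)\to\Sigma_d$ together with the fiber integration $\pi_*$ along the (closed) fiber $\Sigma_{d-2}$.

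For the auxiliary bracket, since $\omega_{\Sigma_{d-2}}$ pairs $\boldsymbol{\alpha}$ with $\boldsymbol{\beta}$, the functional derivatives read
\begin{equation*}
\frac{\delta\boldsymbol{W}_{\Sigma_{d-2}}}{\delta\boldsymbol{\alpha}}=\dd_{\mathrm{ev}^*\boldsymbol{A}}\boldsymbol{\beta}+\mathrm{ev}^*\boldsymbol{B},\qquad \frac{\delta\boldsymbol{W}_{\Sigma_{d-2}}}{\delta\boldsymbol{\beta}}=\pm\dd_{\mathrm{ev}^*\boldsymbol{A}}\boldsymbol{\alpha},
\end{equation*}
and using the identity $\dd_{\mathrm{ev}^*\boldsymbol{A}}^2=[\![\mathrm{ev}^*\boldsymbol{F}_{\boldsymbol{A}},\enspace]\!]$ together with the graded Leibniz rule the expression $\tfrac12\{\boldsymbol{W}_{\Sigma_{d-2}},\boldsymbol{W}_{\Sigma_{d-2}}\}_{\omega_{\Sigma_{d-2}}}$ collapses to an integral of $\langle\!\langle\boldsymbol{\alpha},[\![\mathrm{ev}^*\boldsymbol{F}_{\boldsymbol{A}},\boldsymbol{\beta}]\!]+\dd_{\mathrm{ev}^*\boldsymbol{A}}\mathrm{ev}^*\boldsymbol{B}\rangle\!\rangle$ along $\Sigma_{d-2}$. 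For the ambient bracket, the functional derivatives with respect to $\boldsymbol{A}$ and $\boldsymbol{B}$ are distributions on $\Sigma_d$ supported on $i(\Sigma_{d-2})$, so that, after pairing through $\omega_{\Sigma_d}$ and substituting \eqref{eq:vector_field_2}, one obtains localized integrals over $\Sigma_{d-2}$ involving $[\![\boldsymbol{\alpha},\boldsymbol{\beta}]\!]$ paired against $\mathrm{ev}^*\boldsymbol{F}_{\boldsymbol{A}}$ and $\mathrm{ev}^*\dd_{\boldsymbol{A}}\boldsymbol{B}$. Finally $\dd\boldsymbol{W}_{\Sigma_{d-2}}$ is computed by passing $\dd$ through $\pi_*$ (which is clean because $\Sigma_{d-2}$ is closed) and exploiting the naturality $\mathrm{ev}^*\dd_{\Sigma_d}=\dd_{\mathrm{tot}}\mathrm{ev}^*$.

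The cancellation then rests on three algebraic inputs: the naturality $\boldsymbol{F}_{\mathrm{ev}^*\boldsymbol{A}}=\mathrm{ev}^*\boldsymbol{F}_{\boldsymbol{A}}$, the Bianchi identity $\dd_{\boldsymbol{A}}\boldsymbol{F}_{\boldsymbol{A}}=0$, and graded Jacobi for $[\![\enspace,\enspace]\!]$. These match the $[\![\mathrm{ev}^*\boldsymbol{F}_{\boldsymbol{A}},\boldsymbol{\beta}]\!]$-term of the auxiliary bracket against the $Q_{\Sigma_d}(\boldsymbol{A})$-contribution of the ambient bracket, and the $\dd_{\mathrm{ev}^*\boldsymbol{A}}\mathrm{ev}^*\boldsymbol{B}$-term against the $Q_{\Sigma_d}(\boldsymbol{B})$-contribution, while the remaining $\mathrm{ev}^*\dd_{\Sigma_d}\boldsymbol{A}$ and $\mathrm{ev}^*\dd_{\Sigma_d}\boldsymbol{B}$ pieces extracted in the computation of $\dd\boldsymbol{W}_{\Sigma_{d-2}}$ complete the picture. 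The main obstacle is the careful bookkeeping of Koszul signs arising from the interplay of form degree on $\Sigma_d$, form degree on $\Imb(\Sigma_{d-2},\Sigma_d)$ and ghost number in the BV complex; the factor $(-1)^d$ in the statement itself encodes the permutation required to commute the $d$-form integrand with the $(d-2)$-dimensional pushforward $\pi_*$, and the asymmetric coefficient $(-1)^d$ versus $\tfrac12$ in front of the two brackets reflects that $\boldsymbol{W}_{\Sigma_{d-2}}$ is linear in $(\boldsymbol{A},\boldsymbol{B})$ but bilinear in $(\boldsymbol{\alpha},\boldsymbol{\beta})$. Once a consistent ordering convention is fixed for $\pi_*$ and $\mathrm{ev}^*$, the cancellation is purely algebraic.
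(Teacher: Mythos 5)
The paper does not actually write out a proof of this proposition: it cites \cite{CattRoss2005} and records only the remark that the identity follows from $\dd\int_{\Sigma_{d-2}}=(-1)^d\int_{\Sigma_{d-2}}\dd$ together with \eqref{eq:vector_field_2}. Your outline assembles exactly those ingredients (commuting $\dd$ through $\pi_*$ over the closed $(d-2)$-dimensional fiber, naturality of $\mathrm{ev}^*$, integration by parts in the auxiliary bracket and $\dd_{\mathrm{ev}^*\boldsymbol{A}}^2=[\![\mathrm{ev}^*\boldsymbol{F}_{\boldsymbol{A}},\cdot]\!]$), so at the level of strategy you are on the intended track.

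There is, however, one genuine problem in your treatment of the ambient-bracket term, and it is not merely a bookkeeping issue. You describe $\left\{\boldsymbol{W}_{\Sigma_{d-2}},\boldsymbol{W}_{\Sigma_{d-2}}\right\}_{\omega_{\Sigma_d}}$ as a pairing of the two distributional derivatives $\delta\boldsymbol{W}/\delta\boldsymbol{A}$ and $\delta\boldsymbol{W}/\delta\boldsymbol{B}$, and then claim this "after substituting \eqref{eq:vector_field_2}" yields $[\![\boldsymbol{\alpha},\boldsymbol{\beta}]\!]$ paired against $\mathrm{ev}^*\boldsymbol{F}_{\boldsymbol{A}}$ and $\mathrm{ev}^*\dd_{\boldsymbol{A}}\boldsymbol{B}$. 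These two descriptions are incompatible. A literal self-bracket of $\boldsymbol{W}_{\Sigma_{d-2}}$ in $\omega_{\Sigma_d}$ multiplies two delta-currents supported on the same submanifold $i(\Sigma_{d-2})$ (an ill-defined $\delta(0)$-type expression), and since $\boldsymbol{W}_{\Sigma_{d-2}}$ is linear in $(\boldsymbol{A},\boldsymbol{B})$ its functional derivatives contain no curvature: $\boldsymbol{F}_{\boldsymbol{A}}$ and $\dd_{\boldsymbol{A}}\boldsymbol{B}$ can only enter through the functional derivatives of the ambient action $\calS_{\Sigma_d}$, i.e.\ through $\left\{\calS_{\Sigma_d},\boldsymbol{W}_{\Sigma_{d-2}}\right\}_{\omega_{\Sigma_d}}=Q_{\Sigma_d}\left(\boldsymbol{W}_{\Sigma_{d-2}}\right)$, which is precisely what the reference to \eqref{eq:vector_field_2} in the paper's remark indicates is the intended reading of that term. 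What you actually compute (and what makes the cancellation work) is $Q_{\Sigma_d}(\boldsymbol{W}_{\Sigma_{d-2}})$; you should say so explicitly rather than present it as a self-bracket. Beyond this, the argument remains a sketch: the decisive Koszul-sign verification is deferred to "a consistent ordering convention," which is exactly where such computations usually go wrong, so the proposal cannot yet be accepted as a complete proof even of the corrected statement.
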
 
\begin{rem}
Proposition \ref{prop:CattRoss1} is a consequense of the fact that $$\dd\int_{\Sigma_{d-2}}=(-1)^d\int_{\Sigma_{d-2}}\dd$$ and \eqref{eq:vector_field_2}.
\end{rem}
Denote by $\Delta_{\Sigma_{d-2}}$ the BV Laplacian for the auxiliary theory. Then we get the following proposition.
\begin{prop}[\cite{CattRoss2005}]
Define the vector field 
\begin{equation}
\boldsymbol{Q}_{\Sigma_{d-2}}=\left\{\boldsymbol{W}_{\Sigma_{d-2}},\enspace\right\}_{\omega_{\Sigma_{d-2}}},
\end{equation}
which acts on generators by 
\begin{equation}
\label{eq:vector_fields}
    \boldsymbol{Q}_{\Sigma_{d-2}}(\boldsymbol{\alpha})=(-1)^d\dd_{\mathrm{ev}^*\boldsymbol{A}}\boldsymbol{\alpha},\qquad \boldsymbol{Q}_{\Sigma_{d-2}}(\boldsymbol{\beta})=(-1)^d(\dd_{\mathrm{ev}^*\boldsymbol{A}}\boldsymbol{\beta}+\mathrm{ev}^*\boldsymbol{B}).
\end{equation}
Assume that the formal measure $\mathscr{D}[\boldsymbol{\alpha}]\mathscr{D}[\boldsymbol{\beta}]$ is invariant with respect to the vector fields \eqref{eq:vector_fields}.
Then we have
\begin{multline}
\dd\boldsymbol{W}_{\Sigma_{d-2}}-(-1)^d\left(\delta_{\mathrm{BV}}\boldsymbol{W}_{\Sigma_{d-2}}+\frac{1}{2}\left\{\boldsymbol{W}_{\Sigma_{d-2}},\boldsymbol{W}_{\Sigma_{d-2}}\right\}_{\omega_{\Sigma_d}}\right)\\
+\frac{1}{2}\left(\left\{\boldsymbol{W}_{\Sigma_{d-2}},\boldsymbol{W}_{\Sigma_{d-2}}\right\}_{\omega_{\Sigma_{d-2}}}-2\I\hbar\Delta_{\Sigma_{d-2}}\boldsymbol{W}_{\Sigma_{d-2}}\right)=0
\end{multline}
\end{prop}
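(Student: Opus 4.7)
My plan is to upgrade the classical identity of Proposition~\ref{prop:CattRoss1} by carefully tracking the two BV Laplacian contributions that must be added at the quantum level: the ambient $\Delta_{\Sigma_d}$ hidden inside $\delta_{\mathrm{BV}}=Q_{\Sigma_d}-\I\hbar\Delta_{\Sigma_d}$, and the auxiliary $\Delta_{\Sigma_{d-2}}$ that appears explicitly in the statement. Using $Q_{\Sigma_d}\boldsymbol{W}_{\Sigma_{d-2}}=\{\calS_{\Sigma_d},\boldsymbol{W}_{\Sigma_{d-2}}\}_{\omega_{\Sigma_d}}$ to extract the $Q$-piece of $\delta_{\mathrm{BV}}$, the target equation decouples into the classical identity of Proposition~\ref{prop:CattRoss1} together with the combined $\I\hbar$-correction
\[
(-1)^d\,\I\hbar\,\Delta_{\Sigma_d}\boldsymbol{W}_{\Sigma_{d-2}} \;-\; \I\hbar\,\Delta_{\Sigma_{d-2}}\boldsymbol{W}_{\Sigma_{d-2}},
\]
which must be shown to vanish.

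First, I would compute both Laplacians directly from the explicit formula \eqref{eq:BV_extended_Wilson_surface_action}. Since $\boldsymbol{W}_{\Sigma_{d-2}}$ is at most bilinear in the auxiliary superfields $(\boldsymbol{\alpha},\boldsymbol{\beta})$, with the pairing coming from $\langle\!\langle\boldsymbol{\alpha},\dd_{\mathrm{ev}^*\boldsymbol{A}}\boldsymbol{\beta}\rangle\!\rangle$, and linear in the ambient superfields $i^*\boldsymbol{A}$ and $i^*\boldsymbol{B}$, both BV Laplacians produce local, distributional expressions supported on the embedded submanifold $\Sigma_{d-2}$. The Leibniz rule for $\Delta$ produces a cross-antibracket term which must be shown to reproduce the antibracket pieces $\{\boldsymbol{W}_{\Sigma_{d-2}},\boldsymbol{W}_{\Sigma_{d-2}}\}$ appearing with the stated coefficients.

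Second, I would invoke the measure-invariance hypothesis on $\mathscr{D}[\boldsymbol{\alpha}]\mathscr{D}[\boldsymbol{\beta}]$ under the Hamiltonian vector fields \eqref{eq:vector_fields}. In the BV language this invariance is equivalent to $\Div_{\mathscr{D}[\boldsymbol{\alpha}]\mathscr{D}[\boldsymbol{\beta}]}\boldsymbol{Q}_{\Sigma_{d-2}}=0$, which controls $\Delta_{\Sigma_{d-2}}\boldsymbol{W}_{\Sigma_{d-2}}$ up to contributions coming from the transgression/fiber-integration $\pi_*$. Combining this with the Stokes-type identity $\dd\pi_*=(-1)^d\pi_*\dd$ shows that the ambient contribution $(-1)^d\I\hbar\Delta_{\Sigma_d}\boldsymbol{W}_{\Sigma_{d-2}}$ cancels the auxiliary one, thereby concluding the proof once it is combined with Proposition~\ref{prop:CattRoss1}.

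The main obstacle will be the sign bookkeeping: the factors $(-1)^d$ enter both through Stokes along the fiber $\Sigma_{d-2}$ and through the ghost-number degrees of the superfields, and they must be tracked consistently when switching between the two symplectic structures $\omega_{\Sigma_d}$ and $\omega_{\Sigma_{d-2}}$. A secondary delicacy concerns the cross term $\{f,g\}$ produced by the Leibniz rule for $\Delta$: verifying that this contribution reorganizes precisely into the combination $\tfrac{1}{2}\{\boldsymbol{W}_{\Sigma_{d-2}},\boldsymbol{W}_{\Sigma_{d-2}}\}_{\omega_{\Sigma_{d-2}}}$ of the stated equation, rather than producing an additional anomalous piece, is exactly where the measure-invariance hypothesis is indispensable.
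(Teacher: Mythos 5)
Your overall decomposition --- peel off the classical identity of Proposition \ref{prop:CattRoss1} and reduce the claim to the vanishing of an order-$\hbar$ remainder built from the two Laplacians --- is the right skeleton, and it is essentially all the paper itself offers: the proposition is quoted from \cite{CattRoss2005}, and the only proof-relevant content in the text is the remark that the measure-invariance hypothesis implies $\Delta_{\Sigma_{d-2}}\boldsymbol{W}_{\Sigma_{d-2}}=0$. The genuine gap is in your second step: the two Laplacian contributions do \emph{not} cancel against each other, and no Stokes-type identity $\dd\,\pi_*=(-1)^d\pi_*\dd$ enters at this stage (that identity is what drives the \emph{classical} computation behind Proposition \ref{prop:CattRoss1}, as the paper's remark there indicates). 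Each term vanishes separately, for different reasons. The auxiliary term vanishes because $\boldsymbol{Q}_{\Sigma_{d-2}}$ is by definition the Hamiltonian vector field of $\boldsymbol{W}_{\Sigma_{d-2}}$ with respect to $\omega_{\Sigma_{d-2}}$, so $2\Delta_{\Sigma_{d-2}}\boldsymbol{W}_{\Sigma_{d-2}}=\Div\boldsymbol{Q}_{\Sigma_{d-2}}$, and the hypothesis that $\mathscr{D}[\boldsymbol{\alpha}]\mathscr{D}[\boldsymbol{\beta}]$ is invariant under \eqref{eq:vector_fields} is precisely the statement $\Div\boldsymbol{Q}_{\Sigma_{d-2}}=0$; this is where the formally divergent coincident-point trace of $\dd_{\mathrm{ev}^*\boldsymbol{A}}$ produced by $\frac{\delta^2}{\delta\boldsymbol{\alpha}\,\delta\boldsymbol{\beta}}$ is declared to vanish. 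The ambient term $\Delta_{\Sigma_d}\boldsymbol{W}_{\Sigma_{d-2}}$ vanishes identically because \eqref{eq:BV_extended_Wilson_surface_action} is a sum of a term linear in $\boldsymbol{A}$ containing no $\boldsymbol{B}$ and a term linear in $\boldsymbol{B}$ containing no $\boldsymbol{A}$, so the mixed second functional derivative that $\Delta_{\Sigma_d}$ computes is zero; no measure hypothesis and no cancellation against the auxiliary sector is needed there.

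A second, smaller confusion: the antibracket terms $\frac{1}{2}\left\{\boldsymbol{W}_{\Sigma_{d-2}},\boldsymbol{W}_{\Sigma_{d-2}}\right\}$ in the statement are not produced by a ``Leibniz rule for $\Delta$'' --- you are applying the Laplacians to $\boldsymbol{W}_{\Sigma_{d-2}}$ itself, not to a product or to $\ee^{\frac{\I}{\hbar}\boldsymbol{W}_{\Sigma_{d-2}}}$, so no cross-antibracket term arises. Those brackets are already present in the classical identity of Proposition \ref{prop:CattRoss1}, and the measure-invariance hypothesis plays no role in organizing them; its only job is to kill $\Delta_{\Sigma_{d-2}}\boldsymbol{W}_{\Sigma_{d-2}}$. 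Once you replace the claimed cancellation by the two separate vanishing statements above, the argument closes.
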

\begin{rem}
Note that the assumption of invariance of the formal measure implies that $\Delta_{\Sigma_{d-2}}\boldsymbol{W}_{\Sigma_{d-2}}=0$. 
\end{rem}
\begin{defn}[BV extended Wilson surface observable]
We define the \emph{BV extended Wilson surface observable} as 
\begin{equation}
    \boldsymbol{\calW}_{\Sigma_{d-2}}(\boldsymbol{A},\boldsymbol{B};i)=\int \mathscr{D}[\boldsymbol{\alpha}]\mathscr{D}[\boldsymbol{\beta}]\ee^{\frac{\I}{\hbar}\boldsymbol{W}_{\Sigma_{d-2}}(\boldsymbol{\alpha},\boldsymbol{\beta},\boldsymbol{A},\boldsymbol{B};i)}
\end{equation}
\end{defn}

\subsection{Formulation by Hamiltonian $Q$-bundles}
Let $\calM=\mathfrak{g}[1]\oplus \mathfrak{g}^*[d-2]$. Denote by $x\colon \mathfrak{g}[1]\to \mathfrak{g}$ the degree 1 $\mathfrak{g}$-valued coordinate on $\mathfrak{g}[1]$ and let $x^*\colon \mathfrak{g}^*[d-2]\to \mathfrak{g}^*$ be the $\mathfrak{g}^*$-valued coordinate on $\mathfrak{g}^*[d-2]$ of degree $d-2$.

Then we can consider a trivial Hamiltonian $Q$-bundle over $\calM$ given by the fiber data
\begin{align}
    \calN&=\mathfrak{g}\oplus\mathfrak{g}^*[d-3],\\
    \calV_\calE&=\left\langle [x,y],\frac{\de}{\de y}\right\rangle+\left\langle \ad_x^*y^*,\frac{\de}{\de y^*}\right\rangle+(-1)^d\left\langle x^*,\frac{\de}{\de y^*}\right\rangle,\\
    \omega_\calN&=\left\langle \delta y^*,\delta y\right\rangle,\\
    \alpha_\calN&=\langle y^*,\delta y\rangle,\\
    \Theta_{\calE}&=\left\langle y^*,[x,y]\right\rangle+\langle x^*,y\rangle,
\end{align}
where $y$ is the $\mathfrak{g}$-valued coordinate of degree 0 on $\mathfrak{g}$ and $y^*$ is the $\mathfrak{g}^*$-valued coordinate of degree $d-3$ on $\mathfrak{g}^*[d-3]$. For an embedding $i\colon \Sigma_{d-2}\hookrightarrow \Sigma_d$ we get the auxiliary theory
\begin{align}
    \calF^\calN_{\Sigma_{d-2}}&=\Omega^\bullet(\Sigma_{d-2})\otimes \mathfrak{g}\oplus \Omega^\bullet(\Sigma_{d-2})\otimes \mathfrak{g}^*[d-3],\\
    \omega^\calN_{\Sigma_{d-2}}&=(-1)^d\int_{\Sigma_{d-2}}\langle \delta\boldsymbol{y}^*,\delta\boldsymbol{y}\rangle,\\
    \calS^\calN_{\Sigma_{d-2}}&=\int_{\Sigma_{d-2}}\langle \boldsymbol{y}^*,\dd_{\Sigma_{d-2}}\boldsymbol{y}\rangle+\int_{\Sigma_{d-2}}\langle \boldsymbol{y}^*,[i^*\boldsymbol{A},\boldsymbol{y}]\rangle+\int_{\Sigma_{d-2}}\langle i^*\boldsymbol{B},\boldsymbol{y}\rangle.
\end{align}
Note that $\calM$ is a differential graded symplectic manifold with the following data:
\begin{align}
    Q_\calM&=\left\langle \frac{1}{2}[x,x],\frac{\de}{\de x}\right\rangle+\left\langle \ad_x^*x^*,\frac{\de}{\de x^*}\right\rangle,\\
    \omega_\calM&=\langle \delta x^*,\delta x\rangle,\\
    \alpha_\calM&=\langle x^*,\delta x\rangle,\\
    \Theta_\calM&=\frac{1}{2}\left\langle x^*,[x,x]\right\rangle.
\end{align}
Hence the ambient theory is given by 
\begin{align}
    \calF^\calM_{\Sigma_d}&=\Omega^\bullet(\Sigma_d)\otimes \mathfrak{g}[1]\oplus \Omega^\bullet(\Sigma_d)\otimes \mathfrak{g}^*[d-2]\ni (\boldsymbol{A},\boldsymbol{B}),\\
    \omega_{\Sigma_d}&=\int_{\Sigma_d}\langle\delta \boldsymbol{B},\delta\boldsymbol{A}\rangle,\\
    \label{eq:BF_action}
    \calS_{\Sigma_d}&=\int_{\Sigma_d}\left\langle \boldsymbol{B},\boldsymbol{F}_{\boldsymbol{A}}\right\rangle=\int_{\Sigma_d}\left\langle \boldsymbol{B},\dd_{\Sigma_d}\boldsymbol{A}+\frac{1}{2}[\boldsymbol{A},\boldsymbol{A}]\right\rangle.
\end{align}
Note that \eqref{eq:BF_action} is exactly the $BF$ action as in Definition \ref{defn:BV_action_BF_theory}.
In the case of abelian $BF$ theory, i.e. when $\mathfrak{g}=\R$, we get that $Q_\calM=0$, $\Theta_\calM=0$ and the ambient theory
\begin{align}
    \calF^\calM_{\Sigma_d}&=\Omega^\bullet(\Sigma_d)[1]\oplus \Omega^\bullet(\Sigma_d)[d-2],\\
    \omega_{\Sigma_d}&=\int_{\Sigma_d}\delta\boldsymbol{B}\land \delta\boldsymbol{A},\\
    \calS_{\Sigma_d}&=\int_{\Sigma_d}\boldsymbol{B}\land \dd_{\Sigma_d}\boldsymbol{A}.
\end{align}

\begin{rem}
The constructions presented in this paper are expected to extend to manifolds with boundary. Using the constructions as in \cite{CMW3} together with the quantum BV-BFV formalism \cite{CMR2,CattMosh1}, one can show how the formal global observables for split AKSZ sigma models on the boundary induce a more general gauge condition as the dQME which is called \emph{modified differential Quantum Master Equation (mdQME)}. This condition also handles the boundary part which arises as the ordered standard quantization $\Omega$ of the boundary action $\calS^\de$ of dgeree $+1$, induced by the underlying BFV manifold, plus some higher degree terms. The mdQME is then given as some annihilation condition for the formal global boundary observable $\calO^\de$. In fact, it is annihilated by the quantum Grothendieck BFV operator $\nabla_\mathsf{G}:=\dd-\I\hbar\Delta+\frac{\I}{\hbar}\Omega$ (see \cite{CMW3,CMW4}), which means that $\nabla_\mathsf{G}\calO^\de=0$.
\end{rem}

\printbibliography

\end{document}